\definecolor{Red}{cmyk}{0,1,1,0}
\newcommand{\Var}[0]{\text{Var}}
\newcommand\redsout{\bgroup\markoverwith{\textcolor{red}{\rule[0.5ex]{2pt}{0.4pt}}}\ULon}
\renewcommand{\restriction}{\mathord{\upharpoonright}}
\newtheorem{theorem}           {Theorem}
\theoremstyle{definition}
\newtheorem{remark}[theorem]{Remark}
\newtheorem*{theorem*}{Theorem}
\newtheorem*{conjecture*}   {Conjecture}
\newtheorem*{corollary*}   {Corollary}
  \newtheorem{lemma}[theorem]              {Lemma}
  \newtheorem*{lemma*}          {Lemma}
    \newtheorem*{claim*}          {Claim}
  \newtheorem{definition}[theorem]         {Definition}
  \newtheorem{corollary}[theorem]          {Corollary}
  \newtheorem{proposition}[theorem]      {Proposition}
  \theoremstyle{definition}
  \newtheorem{example}[theorem]          {Example}
\newcommand{\g}{\mathbb{G}}
\newcommand{\zd}{\mathbb{Z}^d}
\newcommand{\gr}{\mathfrak{T}}
\newcommand*{\QEDB}{\hfill\ensuremath{\square}}
\DeclareMathOperator{\diag}{diag}
\begin{document}


\voffset=-1.5truecm\hsize=16.5truecm    \vsize=24.truecm
\baselineskip=14pt plus0.1pt minus0.1pt \parindent=12pt
\lineskip=4pt\lineskiplimit=0.1pt      \parskip=0.1pt plus1pt

\def\ds{\displaystyle}\def\st{\scriptstyle}\def\sst{\scriptscriptstyle}



\global\newcount\numsec\global\newcount\numfor
\gdef\profonditastruttura{\dp\strutbox}
\def\senondefinito#1{\expandafter\ifx\csname#1\endcsname\relax}
\def\SIA #1,#2,#3 {\senondefinito{#1#2}
\expandafter\xdef\csname #1#2\endcsname{#3} \else
\write16{???? il simbolo #2 e' gia' stato definito !!!!} \fi}
\def\etichetta(#1){(\veroparagrafo.\veraformula)
\SIA e,#1,(\veroparagrafo.\veraformula)
 \global\advance\numfor by 1
 \write16{ EQ \equ(#1) ha simbolo #1 }}
\def\etichettaa(#1){(A\veroparagrafo.\veraformula)
 \SIA e,#1,(A\veroparagrafo.\veraformula)
 \global\advance\numfor by 1\write16{ EQ \equ(#1) ha simbolo #1 }}
\def\BOZZA{\def\alato(##1){
 {\vtop to \profonditastruttura{\baselineskip
 \profonditastruttura\vss
 \rlap{\kern-\hsize\kern-1.2truecm{$\scriptstyle##1$}}}}}}
\def\alato(#1){}
\def\veroparagrafo{\number\numsec}\def\veraformula{\number\numfor}
\def\Eq(#1){\eqno{\etichetta(#1)\alato(#1)}}
\def\eq(#1){\etichetta(#1)\alato(#1)}
\def\Eqa(#1){\eqno{\etichettaa(#1)\alato(#1)}}
\def\eqa(#1){\etichettaa(#1)\alato(#1)}
\def\equ(#1){\senondefinito{e#1}$\clubsuit$#1\else\csname e#1\endcsname\fi}
\let\EQ=\Eq


\def\\{\noindent}
\let\io=\infty

\def\VU{{\mathbb{V}}}
\def\EE{{\mathbb{E}}}
\def\GI{{\mathbb{G}}}
\def\TT{{\mathbb{T}}}
\def\C{\mathbb{C}}
\def\LL{{\cal L}}
\def\RR{{\cal R}}
\def\SS{{\cal S}}
\def\NN{{\cal N}}
\def\HH{{\cal H}}
\def\GG{{\cal G}}
\def\PP{{\cal P}}
\def\AA{{\cal A}}
\def\BB{{\cal B}}
\def\FF{{\cal F}}
\def\vv{\vskip.2cm}
\def\gt{{\tilde\g}}
\def\E{{\mathcal E} }
\def\I{{\rm I}}

\def\cal{\mathcal}

\def\tende#1{\vtop{\ialign{##\crcr\rightarrowfill\crcr
              \noalign{\kern-1pt\nointerlineskip}
              \hskip3.pt${\scriptstyle #1}$\hskip3.pt\crcr}}}
\def\otto{{\kern-1.truept\leftarrow\kern-5.truept\to\kern-1.truept}}
\def\arm{{}}
\font\bigfnt=cmbx10 scaled\magstep1

\newcommand{\card}[1]{\left|#1\right|}
\newcommand{\und}[1]{\underline{#1}}
\newcommand{\Dom}[0]{\text{Dom}}
\def\1{\rlap{\mbox{\small\rm 1}}\kern.15em 1}
\def\ind#1{\1_{\{#1\}}}
\def\bydef{:=}
\def\defby{=:}
\def\buildd#1#2{\mathrel{\mathop{\kern 0pt#1}\limits_{#2}}}
\def\card#1{\left|#1\right|}
\def\proofof#1{\noindent{\bf Proof of #1. }}
\def\trp{\mathbb{T}}
\def\trt{\mathcal{T}}

\def\bfz{\boldsymbol z}
\def\bfa{\boldsymbol a}
\def\bfalpha{\boldsymbol\alpha}
\def\bfmu{\boldsymbol \mu}
\def\bfmust{\bfT^\infty(\bfmu)}
\def\bfmupr{\boldsymbol {\widetilde\mu}}
\def\bfrho{\boldsymbol \rho}
\def\bfrhost{\boldsymbol \rho^*}
\def\bfrhopr{\boldsymbol {\widetilde\rho}}
\def\bfT{{\boldsymbol T}_{\!\!\bfrho}}
\def\bfR{\boldsymbol R}
\def\bfvarphi{\boldsymbol \varphi}
\def\bfvarphist{\boldsymbol \varphi^*}
\def\bfPi{\boldsymbol \Pi}
\def\bfzero{\boldsymbol 0}
\def\bfW{\boldsymbol W}
\def\formal{\stackrel{\rm F}{=}{}}
\def\eee{{\rm e}}
\def\nnn{\mathcal N}
\def\nst{\nnn^*}
\def\Var{\text{Var}}

\thispagestyle{empty}

\begin{center}
{\LARGE Gibbs Measures on Multidimensional Spaces.\\ Equivalences and a Groupoid Approach.}
\vskip.5cm
Rodrigo Bissacot$^{1,2}$, Bruno Hideki Fukushima-Kimura$^{1,3}$,  Rafael Pereira Lima$^{4}$ and Thiago Raszeja$^{5}$
\vskip.3cm
\begin{footnotesize}
$^{1}$Institute of Mathematics and Statistics (IME-USP), University of S\~{a}o Paulo, Brazil\\
$^{2}$Faculty of Mathematics and Computer Science, Nicolaus Copernicus University, Poland\\
$^{3}$ Faculty of Science, Hokkaido University, Japan \\   
$^{4}$ School of Mathematics and Statistics, Victoria University of Wellington, New Zealand \\
$^{5}$ Faculty of Applied Mathematics, AGH University of Science and Technology, Poland
\end{footnotesize}
\vskip.1cm
\begin{scriptsize}
emails: rodrigo.bissacot@gmail.com; bruno.hfkimura@gmail.com; rplima\textunderscore 31@yahoo.com.br; tcraszeja@gmail.com
\end{scriptsize}

\end{center}

\def\be{\begin{equation}}
\def\ee{\end{equation}}

\vskip1.0cm
\begin{quote}
{\small

\textbf{Abstract.} \begin{footnotesize} 
We consider some of the main notions of Gibbs measures on subshifts introduced by different communities, such as dynamical systems, probability, operator algebras, and mathematical physics. For potentials with $d$-summable variation, we prove that several of the definitions considered by these communities are equivalent. In particular, when the subshift is of finite type (SFT), we show that all definitions coincide. In addition, we introduced a groupoid approach to describe some Gibbs measures, allowing us to show the equivalence between Gibbs measures and KMS states (the quantum analogous to the Gibbs measures).   
\end{footnotesize}

}
\end{quote}
\tableofcontents
\numsec=2\numfor=1
\section*{Introduction}
\vv
\noindent

The theory of Gibbs measures is one among several successfully developed branches of mathematics motivated by ideas from physics. An expressive number of specialists in rigorous statistical mechanics, ergodic theory, symbolic dynamics, and operator algebras are or were engaged in topics related to some notion of Gibbsianess.

Historically, the first paper with a rigorous treatment on this subject dates back to \cite{Bogo:49} (see also \cite{Bogo:69} for a more up-to-date exposition) by N. N. Bogolyubov and B. I. Khatset. Following the same ideas presented in this paper, R. L. Dobrushin \cite{dob:68} and,  independently, O. E. Lanford III with D. Ruelle \cite{lanfordruelle:69} introduced the notion of Gibbsianess in the context of statistical mechanics by means of a system of prescribed conditional probabilities. Due to its physical and probabilistic interpretations, this approach is still widely adopted in mathematical physics and probability theory; moreover, such measures are often referred to as DLR measures in honor of them. 

On the other hand, these papers, together with one by R. A. Minlos \cite{minlos:67}, motivated the study of Gibbs measures in the context of (differentiable) dynamical systems started by Ya. G.  Sinai \cite{sinai:72}. Sinai introduced Markov partitions and symbolic dynamics for Anosov diffeomorphisms, subjects for which R. Bowen made several contributions, such as his book \cite{bowen:08}, one of the well-known references for Gibbs measures in symbolic dynamics. Due to the influence of Ruelle, a notion of Gibbs measure was also introduced for $\mathbb{Z}^d$-actions by an expansive group of homeomorphisms on compact metrizable spaces by D. Capoccaccia in \cite{cap:76}. Ruelle also wrote one of the classical books \cite{ruelle:04} towards Gibbs measures focusing on dynamical systems; the ergodic theory community regards this book (together with Bowen's) as the main reference of a subfield that is known as {\it thermodynamic formalism}.


So, the notions of Gibbsianess developed by researchers concerned with the boundary between mathematical physics and dynamical systems, together with several papers and books published in the 1970s, provided a link between these areas. However, the approach adopted by these communities to deal with Gibbs measures split in two different ways: while the majority of probabilists and mathematical physicists follow Dobrushin's ideas, relying on conditional expectations and thermodynamic limits (see one of the classical modern references by Georgii \cite{georgii:11}), the dynamicists follow the approach introduced by Bowen, Ruelle, and Capocaccia.

Furthermore, certain works such as Aaronson and Nakada \cite{aaronson:07}, K. Schmidt and K. Petersen \cite{PetSch:97}, and Schmidt \cite{schmidt:97} studied an abstract notion of Gibbs measures for subshifts of finite type (SFTs) over finite alphabets whose connection with the previous definitions is not obvious, although they adopted the same nomenclature. In 2013, T. Meyerovitch \cite{meyerovitch:13} generalized this definition for multidimensional subshifts (over finite alphabets) and proved that for SFTs it is recovered the previous definition. He proved that equilibrium measures for a potential belonging to a suitable class of functions (functions with $d$-summable variation) are Gibbs measures, including shifts that are not of STF. 

The communities mentioned above know that all these definitions do not always coincide; see \cite{Fer} and \cite{sarig:15} for some examples from the probabilistic and dynamic points of view, respectively. Nevertheless, there exist classes of shifts and potentials for which the equivalence of some of these notions of Gibbsianess holds. 
One classical reference for positive results, showing the equivalence of some definitions, is provided by Keller's book \cite{keller:98}. Inspired by Capocaccia's definition of 
Gibbs measures he showed that a definition used by the dynamical systems community for the full shift over a finite alphabet in $\mathbb{Z}^d$ coincides with the notion of 
DLR measures for the class of potentials with $d$-summable variation. In particular, for the one-dimensional case, he proved that these measures are 
Gibbs in Bowen's sense. S. Muir \cite{muir, Muir_paper} extended the results obtained by Keller, showing that a natural extension of Capocaccia's and the DLR 
definitions coincide when the configuration space is $\mathbb{N}^{\mathbb{Z}^d}$. More recently, by using the fact that, in the one-dimensional case, the existence of the Ruelle operator (a standard tool in one-dimensional thermodynamic formalism) is ensured, L. Cioletti and A. O. Lopes \cite{Cio:14} 
showed the equivalence of some notions of Gibbs measures for Walters potentials defined on the full shift over a finite alphabet, such as DLR measures, 
measures constructed with the Ruelle operator, and thermodynamic limit measures.

Now, under the C$^*$-algebras perspective (or quantum setting),  the analogous thermodynamical object to a Gibbs measure is a Kubo-Martin-Schwinger (KMS) state \cite{Kubo1957,MartinSchwinger1959}, which is a linear functional of norm one that satisfies the so-called KMS condition. This condition was introduced by R. Kubo in 1957 \cite{Kubo1957}, and later in 1959, by P. C. Martin and J. Schwinger \cite{MartinSchwinger1959}, for the context of finite-volume Gibbs states. In 1967, R. Haag, N. M. Hugenholtz, and M. Winnink \cite{HaagHugWin1967} used this condition as a criterion to describe equilibrium systems that cover the case of infinite degrees of freedom, i.e., infinite-dimensional Hilbert spaces. In other words, the KMS states give the thermodynamical probabilistic description for equilibrium systems in the context of quantum mechanics and quantum statistical physics \cite{Bratteli1987vol1,Bratteli1981vol2}. Just like Gibbs measures, each KMS state depends on the inverse temperature $\beta$.


Despite their physical origin, KMS states also have been extensively studied in mathematics due to their interesting mathematical properties. In the theory of groupoid C$^*$-algebras (see \cite{Renault:80} and \cite[Part II]{SSW2020}), J. Renault \cite[Proposition II.5.4]{Renault:80} shows that, for a C$^*$-algebra of a principal groupoid equipped with a particular dynamical system, there is a correspondence between the KMS states and the probability measures on the unit space of the groupoid $\mathcal{G}$ satisfying certain properties. Such measures are referred to as quasi-invariant in his book. Later, Neshveyev \cite[Theorem 1.3]{Neshveyev:2013} generalized Renault's characterization of KMS states to non-principal groupoids, where each KMS state on $C^*(\mathcal{G})$ corresponds to a pair consisting of a quasi-invariant probability measure and a family of states on $C^*(\mathcal{G}_x^x)$ satisfying some conditions. J. Christensen \cite[Theorem 6.3]{christensen2023structure} generalized Neshveyev's theorem to KMS weights, where the corresponding quasi-invariant measures are not required to be probabilities. The class of C*-algebras of Renault-Deaconu groupoids \cite{ADelaroche:97,Deaconu:95,Renault:80,Renault1999} covers several interesting examples, including most graph algebras \cite[Proposition 4.1]{KPRR1997}, higher-rank graph algebras \cite{KumjianPask:00}, crossed products of commutative C*-algebras by the integers \cite{Deaconu:95} and Exel-Laca algebras \cite{Cuntz1977,CK1980,EL1999,Renault1999}, and there is a vast literature concerning the study of KMS states for these examples \cite{aHLRS:13,aHLRS:14,aHLRS:15, aHKR:15,aHKR:17,BEFR2022,ChristensenThomsen:21,ChristensenThomsen:22,ChristensenVaes:22,EFW:1984,Evans:80,Exel:04,EL2003,FGLP:21,FHKP:22,LiYang:19,OlesenPedersen:1978,Thomsen:12,Thomsen2016}.

For countable Markov shift spaces, both in standard \cite{sarig:09} and generalized \cite{EL1999} contexts, for a fixed potential, each conformal probability measure \cite{DU91,sarig:09} is a fixed point for the Ruelle's transformation \cite{BEFR2022}, and it generates a KMS state on their respective Renault-Deaconu groupoid via 1-cocycle dynamics. Under some conditions on the potential \cite{BEFR2022,KessStadStrat2007,renault2009} (for instance, when the potential is either strictly positive or strictly negative), every KMS state indeed comes from a conformal measure, because the kernel of the cocycle is a principal groupoid. For Renault-Deaconu groupoids (non-generalized), K. Thomsen \cite[Theorem 2.2]{Thomsen2016} gives a general description of all extremal KMS states on the C*-algebras of Renault-Deaconu groupoids, including KMS states that do not come from conformal probabilities.

The C$^*$-community also have studied the DLR measures from the point of view of approximately proper (AP) equivalence relations and using the groupoid theory \cite{BEFR2018,renault:05}, which is used as a natural algebraic structure to approach dynamical systems for operator algebras \cite{BEFR2022,Renault:80}. In particular, for Markov shift spaces, the equivalence relation groupoid associated to the DLR measures is a subgroupoid of the Renault-Deaconu groupoid.




This paper connects the different notions of Gibbs measures on subshifts of $\mathcal{A}^{\zd}$, where $\mathcal{A}$ is a finite alphabet and the potentials have $d$-summable variation. 
We prove the equivalences between them and connect these results to other equivalences proven in the literature, which can be summarized in the theorem below.

\begin{theorem} \label{main_theorem}
Let $X \subseteq \mathcal{A}^{\zd}$ be a subshift, and let $f:X\to \mathbb{R}$ be a potential with $d$-summable variation. In addition, let $\gr \subseteq X \times X$ and $\gr^{0} \subseteq \gr$ be respectively the Gibbs and the topological Gibbs equivalence relations (see Definitions \ref{gibbsrel_shift} and \ref{topgibbsrel_shift}), both endowed with the groupoid structure, and let $c_f: \gr \to \mathbb{R}$ be the 1-cocycle given by
\begin{equation*}
    c_f(x,y) := \sum\limits_{k \in \zd}f(\sigma^k y) - f(\sigma^k x).
\end{equation*}

For a Borel probability measure $\mu$ on $X$, the following conditions are equivalent:
\begin{itemize}
    \item[$(a)$] $\mu$ is a topological Gibbs measure, in the sense that $\mu$ is $(c_f,\gr^{0})$-conformal. More precisely, the Radon-Nikodym derivative $D_{\mu,\gr^{0}} = \frac{d\nu_r}{d\nu_s}$ is well-defined (see Theorem \ref{thm:nur_nus} and Definition \ref{def:radon_nikodym_nu_r_nu_s}) and
    \begin{equation*}
        D_{\mu,\gr^{0}} = e^{-c_f} \quad \mu\text{-a.e.};
    \end{equation*}

    \item[$(b)$] The state $\varphi_\mu:C^*(\gr^{0}) \to \mathbb{C}$ given by
        \begin{equation*}
            \varphi_\mu(h) = \int_{X} h d\mu, \quad h \in C_c(\gr^{0}),
        \end{equation*}
    is a KMS state for the C$^*$-dynamical system $(C^*(\gr^{0}),\tau)$, where $\tau = (\tau_t)_{t\in \mathbb{R}}$ is the one-parameter group of automorphisms given by $\tau_t(h) = e^{ic_ft}h$, $h \in C_c(\gr^0)$;
\end{itemize}

    In addition, the following are equivalent:
\begin{itemize}
  \item[$(c)$] $\mu$ is a Gibbs measure, in the sense that $\mu$ is $(c_f,\gr)$-conformal, i.e., the Radon-Nikodym derivative $D_{\mu,\gr} = \frac{d\nu_r}{d\nu_s}$ is well-defined and
    \begin{equation*}
        D_{\mu,\gr} = e^{-c_f} \quad \mu\text{-a.e.};
    \end{equation*}
    
    \item[$(d)$] The state $\varphi_\mu:C^*(\gr) \to \mathbb{C}$ given by
        \begin{equation*}
            \varphi_\mu(h) = \int_{X} h d\mu, \quad h \in C_c(\gr),
        \end{equation*}
    is a KMS state for the C$^*$-dynamical system $(C^*(\gr),\tau)$, where $\tau$ is the one-parameter group of automorphisms given in $(c)$, but now defined on $C_c(\gr)$;
    \newline

     \item[$(e)$] $\mu$ is a DLR measure, that is, let  $\mathcal{B}$ and $\mathcal{B}_{\Lambda^c}$ be the Borel $\sigma$-algebra of $X$ and the $\sigma$-algebra generated by the cylinders supported in $\Lambda^c$, respectively, then $\mu$ satisfies the DLR equations 
        \begin{equation*}
            \mu(A|\mathcal{B}_{\Lambda^c}) = \gamma_{\Lambda}(A|\,\cdot\,) \qquad \mu\text{-a.e.},
        \end{equation*}
    where $\Lambda \subseteq \zd$ is finite, and $\gamma_{\Lambda} : \mathcal{B}\times X\rightarrow {[}0,1]$ are defined by
    \begin{equation*}
        \gamma_{\Lambda}(A|\,x) = \lim\limits_{n \to \infty}\frac{\sum\limits_{\omega \in \mathcal{A}^{\Lambda}}e^{f_n(\omega x_{\Lambda^c})}\mathbbm{1}_{\{\omega x_{\Lambda^c}\in A\}}}{\sum\limits_{\eta\, \in \mathcal{A}^{\Lambda}}e^{f_n(\eta x_{\Lambda^c})}\mathbbm{1}_{\{\eta x_{\Lambda^c}\in X\}}}  
    \end{equation*}
    for each $A \in \mathcal{B}$ and each point $x \in X$; 
  
    \item[$(f)$] $\mu$ is a fixed point for the kernel maps $Q_{\Lambda_{n}}^{*}$, for every $n \in \mathbb{N}$, where $\Lambda_{n}:=(-n,n)^d \cap {\zd}$, $\mathcal{M}^+(X)$ is the set of all non-negative Borel measurable functions on $X$, $Q_{\Lambda_{n}}:\mathcal{M}^+(X) \to \mathcal{M}^+(X)$ is given by
    \begin{equation*}
        Q_{\Lambda_{n}}(g)(x) = \lim\limits_{n \to \infty}\frac{\sum\limits_{\omega \in \mathcal{A}^{\Lambda_{n}}}e^{f_n(\omega x_{\Lambda_{n}^c})}\mathbbm{1}_{\{\omega x_{\Lambda_{n}^c}\in X\}} g(\omega x_{\Lambda_{n}^c})}{\sum\limits_{\eta\, \in \mathcal{A}^{\Lambda_{n}}}e^{f_n(\eta x_{\Lambda_{n}^c})}\mathbbm{1}_{\{\eta x_{\Lambda_{n}^c}\in X\}}} 
    \end{equation*} 
    for each $x \in X$, and $Q_{\Lambda_{n}}^{*}(\mu)$ is given by
    \begin{equation*}
    \int g d Q_{\Lambda_{n}}^*(\mu) = \int Q_{\Lambda_{n}}(g) d \mu, \quad g \in \mathcal{M}^+(X).
    \end{equation*}
    \end{itemize}

Furthermore, if $X$ is a SFT, then all the items above are equivalent. In this case, the maps $Q_{\Lambda_{n}}$ can be defined as bounded operators on $C(X)$ and $Q_{\Lambda_{n}}^*$ is the dual operator of $Q_{\Lambda_{n}}$. In addition, we have one more definition that is equivalent to all the others, the Capocaccia's definition of Gibbs states:
    \begin{itemize}
    \item[$(g)$] $\mu$ is a Gibbs state for the family of multipliers given by $R_{(O,\varphi)}(x) = e^{c_f(x,\varphi(x))}$ at each point $x \in O$, for all pairs $(O,\varphi)$, where $O$ is an open set and $\varphi$ is a conjugating homeomorphism defined on $O$, see Definition \ref{def:Gibbs_state_Capocaccia}:
    \begin{equation*}
        \varphi_{\ast}\big(R_{(O,\varphi)}\,d\mu_{O}\big) = \mu_{\varphi(O)}
    \end{equation*}
    for every pair $(O,\varphi)$.
    \end{itemize}
\end{theorem}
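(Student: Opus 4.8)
The plan is to prove the theorem in modular blocks, handling the two groups of statements —the ``topological'' pair $(a)\Leftrightarrow(b)$ built on $\gr^0$, and the ``full'' cluster $(c)\Leftrightarrow(d)\Leftrightarrow(e)\Leftrightarrow(f)$ built on $\gr$— in parallel, since the arguments for the two Gibbs equivalence relations are formally identical and differ only in which relation one integrates over. Throughout, the role of the $d$-summable variation hypothesis is to guarantee that the cocycle $c_f$ converges on $\gr$ (for a pair $(x,y)\in\gr$ the configurations agree off a finite set, so the sum $\sum_k f(\sigma^k y)-f(\sigma^k x)$ is absolutely convergent) and is continuous, and that the thermodynamic limits defining $\gamma_\Lambda$ and $Q_{\Lambda_n}$ exist. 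I would isolate these facts as preliminary lemmas before attacking any single equivalence.

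For the operator-algebraic equivalences $(a)\Leftrightarrow(b)$ and $(c)\Leftrightarrow(d)$, the key observation is that $\gr$ and $\gr^0$, being equivalence relations, are principal groupoids: the isotropy $\gr_x^x$ at every unit $x$ reduces to the single pair $(x,x)$. I would therefore invoke Renault's correspondence \cite[Proposition II.5.4]{Renault:80} (equivalently the principal case of \cite[Theorem 1.3]{Neshveyev:2013}), which identifies the KMS states for the dynamics generated by a continuous real cocycle with the probability measures on the unit space whose Radon–Nikodym cocycle $d\nu_r/d\nu_s$ equals $e^{-c_f}$. The content to verify is then bookkeeping: that the one-parameter group $\tau_t(h)=e^{ic_f t}h$ is exactly the dynamics Renault associates to $c_f$, and that the measure-theoretic derivative $D_{\mu,\gr}=d\nu_r/d\nu_s$ of Theorem \ref{thm:nur_nus} coincides with the object in Renault's hypothesis; both reduce to the definitions once the groupoid topologies and Haar systems are fixed.

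The probabilistic heart is $(c)\Leftrightarrow(e)\Leftrightarrow(f)$. For $(e)\Leftrightarrow(f)$ I would show that the kernels $Q_{\Lambda_n}$ are precisely the transfer kernels of the specification $\gamma_{\Lambda_n}$, so that the fixed-point identity $\int Q_{\Lambda_n}(g)\,d\mu=\int g\,d\mu$ is the integrated form of the DLR equation $\mu(\,\cdot\mid\mathcal B_{\Lambda_n^c})=\gamma_{\Lambda_n}$; the consistency of the family $(\gamma_\Lambda)$ lets one pass between all finite $\Lambda$ and the cofinal sequence $\Lambda_n=(-n,n)^d\cap\zd$. For $(c)\Leftrightarrow(e)$ I would unfold the conformality condition $D_{\mu,\gr}=e^{-c_f}$ by testing it against finite-range modifications: two configurations agreeing off $\Lambda$ are $\gr$-equivalent, and integrating the conformal relation over the fibre $\mathcal A^\Lambda$ converts the pointwise ratio $e^{-c_f}$ into the normalized Boltzmann weights appearing in $\gamma_\Lambda$, which is exactly the DLR prescription. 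This is the step closest to the arguments of Keller and Muir, now adapted to an arbitrary subshift $X$, where one must carry the admissibility indicator $\mathbbm{1}_{\{\,\cdot\,\in X\}}$ through every manipulation.

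Finally come the SFT collapse and Capocaccia's definition $(g)$. The merger of the two clusters rests on the single structural fact that for an SFT the topological and measure-theoretic Gibbs relations coincide, $\gr^0=\gr$; I would prove this using the finiteness of the forbidden patterns to show that any finite-range modification permitted by $\gr$ can be realized by an admissible, $X$-respecting local exchange, so the two relations generate the same pairs, whence $(a),(b)$ join the cluster automatically. For $(g)$, the plan is to identify Capocaccia's conjugating homeomorphisms $(O,\varphi)$ with local bisections of the \'etale groupoid $\gr$, so that the multiplier $R_{(O,\varphi)}(x)=e^{c_f(x,\varphi(x))}$ is the cocycle evaluated along the bisection and the push-forward equation $\varphi_*\big(R_{(O,\varphi)}\,d\mu_O\big)=\mu_{\varphi(O)}$ is the restriction of conformality to that bisection; a partition-of-unity and gluing argument then upgrades validity over all $(O,\varphi)$ to global $(c_f,\gr)$-conformality. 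I expect this last equivalence to be the \emph{main obstacle}: it requires checking that Capocaccia's homeomorphisms generate exactly $\gr$ and that the SFT hypothesis supplies enough local homeomorphisms to see the whole relation, which is precisely where finite type is indispensable and where the comparison between the continuous dynamical formulation and the measurable groupoid formulation is most delicate.
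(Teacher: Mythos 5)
Your overall architecture coincides with the paper's: the two KMS equivalences via Renault's principal-groupoid correspondence, $(e)\Leftrightarrow(f)$ as the integrated form of the DLR equations (Georgii's Remark (1.24)), $(c)\Rightarrow(e)$ by testing conformality against finite-range exchanges, and the SFT collapse via $\gr=\gr^{0}$. Two points, however, deserve to be flagged.

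First, and this is the genuine gap: your plan treats $(c)\Leftrightarrow(e)$ as a single symmetric ``unfolding'' argument, but the two directions are not symmetric. The relation $\gr$ is generated by the exchange maps $\varphi_{\omega,\eta}$ of equation (\ref{eq:geradores}), which are only Borel automorphisms: they are defined piecewise, and the preimage of a cylinder under $\varphi_{\omega,\eta}$ is in general not a cylinder and not even open. Consequently, starting from the DLR equations one can only read off the Radon--Nikodym identity $d\varphi_{*}\mu/d\mu=e^{c_f(\cdot,\varphi^{-1}(\cdot))}$ for the \emph{homeomorphisms} $\varphi\in\mathcal{F}(X)$ (where $\varphi^{-1}([\omega])$ is again a cylinder, by Lemma \ref{satanasjr}), which yields only that $\mu$ is a \emph{topological} Gibbs measure, i.e.\ $(e)\Rightarrow(a)$, not $(e)\Rightarrow(c)$. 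This is exactly why the paper's own Theorem \ref{DLR1} stops at topological Gibbsianness and the full converse $(e)\Rightarrow(c)$ for arbitrary subshifts is delegated to the strengthened result of Borsato--MacDonald \cite{Merda}. As written, your argument proves the cluster $(d)\Leftrightarrow(c)\Rightarrow(e)\Leftrightarrow(f)$ and $(e)\Rightarrow(a)$, which suffices for the SFT case (where $\gr=\gr^{0}$ closes the loop) but not for the general statement.

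Second, a lesser but real omission: for items $(b)$ and $(d)$ to make sense one must first equip $\gr^{0}$ with a topology making it a locally compact Hausdorff \'etale groupoid, so that $C_c(\gr^{0})$ and $C^*(\gr^{0})$ are defined and Renault's theorem applies. For $\gr$ this comes for free from the AP structure (Proposition \ref{prop:properties_AP_relation}), but for $\gr^{0}$ it is not automatic: one needs the inductive limit topology together with the fact that each $\gr^{0}_n$ is clopen in $\gr_n$ (Lemma \ref{lemma:gr0_n_is_clopen_in_gr_n}) to get \'etalicity (Proposition \ref{prop:I_0_etale}). Your phrase ``once the groupoid topologies and Haar systems are fixed'' hides this step, which in the paper is the most technical part of Section \ref{sec:Preliminaries}. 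By contrast, your treatment of $(g)$ is if anything more elaborate than necessary: no partition-of-unity or gluing is needed, since the converse direction only requires applying Capocaccia's condition to the globally defined conjugating homeomorphisms $\varphi\in\mathcal{F}(X)$ and then invoking $\gr=\gr^{0}$.
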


\begin{remark}
    We remark that the groupoids $\gr$ and $\gr^{0}$ both have $X$ as the unit space. For that reason, in the statements of the items $(b)$ and $(d)$, the integrals are defined over $X$. In addition, when $X$ is a SFT, we have $\gr = \gr^{0}$, which explains why we have all the equivalences under this assumption.
\end{remark}

The paper is organized as follows.


In section \ref{sec:Preliminaries}, we briefly introduce groupoids and their respective C$^*$-algebras. We also introduce the notion of approximately proper equivalence relation and we endow these relations with their natural groupoid structure. We compare different topologies on the equivalence relation groupoid that are used for different notions of Gibbsianess, namely the subspace topology and the inductive limit topology, and we prove that, although the first one is contained in the second one, these topologies generate the same $\sigma$-algebra. This result ensures all the different notions of measures studied in this paper are Borel measures on shift spaces endowed with its usual topology. In the same section, we present the notion of a shift space and both Gibbs and topological Gibbs equivalence relations, which are central objects of the present work. We show that the Gibbs relation is approximately proper. In addition, as an example, we prove that the aforementioned inclusion relation between the topologies above is proper for the full shift. At the end of the section, we also show that both relations are amenable groupoids; therefore, their corresponding full and reduced groupoid C$^*$-algebras coincide. The proof for topological Gibbs relations is more delicate and needs more technical notions since is not evident that this groupoid is \'etale at first sight. Notions such as inductive limits of groupoids, $\mathcal{G}$-spaces, and other results are used to prove its etalicity and amenability.



Section \ref{rn} is dedicated to the notion of conformal measures on countable Borel equivalence relations on Polish spaces. This notion appears in dynamical systems and groupoid C$^*$-algebras sides of the theory. We proved that two different notions for quasi-invariant measures defined in each area are actually equivalent. Since these equivalence relations are generated by a countable subgroup of the group of Borel automorphisms on $X$, we prove the equivalence between the conformal measures and the Borel measures whose Radon-Nikodym derivatives of the pushforward of the elements of the generator group with respect to the measure satisfy the identity \eqref{xuxa}.

The thermodynamic formalism is presented in Section \ref{sec:therm_form}, where we begin the study of Gibbs measures for a specific class of functions, the so-called functions with $d$-summable variation (\cite{meyerovitch:13}, \cite{PetSch:97}, \cite{schmidt:97}) or
regular local energy functions (\cite{keller:98}, \cite{muir}). Adopting Meyerovitch's approach, we provide the definitions of a Gibbs measure and of a topological Gibbs measure for such a function $f$. It was shown in \cite{meyerovitch:13} that although we consider a Gibbs measure in two different ways, the second definition is a relaxed notion that coincides with the first one for subshifts of finite type (SFTs); moreover, every equilibrium measure for a function with $d$-summable variation is a topological Gibbs measure. In particular, if we suppose that we are dealing with a SFT, then every equilibrium measure is also a Gibbs measure. We connect the notion of a Gibbs measure provided by Meyerovitch with more familiar definitions adopted in the literature (e.g. \cite{cap:76}, \cite{velenik2017}, \cite{georgii:11}, \cite{ny:08}, \cite{sarig:09}, \cite{ruelle:04}). We proved that when dealing with subshifts of finite type, Meyerovitch's definition coincides with Capocaccia's notion of Gibbs states \cite{cap:76}. In order to connect our approach with the adopted by Georgii \cite{georgii:11}, for each subshift $X$ and each potential $f$ with $d$-summable variation, we defined a corresponding family $\gamma = (\gamma_{\Lambda})_{\Lambda \in \mathscr{S}}$ of proper probability kernels
satisfying the compatibility relation $\gamma_{\Delta}\gamma_{\Lambda} = \gamma_{\Delta}$ whenever $\Lambda \subseteq \Delta \subseteq \zd$, and proved 
that every Gibbs measure for $f$ satisfies the equation
\begin{equation}\label{dlrdlrdlr}
\mu(A|\mathcal{B}_{\Lambda^c}) = \gamma_{\Lambda}(A|\,\cdot\,){}
\end{equation}
for each Borel set $A$ of $X$ and each $\Lambda$ in $\mathscr{S}$. Conversely, if we suppose that $\mu$ is a probability measure that satisfies (\ref{dlrdlrdlr}),
then $\mu$ is a topological Gibbs measure for $f$. In particular, if $X$ is a SFT, then a probability measure $\mu$ is a Gibbs measure if and only if
$\mu$ satisfies (\ref{dlrdlrdlr}). The set of equations above is often referred to as DLR equations, named for Dobrushin, Lanford and Ruelle. Later in this section, we present the notions of a C$^*$-dynamical system, KMS state, and the 1-cocycle dynamics for groupoid C$^*$-algebras. We discuss, for the 1-cocycle dynamics generated by a $d$-summable potential, the equivalence between its KMS-states on the groupoid Gibbs relation C$^*$-algebra and Gibbs measures on the shift space, which is a consequence of Theorem 3.3.12 of \cite{renault2009}. Further, we also approach the specifications as $\sigma$-additive maps on the set of non-negative measurable functions on the subshift, similarly as done in \cite{BEFR2018}, with the difference that the AP relation now is given by projections on the lattice instead of being explicitly given by the dynamics as in \cite{BEFR2018}. By using the DLR measures theory from \cite{georgii:11}, it is possible to construct a family of specification maps such that a Borel measure is a Gibbs measure if and only if it is a fixed point of the family of the dual specification maps (in the integral sense). We compare this notion of DLR measure between the Markov shift space case and the subshift $\mathcal{A}^{\zd}$ case, and we show that the STF condition is sufficient for these maps to be considered operators on the continuous functions on the subshift, and therefore the dual maps are dual operators in the functional analysis sense, and the DLR measures are eigenmeasures of these dual operators with associated eigenvalue $1$.

A substantial amount of results presented in this paper are contained in the Master's thesis of Bruno Hideki Fukushima-Kimura \cite{Kimura}.

\numsec=2\numfor=1
\section{Preliminaries} \label{sec:Preliminaries}

\subsection{Groupoids and approximately proper equivalence relations} \label{subsec:groupoids_AP_relations}\\

In this subsection, we introduce the basic definitions and results regarding the groupoid theory for C$^\ast$-algebras. We refer to \cite{Renault:80,SSW2020} as the standard references on this subject. In this work, in particular, we are concerned with the groupoid structure of equivalence relations and notions such as approximately proper equivalence relations \cite{renault:05}, which comprise a general setting for exploring the so-called Gibbs relation and topological Gibbs relation.

\begin{definition} A \emph{groupoid} consists of a $4$-tuple $(\mathcal{G},\mathcal{G}^{(2)},\cdot,^{-1})$, where $\mathcal{G}$ is a set, $\mathcal{G}^{(2)} \subseteq \mathcal{G} \times \mathcal{G}$ is the composable parts set, $\cdot:\mathcal{G}^{(2)} \to \mathcal{G}$ is the product (or composition) operation, and $^{-1}:\mathcal{G} \to \mathcal{G}$ is the inverse operation, satisfying the following conditions:
\begin{itemize}
    \item[$(G1)$] $(g^{-1})^{-1} = g$ for every $g \in \mathcal{G}$;
    \item[$(G2)$] if $(g_1,g_2),(g_2,g_3) \in \mathcal{G}^{(2)}$, then $(g_1g_2,g_3),(g_1,g_2g_3) \in \mathcal{G}^{(2)}$, and in this case $g_1g_2g_3 := (g_1g_2)g_3 = g_1(g_2g_3)$;
    \item[$(G3)$] $(g,g^{-1}) \in \mathcal{G}^{(2)}$ for all $g \in \mathcal{G}$, and given $(g_1,g_2) \in \mathcal{G}^{(2)}$ it is true that
    \begin{equation*}
        (g_1^{-1}g_1)g_2 = g_1^{-1}(g_1g_2) = g_2 \quad \text{and} \quad g_1(g_2g_2^{-1}) = (g_1g_2)g_2^{-1} = g_1.
    \end{equation*}
\end{itemize}
\end{definition}

Unless we explicitly state the opposite, we always refer to a groupoid $(\mathcal{G},\mathcal{G}^{(2)},\cdot,^{-1})$ by its set $\mathcal{G}$. We define the unit space of a groupoid $\mathcal{G}$ as the set $\mathcal{G}^{(0)}:= \{g^{-1}g:g \in \mathcal{G}\} \subseteq \mathcal{G}$.
In addition, we introduce the range and source maps as the surjective maps $r:\mathcal{G} \to \mathcal{G}^{(0)}$ and $s:\mathcal{G} \to \mathcal{G}^{(0)}$ respectively defined by
\begin{equation*}
    r(g):= gg^{-1} \quad \text{and} \quad s(g):= g^{-1}g.
\end{equation*}
For every $x \in \mathcal{G}^{(0)}$, we define the \emph{$r$-fiber} and the \emph{$s$-fiber} of $x$ by letting $\mathcal{G}^x:=r^{-1}(\{x\})$ and $\mathcal{G}_x:=s^{-1}(\{x\})$, respectively.
The \emph{isotropy subgroupoid} of $\mathcal{G}$ is defined by Iso$(\mathcal{G}) := \{g \in \mathcal{G}: r(g) = s(g)\}$, and we say that $\mathcal{G}$ is \emph{principal} when Iso$(\mathcal{G}) = \mathcal{G}^{(0)}$.  
A groupoid $\mathcal{G}$ is said to be a \emph{topological groupoid} when it is endowed with a topology so that the composition and inverse operations are continuous, and consequently, the range and source maps are also continuous. Furthermore, a topological groupoid is said to be \emph{\'etale} when $r$ and $s$ are local homeomorphisms, i.e., every element $g \in \mathcal{G}$ has an open neighborhood $U$, called an \emph{open bisection}, such that $r(U), s(U)$ are open and the restrictions $r\vert_U: U \to r(U)$ and $s\vert_U: U \to s(U)$ are homeomorphisms. In this case, the family of open bisections forms a basis for its topology, and the subspace topologies of the $r$-fiber and $s$-fiber of $x$ coincide with the discrete topology, and these fibers are countable if we assume that $\mathcal{G}$ is also second-countable. We introduce the notion of (topologically) amenable groupoids as in \cite[Definition 2.2.8, Proposition 2.2.13]{renaultamenable} (see also \cite[Lemma 10.1.3]{SSW2020} and \cite[Definition 4.1.1]{renault2009}).


\begin{definition}
A locally compact \'etale groupoid $G$ is \emph{topologically amenable} if there exists a sequence $(f_i)$ in $C_c(G)$ such that
\begin{itemize}
\item[$(a)$] the maps $x \mapsto \sum_{g \in G^x} \vert f_i(g) \vert^2$ (indexed by $i$) converge uniformly to $1$ on every compact subset of $G^{(0)}$, and

\item[$(b)$] the maps $h \mapsto \sum_{g: r(g) = r(h)} \vert f_i(h^{-1} g) - f_i(g) \vert$ (indexed by $i$) converge uniformly to $0$ on every compact subset of $G$.
\end{itemize}
\end{definition}

For the purposes of this paper, our main example of a groupoid is constructed from an equivalence relation, the so-called Gibbs relation, and in particular, we are interested in the subshift case \eqref{gibbsrel_shift}. The general notion of a groupoid generated by an equivalence relation is presented next.

\begin{example}\label{exa:equiv_relation_groupoid} Let $X$ be an arbitrary set and $R \subseteq X \times X$ an equivalence relation. We endow $R$ with a groupoid structure by setting $R^{(2)} := \{((x,y),(w,z)) \in R \times R : y = w \},$ with the product and inverse operations respectively given by
\begin{equation*}
    (x,y) \cdot (y,z) := (x,z) \quad \text{and} \quad (x,y)^{-1} := (y,x).
\end{equation*}
In this case, the set of units is $R^{(0)} = \{ (x, x): x \in X \}$, and the range and source maps are given by $r(x,y) = (x,x)$ and $s(x,y) = (y,y)$. Through this paper, we make an abuse of notation and identify $R^{(0)}$ with $X$ by the bijection $(x,x) \mapsto x$.
\end{example}

In the example above, $R^{(2)}$ and the product operation encode the transitivity of $R$, the inverse operation encodes the symmetry property, and $R^{(0)}$ encodes the reflexivity.    

In order to study the topological Gibbs relation $\gr^0$, we use the notions of inductive limit groupoid and $\mathcal{G}$-spaces \cite{renaultamenable,SSW2020}, and in special the Proposition \ref{prop:amenability_via_G_spaces}. In addition, we prove some technical auxiliary results in order to prove that $\gr^0$ is an \'etale groupoid.

\begin{definition}[Inductive limit groupoid] Let $\mathcal{G}$ be a topological groupoid  and $\mathcal{G}_n$, $n \in \mathbb{N}$, be an increasing sequence of subgroupoids of $\mathcal{G}$. We say that $\mathcal{G}$ is the \emph{inductive limit} of the sequence $\{\mathcal{G}_n\}_{n \in \mathbb{N}}$ when
\begin{enumerate}
    \item $\mathcal{G} = \bigcup_{n \in \mathbb{N}}\mathcal{G}_n$;
    \item $\mathcal{G}^{(0)} = \mathcal{G}_n^{(0)}$, for every $n \in \mathbb{N}$;
    \item the topology of $\mathcal{G}$ is the inductive limit topology with respect to the sequence $\{\mathcal{G}_n\}_{n \in \mathbb{N}}$, that is,
    \begin{equation*}
        U \subseteq \mathcal{G} \text{ is open } \iff \text{ } U \cap \mathcal{G}_n \text{ is open for every $n \in \mathbb{N}$ (subspace topology)}.  
    \end{equation*}
\end{enumerate}
\end{definition}

\begin{lemma} \label{lemma:inductive_limit_groupoid_etalicity} Let be an inductive limit $\mathcal{G} = \bigcup_{n \in \mathbb{N}}\mathcal{G}_n$ with respect to the increasing sequence $\{\mathcal{G}_n\}_{n \in \mathbb{N}}$ of locally compact \'etale subgroupoids. Then, $\mathcal{G}$ is \'etale.    
\end{lemma}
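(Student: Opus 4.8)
The plan is to reduce étaleness to the single claim that each $\mathcal{G}_n$ is \emph{open} in $\mathcal{G}$; once that is available, an open bisection computed inside $\mathcal{G}_n$ is automatically an open bisection of $\mathcal{G}$, and the conclusion is almost immediate. Since $\mathcal{G}$ is a topological groupoid by hypothesis, the inverse map $\iota(g)=g^{-1}$ is a homeomorphism and $s=r\circ\iota$, so it suffices to treat the range map $r$, the statement for $s$ being obtained by composing with $\iota$. I would first record two preliminary facts: (i) $\mathcal{G}^{(0)}$ is open in $\mathcal{G}$, because $\mathcal{G}^{(0)}\cap\mathcal{G}_n=\mathcal{G}_n^{(0)}$ is open in each étale $\mathcal{G}_n$, so the inductive limit topology makes $\mathcal{G}^{(0)}$ open, and a short check shows the resulting subspace topology on $\mathcal{G}^{(0)}$ coincides with the common unit-space topology of the $\mathcal{G}_n$; and (ii) $r:\mathcal{G}\to\mathcal{G}^{(0)}$ is continuous, since its restriction to each $\mathcal{G}_n$ is the continuous range map of $\mathcal{G}_n$ and continuity out of an inductive limit is tested level by level.

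The heart of the argument is the following sublemma: if $H\subseteq G$ are étale groupoids, $H$ carrying the subspace topology, with $H^{(0)}=G^{(0)}$ (same topology), then $H$ is open in $G$. To prove it I would fix $h\in H$, choose an open bisection $B$ of $G$ with $h\in B$, and set $\tilde B:=B\cap H$, an open (in $H$) bisection through $h$. Since $H$ is étale its range map is open, so $r(\tilde B)$ is open in $H^{(0)}=G^{(0)}$. The crucial observation is the identity
\begin{equation*}
\tilde B=(r|_B)^{-1}\big(r(\tilde B)\big).
\end{equation*}
Indeed ``$\subseteq$'' is clear, and ``$\supseteq$'' holds because $r|_B$ is injective ($B$ being a bisection): any $b\in B$ with $r(b)=r(b')$ for some $b'\in\tilde B\subseteq B$ must equal $b'$. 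Hence $\tilde B$ is the preimage of an open set under the continuous map $r|_B$, so $\tilde B$ is open in $B$ and therefore open in $G$; as $h$ was arbitrary, $H$ is open in $G$.

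I would then apply the sublemma to each pair $\mathcal{G}_N\subseteq\mathcal{G}_m$ with $m\ge N$ (both étale, common unit space) to get $\mathcal{G}_N$ open in $\mathcal{G}_m$; combined with the trivial case $m<N$, this shows $\mathcal{G}_N\cap\mathcal{G}_m$ is open in $\mathcal{G}_m$ for every $m$, whence $\mathcal{G}_N$ is open in $\mathcal{G}$ by the inductive limit topology. To finish, take $g\in\mathcal{G}$, pick $N$ with $g\in\mathcal{G}_N$, and use étaleness of $\mathcal{G}_N$ to obtain an open bisection $V\ni g$ in $\mathcal{G}_N$ on which $r$ and $s$ restrict to homeomorphisms onto open subsets of the unit space. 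Because $\mathcal{G}_N$ is open in $\mathcal{G}$, the set $V$ is open in $\mathcal{G}$; and since the $\mathcal{G}_N$- and $\mathcal{G}$-subspace topologies agree on $V$ and on $\mathcal{G}^{(0)}$, the map $r|_V$ is still a continuous open bijection onto the open set $r(V)$, i.e. an open bisection of $\mathcal{G}$ around $g$. Thus $r$, and likewise $s$, is a local homeomorphism, so $\mathcal{G}$ is étale.

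I expect the main obstacle to be precisely the sublemma: showing that bisections furnished by the étale structure of $\mathcal{G}_N$ are genuinely open in the larger space $\mathcal{G}$ and not merely in $\mathcal{G}_N$. A priori a set open in $\mathcal{G}_N$ need not be open in $\mathcal{G}_m$ for $m>N$, and the inductive limit topology is unforgiving here; the injectivity-of-$r$-on-bisections identity is exactly what upgrades local openness in $\mathcal{G}_N$ to openness in $\mathcal{G}$. A secondary, bookkeeping-type point is to confirm that all topologies induced on the common unit space (and on $V$) coincide, which is what lets the homeomorphism property descend from level $N$ to $\mathcal{G}$.
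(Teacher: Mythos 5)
Your proof is correct, and it takes a genuinely different route from the paper's. The paper's argument is local and compactness-based: it picks an open neighbourhood $V$ of $g$ with compact closure, argues that $V$ must sit inside a single $\mathcal{G}_m$, chooses an open bisection $\mathcal{U}$ of $\mathcal{G}_m$ with $g\in\mathcal{U}\subseteq V$, and then simply asserts that $\mathcal{U}$ is an open bisection of $\mathcal{G}$. The step it leaves unjustified is exactly the one you isolate as the main obstacle: a set open in $\mathcal{G}_m$ need not \emph{a priori} be open in $\mathcal{G}_n$ for $n>m$, so openness of $\mathcal{U}$ in the inductive limit topology requires an argument. Your sublemma --- that an \'etale subgroupoid $H\subseteq G$ in the subspace topology with $H^{(0)}=G^{(0)}$ is automatically open in $G$, via the identity $\tilde B=(r|_B)^{-1}(r(\tilde B))$ exploiting injectivity of $r$ on a bisection --- supplies precisely the missing justification, and once each $\mathcal{G}_N$ is known to be open in $\mathcal{G}$ the conclusion is immediate. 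What your approach buys is a cleaner and more robust proof: it dispenses with the local compactness/relatively compact neighbourhood argument entirely (which the paper uses without that hypothesis appearing in the statement for $\mathcal{G}$ itself), and it makes explicit the compatibility of topologies on the common unit space that both proofs tacitly rely on. The only hypothesis you should state explicitly when writing this up is that each $\mathcal{G}_N$ carries the subspace topology induced from $\mathcal{G}_m$ for $m\ge N$ (so that your sublemma applies to the pair $\mathcal{G}_N\subseteq\mathcal{G}_m$); this is implicit in the paper's definition of inductive limit and holds in its application, where all the $\gr^0_n$ inherit their topology from a common ambient space.
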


\begin{proof} Let $g \in \mathcal{G}$ and $V$ be an open neighborhood of $g$ with compact closure. Since the family $\{\mathcal{G}_n\}_{n \in \mathbb{N}}$ covers $V$, there exists an $m \in \mathbb{N}$ such that $V \subseteq \mathcal{G}_m$. Let $\mathcal{U}$ be an open bisection in $\mathcal{G}_m$ such that $g \in \mathcal{U} \subseteq V$. Then, $r(\mathcal{U})$ and $s(\mathcal{U})$ are open sets in $\mathcal{G}^{(0)}$, and $r\vert_\mathcal{U}: \to r(\mathcal{U})$ and $s\vert_\mathcal{U}: \to s(\mathcal{U})$ are homeomorphisms. Therefore $\mathcal{U}$ is an open bisection of $\mathcal{G}$, and since $h$ is arbitrary, we have that $\mathcal{G}$ is \'etale.    
\end{proof}

\begin{definition}[$\mathcal{G}$-spaces]\label{def:G_spaces} Let $\mathcal{G}$ be an \'etale groupoid. A \emph{left $\mathcal{G}$-space} is a locally compact Hausdorff space $X$ endowed with a continuous map $r:X \to \mathcal{G}^{(0)}$ and a continuous map $\cdot: \mathcal{G} \ast X \to X$, where
\begin{equation*}
    \mathcal{G} \ast X := \{(g,x) \in \mathcal{G} \times X: s(g) = r(x)\},
\end{equation*}
and such that the following properties hold:
\begin{itemize}
    \item[$(a)$] $r(g\cdot x) = r(g)$, for every $(g,x) \in \mathcal{G} \ast X$;
    \item[$(b)$] $r(x) \cdot x = x$, for every $x \in X$;
    \item[$(c)$] $g \cdot (h \cdot x) = (gh) \cdot x$, for every $(h,x) \in  \mathcal{G} \ast X$ and every $g \in \mathcal{G}_{r(h)}$.
\end{itemize}
In addition, we say that $X$ is a proper $\mathcal{G}$-space when the map $\Theta:  \mathcal{G} \ast X \to X \times X$, given by $\Theta(g,x):= (g \cdot x,x)$ is proper, i.e., if the preimages of compact sets under $\Theta$ are compact.
\end{definition}

\begin{proposition}\label{prop:amenability_via_G_spaces} Suppose $\mathcal{G}$ is an \'etale groupoid, and $\{\mathcal{G}_n\}_{n \in \mathbb{N}}$ is a family of closed subgroupoids of $\mathcal{G}$ that satisfies the following properties:
\begin{itemize}
    \item[$(i)$] $\mathcal{G}^{(0)} \subseteq \mathcal{G}_n \subseteq \mathcal{G}_{n+1}$, for each $n \in \mathbb{N}$;
    \item[$(ii)$] $\mathcal{G}_{n+1}$ is a proper $ \mathcal{G}_n$-space, for each $n \in \mathbb{N}$;
    \item[$(iii)$] $\mathcal{G} = \bigcup_{n \in \mathbb{N}} \mathcal{G}_n$.
\end{itemize}
If each $\mathcal{G}_n$ is amenable, then $\mathcal{G}$ is amenable.
\end{proposition}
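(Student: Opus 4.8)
The plan is to verify the $C_c$-characterization of topological amenability from the definition above directly for $\mathcal G$, by transplanting the amenability data of each $\mathcal G_n$. Concretely, it suffices to show that for every compact $K\subseteq\mathcal G$, every compact $C\subseteq\mathcal G^{(0)}$ and every $\varepsilon>0$ there is a single $F\in C_c(\mathcal G)$, $F\ge 0$, with $|\sum_{g\in\mathcal G^x}|F(g)|^2-1|<\varepsilon$ for all $x\in C$ and $\sum_{g:\,r(g)=r(h)}|F(h^{-1}g)-F(g)|<\varepsilon$ for all $h\in K$. Since $\mathcal G$ is second countable, I would then fix exhausting sequences $K_1\subseteq K_2\subseteq\cdots$ of $\mathcal G$ and $C_1\subseteq C_2\subseteq\cdots$ of $\mathcal G^{(0)}$ and take $F_i$ for the data $(K_i,C_i,1/i)$, obtaining the sequence $(f_i)$ demanded by the definition.

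First I would record the topological reduction that a compact $K\subseteq\mathcal G$ already lies in some $\mathcal G_m$. This is exactly the inductive-limit argument used in the proof of Lemma~\ref{lemma:inductive_limit_groupoid_etalicity}: cover $K$ by finitely many open sets with compact closure, each contained in some $\mathcal G_{n_j}$, and set $m=\max_j n_j$; hypothesis $(iii)$ together with the increasing nature of the $\mathcal G_n$ then gives $K\subseteq\mathcal G_m$. Fixing such an $m$ (and noting $C\subseteq\mathcal G^{(0)}=\mathcal G_m^{(0)}$), the amenability of $\mathcal G_m$ furnishes $f\in C_c(\mathcal G_m)$, $f\ge 0$, satisfying both estimates within $\varepsilon$, the first on $C$ and the second on the compact set $K\subseteq\mathcal G_m$.

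The key point, which makes the reduction work, is that a function supported in the subgroupoid $\mathcal G_m$ has the \emph{same} fibrewise sums whether computed in $\mathcal G_m$ or in $\mathcal G$. Indeed, if $F$ vanishes off $\mathcal G_m$ then $\mathcal G^x\cap\mathcal G_m=\mathcal G_m^x$ yields $\sum_{g\in\mathcal G^x}|F(g)|^2=\sum_{g\in\mathcal G_m^x}|F(g)|^2$; and for $h\in K\subseteq\mathcal G_m$, whenever $g\notin\mathcal G_m$ one also has $h^{-1}g\notin\mathcal G_m$ (otherwise $g=h(h^{-1}g)\in\mathcal G_m$), so every such $g$ contributes $0$ and $\sum_{g:\,r(g)=r(h)}|F(h^{-1}g)-F(g)|=\sum_{g\in\mathcal G_m^{r(h)}}|f(h^{-1}g)-f(g)|$. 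Thus the two estimates for $f$ on $\mathcal G_m$ transfer \emph{verbatim} to $\mathcal G$, \emph{provided} one can exhibit a genuine $F\in C_c(\mathcal G)$ that restricts to $f$ on $\mathcal G_m$ and vanishes off it.

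This last step is the main obstacle, and it is precisely where hypothesis $(ii)$ must be used. Since the $\mathcal G_n$ are \emph{closed} but in general not open in the inductive-limit topology, the naive extension of $f$ by zero need not be continuous along the boundary of $\mathcal G_m$ in $\mathcal G$. The properness of $\mathcal G_{n+1}$ as a $\mathcal G_n$-space (Definition~\ref{def:G_spaces}) is the device I would use to repair this: climbing the tower $\mathcal G_m\subseteq\mathcal G_{m+1}\subseteq\cdots$ one step at a time, the properness of each action $\Theta(g,x)=(g\cdot x,x)$ produces continuous cutoff functions subordinate to the $\mathcal G_n$-orbit structure, with which $f$ can be extended to a continuous, compactly supported function first on $\mathcal G_{m+1}$, then on $\mathcal G_{m+2}$, and so on, without disturbing the fibrewise sums over its support; the resulting $F$ lies in $C_c(\mathcal G)$ by the defining property of the inductive-limit topology. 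The hard part will be the careful construction of these cutoffs from properness and the verification that the extension preserves the two estimates — that is where the genuine work lies, the bookkeeping of the preceding paragraphs being routine.
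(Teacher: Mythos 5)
First, a point of comparison: the paper does not actually prove this proposition --- it is imported from the literature on amenable groupoids (the references \cite{renaultamenable} and \cite{SSW2020}), so there is no in-paper argument to measure yours against. Judged on its own terms, your proposal is a plan rather than a proof, and the plan has a genuine gap at exactly the step you yourself flag as ``where the genuine work lies.'' Your transfer lemma (that the two fibrewise sums computed in $\mathcal{G}$ agree with those computed in $\mathcal{G}_m$) is correct, but it is valid \emph{only} for functions vanishing off $\mathcal{G}_m$; since $\mathcal{G}_m$ is closed and in general not open, the zero-extension of $f\in C_c(\mathcal{G}_m)$ is not continuous, and any genuinely continuous extension must have mass outside $\mathcal{G}_m$, which reintroduces uncontrolled terms into both sums. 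So the mechanism that makes your reduction work is incompatible with the modification needed to land in $C_c(\mathcal{G})$. Worse, the ``climb the tower'' scheme cannot terminate: a compactly supported $F$ on $\mathcal{G}$ has support absorbed by some $\mathcal{G}_N$, so at the last stage you face the identical extension-by-zero discontinuity at the boundary of $\mathcal{G}_N$ in $\mathcal{G}$. The actual role of hypothesis $(ii)$ is to supply, via properness of the $\mathcal{G}_n$-action on $\mathcal{G}_{n+1}$, cutoff functions with controlled fibrewise sums (a Bruhat-type approximate cross-section) with which one \emph{averages} the $\mathcal{G}_n$-data over orbits to produce approximately invariant data on $\mathcal{G}_{n+1}$ with quantified error; none of that estimate-tracking is present, and it is the entire content of the result.

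A secondary issue: your opening reduction asserts that every compact $K\subseteq\mathcal{G}$ lies in some $\mathcal{G}_m$. For an increasing sequence of \emph{closed} sets whose union is the whole space this is false in general (e.g.\ $[0,1]=\{0\}\cup\bigcup_n[1/n,1]$), and the argument of Lemma \ref{lemma:inductive_limit_groupoid_etalicity} that you invoke relies on the subgroupoids being open (or on $\mathcal{G}$ carrying the inductive-limit topology, where compacta are absorbed even along closed inclusions). That extra topological hypothesis holds in the paper's application to $\gr^0=\bigcup_n\gr^0_n$ but is not among the stated hypotheses of the proposition, so it must either be added or derived before the reduction is legitimate.
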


\begin{proposition}\label{prop:amenability_closed_or_open_subgroupoid} Let $\mathcal{G}$ be an amenable \'etale groupoid and $\mathcal{H} \subseteq \mathcal{G}$ be subgroupoid. If $\mathcal{H}$ is open or closed, then $\mathcal{H}$ is amenable.
\end{proposition}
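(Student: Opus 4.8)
The plan is to argue directly from the functional definition of topological amenability, transferring the witnessing sequence $(f_i)\subseteq C_c(\mathcal{G})$ of $\mathcal{G}$ to $\mathcal{H}$ by restriction, after first checking that $\mathcal{H}$ is itself a locally compact \'etale groupoid so that the definition applies to it. For the open case this is immediate: if $\mathcal{H}$ is open then $\mathcal{H}^{(0)}=\mathcal{H}\cap\mathcal{G}^{(0)}$ is open in $\mathcal{G}^{(0)}$, and for any open bisection $U$ of $\mathcal{G}$ the set $U\cap\mathcal{H}$ is an open bisection of $\mathcal{H}$ (the restrictions of $r$ and $s$ remain injective and carry $U\cap\mathcal{H}$ onto open subsets of $\mathcal{H}^{(0)}$), so $\mathcal{H}$ is \'etale. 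In the closed case $\mathcal{H}$ is locally compact as a closed subset of a locally compact Hausdorff space; here one must argue more carefully that the charts $U\cap\mathcal{H}$ provide \'etale structure, and this is the one genuinely topological point to settle before the amenability argument begins.

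With $\mathcal{H}$ \'etale, I would take the sequence $(f_i)$ in $C_c(\mathcal{G})$ from the definition of amenability of $\mathcal{G}$ and set $\tilde f_i:=f_i|_{\mathcal{H}}$, inserting a cutoff against a compact exhaustion of $\mathcal{H}$ in the open case (where $f_i|_{\mathcal{H}}$ need not have compact support) and using that $\supp(f_i)\cap\mathcal{H}$ is automatically compact in the closed case. The almost-invariance condition $(b)$ then transfers for free: since $\mathcal{H}$ is a subgroupoid, $h^{-1}g\in\mathcal{H}$ whenever $h\in\mathcal{H}$ and $g\in\mathcal{H}^{r(h)}$, so the relevant sum for $\mathcal{H}$ ranges over $\mathcal{H}^{r(h)}\subseteq\mathcal{G}^{r(h)}$; it is a sub-sum of the corresponding sum for $\mathcal{G}$ and hence still converges to $0$ uniformly on compact sets.

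The hard part, and the main obstacle, is the normalization condition $(a)$. Restriction only gives $\sum_{g\in\mathcal{H}^x}|\tilde f_i(g)|^2\le\sum_{g\in\mathcal{G}^x}|f_i(g)|^2\to 1$, and the inequality is typically strict, since the mass of $f_i$ in the fibre $\mathcal{G}^x$ leaks outside $\mathcal{H}$. Already $2\mathbb{Z}\subseteq\mathbb{Z}$ (restricted fibre-mass $\to\tfrac12$) and the trivial subgroup $\{0\}\subseteq\mathbb{Z}$ (restricted fibre-mass $\to 0$) show that naive restriction cannot produce functions normalized to $1$, so $(a)$ must be restored by hand. The mechanism I would use mirrors the classical proof that subgroups of amenable groups are amenable: partition each fibre $\mathcal{G}^x$ into the $\mathcal{H}$-cosets determined by $g\approx g'\iff g^{-1}g'\in\mathcal{H}$ (so that $\mathcal{H}^x$ is exactly the class of the unit $x$), note that left translation by elements of $\mathcal{H}$ carries classes onto classes and that condition $(b)$ forces the $\ell^2$-mass to be asymptotically invariant across them, and then select and renormalize a coset slice so that the resulting functions have fibre-mass tending to $1$ while retaining almost-invariance.

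The delicate point in implementing this last step for groupoids — absent in the single-fibre group case — is that the coset selection must be made continuously and uniformly in $x\in\mathcal{G}^{(0)}$, rather than fibrewise by a pigeonhole choice; arranging (or bypassing) such a continuous choice is where the real work lies. For this reason I would ultimately organize the argument around the standard structural fact that open and closed subgroupoids of an amenable \'etale groupoid inherit amenability \cite{renaultamenable,SSW2020}, verifying in both cases that $\mathcal{H}$ is a locally compact \'etale subgroupoid equipped with the counting Haar system, and reducing the open and closed cases to that statement.
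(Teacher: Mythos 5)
The paper never proves this proposition: it is imported as a known structural fact (the relevant sources being \cite{renaultamenable} and the amenability chapter of \cite{SSW2020}) and used as a black box in the proof that $\gr^0$ is amenable. So your closing move --- reducing both cases to the standard heredity statement for open and closed subgroupoids --- is exactly what the paper does, and there is no in-paper argument to compare the rest of your proposal against.

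Judged as a proof, however, the direct argument you sketch has a genuine gap, and you have located it yourself: restriction of the witnessing sequence $(f_i)$ preserves condition $(b)$ (the sum over $\mathcal{H}^{r(h)}$ is a sub-sum of the one over $\mathcal{G}^{r(h)}$, since $h^{-1}g\in\mathcal{H}$ for $g,h\in\mathcal{H}$) but destroys the normalization $(a)$, as your examples $2\mathbb{Z}\subseteq\mathbb{Z}$ and $\{0\}\subseteq\mathbb{Z}$ show. The coset-slicing repair is not a proof as written: the coset space $\mathcal{H}\backslash\mathcal{G}^x$ varies with $x$, there is in general no continuous field of distinguished cosets to renormalize over, and this selection problem is the entire content of the statement rather than a technicality to be arranged afterwards. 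The literature proofs do not restrict the $C_c$ witnesses at all; for the closed case one observes that the translation action of $\mathcal{G}$ on itself is amenable, restricts it to an amenable action of $\mathcal{H}$ on $\mathcal{G}$, and transfers approximate invariance down to $\mathcal{H}^{(0)}$ along the range map by integrating against a compactly supported cutoff --- recovering the normalization by integration rather than by coset selection. A second, smaller gap: in the closed case $\mathcal{H}$ need not be \'etale at all (e.g.\ the closed subgroupoid $(\{0\}\times[0,1])\cup(\mathbb{Z}\times\{0\})$ of the amenable \'etale group bundle $\mathbb{Z}\times[0,1]$ has a range map that is not open), so the ``one genuinely topological point to settle'' cannot in fact be settled in general; the statement must then be read with the definition of amenability for general locally compact groupoids, and in the paper's only application of the closed case the subgroupoids $\gr^0_n$ are separately shown to be \'etale. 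In short: your diagnosis of the obstruction is correct and your final reduction to the cited fact is the same one the paper makes, but the intermediate material should be presented as motivation, not as an argument that closes.
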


Now, we briefly describe the groupoid full C$^*$-algebra. Let us restrict ourselves to the level of generality of locally compact Hausdorff (LCH) second-countable \'{e}tale groupoids. We omit any proof concerning the well-definedness of these algebras, which can be found, for instance, in \cite{Lima2019,SSW2020,Renault:80}. Note that there are C$^*$-algebras for more general classes of groupoids. Renault \cite{Renault:80} defines C$^*$-algebras for non-\'etale groupoids, and Exel and Pitts \cite{exelpitts:22} study non-Hausdorff groupoids.


In the following, we denote by $C_c(\mathcal{G})$ the set of all complex-valued continuous functions on $G$ with compact support.

\begin{definition}\label{def:operations_C_star_groupoid} Let $\mathcal{G}$ be a locally compact Hausdorff second-countable \'{e}tale groupoid. We define the involution $*:C_c(\mathcal{G}) \to C_c(\mathcal{G})$ and convolution product $\cdot :C_c(\mathcal{G}) \times C_c(\mathcal{G}) \to C_c(\mathcal{G})$ by
\begin{equation*}
    f^*(g):= \overline{f(g^{-1})} \quad \text{and} \quad (f_1 \cdot f_2 )(g) = \sum_{g_1g_2 = g} f_1(g_1)f_2(g_2),
\end{equation*}
for every $f, f_1, f_2 \in C_c(\mathcal{G})$ and $g \in \mathcal{G}$.
\end{definition}

Let us regard $C_c(\mathcal{G})$ as a vector space equipped with the usual operations of addition and product by a scalar. This space, endowed with the convolution product and involution as above, is a $*$-algebra over $\mathbb{C}$. In addition, the function $\|\cdot\|:C_c(\mathcal{G}) \to \mathbb{R}$ defined by letting
\begin{align*}
\Vert f \Vert = \sup \lbrace \Vert \pi(f) \Vert \text{$:$ } \pi: C_c(\mathcal{G}) \rightarrow B(H_\pi) \text{ is a $\ast$-representation of $C_c(\mathcal{G})$} \rbrace
\end{align*}
for each $f \in C_c(\mathcal{G})$, turns $C_c(\mathcal{G})$ into a normed $*$-algebra whose completion is a C$^*$-algebra.

\begin{definition}[Full groupoid C$^*$-algebra] Let $\mathcal{G}$ be a locally compact Hausdorff second-countable \'{e}tale groupoid. The full groupoid C$^*$-algebra of $\mathcal{G}$, denoted by $C^*(\mathcal{G})$, is defined as the norm completion of $C_c(\mathcal{G})$.
\end{definition}

\begin{remark} The full groupoid C$^*$-algebra, defined as above, is separable.    
\end{remark}

At this point, we should address an important comment about the definition above concerning the word \emph{full}: in contrast to the full C$^*$-algebra, there is also a notion called groupoid reduced C$^*$-algebra (see \cite{SSW2020}). In general, the reduced one is a C$^*$-subalgebra of the full one. However, in subsection \ref{subsec:Gibbs_relation_properties}, we show that the Gibbs relation is amenable, and therefore both full and reduced C$^*$-algebras coincide in our context, see \cite[Theorem 10.1.4]{SSW2020} \cite{renaultamenable}. In order to prove that, we make use of the theory of approximately finite groupoids (see \cite{Farsi:19,Renault:80}). Some of the essential definitions and remarks used in this paper are presented as follows.

\begin{definition}[\cite{Farsi:19}] A groupoid $\mathcal{G}$ is said to be \emph{ample} when it is \'etale and $\mathcal{G}^{(0)}$ is zero-dimensional. 
\end{definition}



The main groupoid studied in this paper is an example of approximately finite groupoids, which we define next. These groupoids are approximated by groupoids of the form $R(\psi)$, described as follows:  given $X$ and $Y$ locally compact Hausdorff spaces, and a local homeomorphism $\psi:Y \to X$, we define the set
\begin{equation*}
    R(\psi) := \{(y,z) \in Y \times Y : \psi(y) = \psi(z)\}.
\end{equation*}
Note that $R(\psi)$ is an equivalence relation. We equip the set with the operations of Example \ref{exa:equiv_relation_groupoid}, which makes this set a groupoid. We also endow the groupoid with the subspace topology from $X \times X$, such that this groupoid is a locally compact, Hausdorff, \'etale groupoid. Moreover, if $X$ is zero-dimensional, then $R(\psi)$ is ample. Since the map $R(\psi)^{(0)} \to X$, given by $(x,x) \mapsto x$, is a homeomorphism, we identify $X$ with $R(\psi)^{(0)}$.

\begin{lemma}\label{lemma:clopen_subgroupoid_of_etale_groupoid_is_etale_same_unity_set} Let $\mathcal{G}$ be a Hausdorff ample groupoid, and $\mathcal{H} \subseteq \mathcal{G}$ be a clopen subgroupoid such that $X:= \mathcal{H}^{(0)} = \mathcal{G}^{(0)}$, and $X$ is separable metrizable. Then, $\mathcal{H}$ is \'etale (and hence ample).    
\end{lemma}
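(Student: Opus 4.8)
The plan is to realize $\mathcal{H}$ as a groupoid with a basis of compact open bisections, obtained simply by intersecting compact open bisections of $\mathcal{G}$ with $\mathcal{H}$; the clopenness of $\mathcal{H}$ is precisely what makes these intersections again compact and open. Before the core construction I would dispose of the soft topological points. Equipped with the subspace topology, $\mathcal{H}$ is a topological groupoid, since its structure maps are restrictions of the (continuous) structure maps of $\mathcal{G}$; it is Hausdorff as a subspace of a Hausdorff space, and, being clopen in the locally compact Hausdorff groupoid $\mathcal{G}$, it is itself locally compact Hausdorff. Moreover, because $\mathcal{G}$ is \'etale its unit space $\mathcal{G}^{(0)}$ is open in $\mathcal{G}$, so $\mathcal{H}^{(0)} = X = \mathcal{G}^{(0)}$ is open in $\mathcal{H}$ and carries one and the same topology whether viewed inside $\mathcal{H}$ or inside $\mathcal{G}$. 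Finally, since $\mathcal{G}$ is Hausdorff and ample it admits a basis of compact open bisections: given $g$, an open bisection $U \ni g$ together with a compact open neighborhood of $r(g)$ in the zero-dimensional set $r(U) \subseteq \mathcal{G}^{(0)}$ pulls back under the homeomorphism $r|_U$ to a compact open bisection inside $U$.

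For the main step, fix $h \in \mathcal{H}$ and choose a compact open bisection $U$ of $\mathcal{G}$ with $h \in U$, and put $V := U \cap \mathcal{H}$. Since $\mathcal{H}$ is open, $V$ is open; since $\mathcal{H}$ is closed, $V$ is a closed subset of the compact set $U$, hence compact. The maps $r|_V$ and $s|_V$ are restrictions of the homeomorphisms $r|_U : U \to r(U)$ and $s|_U : U \to s(U)$, so they are injective and restrict to homeomorphisms onto their images; and as $V$ is open in $U$ while $r(U), s(U)$ are open in $\mathcal{G}^{(0)} = X$, the images $r(V)$ and $s(V)$ are open in $X = \mathcal{H}^{(0)}$. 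Thus $V$ is a compact open bisection of $\mathcal{H}$ containing $h$. As $h$ ranges over $\mathcal{H}$ these sets form a basis, so $r$ and $s$ are local homeomorphisms on $\mathcal{H}$, i.e. $\mathcal{H}$ is \'etale; since its unit space $X = \mathcal{G}^{(0)}$ is zero-dimensional, $\mathcal{H}$ is in fact ample. I would invoke the separable metrizability of $X$ only to keep $\mathcal{H}$ within the second-countable, locally compact Hausdorff framework required by the later groupoid $C^*$-algebra constructions; it plays no role in the \'etaleness argument proper.

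The step I expect to be the only real point of care is verifying that $V$ is genuinely a bisection of $\mathcal{H}$, namely that $r(V)$ and $s(V)$ are open in the unit space. This is where one must use not mere continuity but the fact that $r|_U$ is an \emph{open} map onto an open subset of $\mathcal{G}^{(0)}$, together with the identification of the subspace topology of $\mathcal{H}^{(0)}$ with that of $\mathcal{G}^{(0)}$; openness of $\mathcal{H}$ then transports openness of images from $\mathcal{G}$ to $\mathcal{H}$. The closedness hypothesis, by contrast, enters solely to upgrade ``open bisection'' to ``compact open bisection,'' and hence to read off ampleness (and not merely \'etaleness) directly; if one only wanted \'etaleness, open bisections of $\mathcal{G}$ intersected with the open set $\mathcal{H}$ would already suffice.
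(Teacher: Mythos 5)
Your proof is correct and follows essentially the same route as the paper's: take an open bisection of $\mathcal{G}$ through the given point, shrink it using the zero-dimensionality of the unit space (you use compact open neighborhoods where the paper uses a clopen basis), intersect with the clopen set $\mathcal{H}$, and check that the restrictions of $r$ and $s$ remain homeomorphisms onto open subsets of $X$. Your closing observations — that openness of $\mathcal{H}$ alone yields étaleness, closedness only upgrades to compact open bisections, and the separable metrizability is immaterial to the argument — match the paper's own remark following the lemma.
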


\begin{proof} Let $g \in \mathcal{H}$. Since $\mathcal{G}$ is \'etale, there exists an open bisection $\mathcal{U} \subseteq \mathcal{G}$ that contains $g$. Let $\{B_\lambda\}_{\lambda \in \Lambda}$ be a basis of clopen sets of $X$. Observe that these sets are clopen in $\mathcal{G}$ because $X$ is clopen due to Hausdorff property and the etalicity of $\mathcal{G}$ (respectively Lemmas 8.3.2 and 8.4.2 of \cite{SSW2020}). We have that
\begin{equation*}
    r(\mathcal{U}) = \bigcup_{\lambda \in \Lambda'} B_\lambda,
\end{equation*}
for some non-empty $\Lambda' \subseteq \Lambda$. Hence, there exists $\lambda \in \Lambda'$ such that $r(g) \in B_\lambda$. Moreover, given the homeomorphisms $r\vert_\mathcal{U}:\mathcal{U} \to r(\mathcal{U})$ and $s\vert_\mathcal{U}:\mathcal{U} \to s(\mathcal{U})$, we have that $\mathcal{V}:= r\vert_\mathcal{U}^{-1}(B_\lambda) \subseteq \mathcal{U}$ is a clopen bisection that is a neighborhood of $g$. Since $\mathcal{H}$ is clopen, we have that $\mathcal{V} \cap \mathcal{H}$ is a clopen bisection in $\mathcal{G}$ that is contains $g$
and hence $r(\mathcal{V} \cap \mathcal{H})$ and $s(\mathcal{V} \cap \mathcal{H})$ are clopen in $\mathcal{G}$. Therefore, $r\vert_{\mathcal{V} \cap \mathcal{H}}$ and $s\vert_{\mathcal{V} \cap \mathcal{H}}$ are continuous open invertible maps, i.e., they are homeomorphisms. Therefore, since $g$ is arbitrary, we have that $r$ and $s$ are local homeomorphisms on $\mathcal{H}$, i.e., $\mathcal{H}$ is \'etale.
\end{proof}

\begin{remark} In the lemma above, we only used that $X$ is separable and metrizable to ensure that the distinct definitions of zero-dimensionality be equivalent. If we use the definition with respect to the small inductive dimension (i.e., the existence of a basis of clopen sets), one can remove these hypotheses.
\end{remark}

\begin{definition}[\cite{Farsi:19}] An ample groupoid $\mathcal{G}$ is said to be \emph{elementary} when it is isomorphic to $R(\psi)$ for some local homeomorphism $\psi: Y \to X$ between two locally compact Hausdorff zero-dimensional spaces. A groupoid $\mathcal{G}$ is said to be \emph{approximately finite (AF)} when it can be expressed as an increasing union of open elementary subgroupoids with the same unit space.
\end{definition}

\begin{remark} The definition of AF-groupoid above is equivalent to the notion of the same name as in Definition 1.1 of \cite{Renault:80}. A proof of this equivalence can be found in \cite{Rafa:23}. Then from this point, we use results on both definitions without distinction.
\end{remark}

As we mentioned earlier in this section, the main groupoid of this paper originates from the Gibbs equivalence relation, and, as we prove in Subsection \ref{subsec:Gibbs_relation_properties}, this equivalence relation is approximately proper. This class of relations was studied by J. Renault \cite{renault:05} in the context of the so-called Radon-Nikodym problem: for a groupoid $\mathcal{G}$ on a space $X$, and a cocycle $D: \mathcal{G} \to \mathbb{R}_+^*$, what are the quasi-invariant measures on $X$ with respect to $\mathcal{G}$ such that the correspondent Radon-Nikodym derivative is the cocycle $D$? A generalization of this problem can be found in \cite{BEFR2018} and it covers dynamical systems where the dynamic map is not defined in the whole space. As we present in section \ref{rn}, we are interested in cocycles in the form $D = e^c$, where $c$ is a homomorphism from a groupoid to the additive group of the real numbers. In this paper, we refer to $c$ as the cocycle. Next, we present the notions of proper and approximately proper equivalence relations.

In the following, we assume that $X$ is a locally compact Hausdorff second-countable space. An equivalence relation $R \subseteq X \times X$ is said to be \emph{proper and \'etale} when the topology of its quotient space is Hausdorff and its quotient map $q:X \to X/R$ is a local homeomorphism. Since we will mostly be dealing with \'etale groupoids derived from proper and \'etale equivalence relations, we may use the nomenclature ``proper'' and ``proper and \'etale'' interchangeably.

\begin{definition}\label{def:AP_relation} 
Let $R \subseteq X \times X$ be an equivalence relation. We say that $R$ is \emph{approximately proper} (AP) when there exists a sequence

\begin{equation*}
    \begin{tikzcd}
    X_0 \arrow[r, "q_{1,0}"] 
    & X_1 \arrow[r, "q_{2,1}"] 
    & X_2 \arrow[r, "q_{3,2}"] 
    &\cdots \arrow[r, "q_{n-1,n-2}"] 
    & X_{n-1} \arrow[r, "q_{n,n-1}"] 
    & X_n \arrow[r, "q_{n+1,n}"]
    & \cdots
    %
    \end{tikzcd}
\end{equation*}
where $X_0 = X$, and for each $n \in \mathbb{N}$, $X_n$ is a Hausdorff space, $q_{n,n-1}:X_{n-1} \to X_n$ is a surjective local homeomorphism, and
\begin{equation*}
    R = \{(x,y) \in X \times X: \exists n \in \mathbb{N} \text{ s.t. } q_n(x) = q_n(y)\},
\end{equation*}
where $q_n:= q_{n,n-1} \circ q_{n-1,n-2} \circ \cdots \circ q_{1,0}$.
\end{definition}

For an AP relation $R$ as above and for each $n \in \mathbb{N}$, the map $q_n$ defines a proper equivalence relation $R_n \subset R$, where $R_n:= \{(x,y) \in X \times X: q_n(x) = q_n(y)\}$, and we endow it with the subspace topology of the product topology in $X \times X$. The subspace topology of the product topology in $X \times X$ induced on the equivalence relation $R$ will be denoted by $\tau_{R}$, and its respective Borel $\sigma$-algebra will be denoted by $\mathcal{B}_{R}$. Note that, if $X$ is zero-dimensional, then every $R_n$ is an elementary groupoid, and therefore every AP relation is an AF groupoid.

The next proposition was proved by J. Renault and it will be useful for studying the properties of the Gibbs relation.

\begin{proposition}[Proposition 2.1 of \cite{renault:05}]\label{prop:properties_AP_relation} Let $R$ be the equivalence relation defined by the sequence $(q_n)_{n \in \mathbb{N}}$ as in Definition \ref{def:AP_relation}. Then, the following properties hold:
\begin{itemize}
    \item[$(i)$] $(R_n)_{n \in \mathbb{N}}$ is an increasing sequence and $R = \bigcup_{n \in \mathbb{N}}R_n$;
    \item[$(ii)$] for $m < n$, $R_m$ is a clopen subset of $R_n$;
    \item[$(iii)$] $R$ endowed with the inductive limit topology is an \'etale locally compact groupoid.
\end{itemize}
\end{proposition}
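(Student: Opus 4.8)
The plan is to handle the three items in order: (i) is essentially definitional, (ii) is the technical core where the local homeomorphism hypothesis is actually used, and (iii) assembles the first two and invokes Lemma \ref{lemma:inductive_limit_groupoid_etalicity}. For (i), the identity $R=\bigcup_{n}R_n$ is immediate from Definition \ref{def:AP_relation}, since $(x,y)\in R$ means precisely that $q_n(x)=q_n(y)$ for some $n$, i.e. $(x,y)\in R_n$. That the sequence is increasing follows from the factorisation $q_n=(q_{n,n-1}\circ\cdots\circ q_{m+1,m})\circ q_m$, valid for $m<n$: if $q_m(x)=q_m(y)$, then applying the remaining connecting maps yields $q_n(x)=q_n(y)$, so $R_m\subseteq R_n$.

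For (ii), fix $m<n$ and set $p:=q_{n,n-1}\circ\cdots\circ q_{m+1,m}\colon X_m\to X_n$, so that $q_n=p\circ q_m$; being a composition of local homeomorphisms, $p$ is itself a local homeomorphism, and hence locally injective. \emph{Closedness} of $R_m$ in $R_n$ is soft: since $X_m$ is Hausdorff its diagonal $\Delta_{X_m}$ is closed, and $R_m=(q_m\times q_m)^{-1}(\Delta_{X_m})$ is therefore closed in $X\times X$, hence in the subspace $R_n$. \emph{Openness} is the point where the hypotheses genuinely enter. Given $(x,y)\in R_m$, put $u:=q_m(x)=q_m(y)$, choose an open $W\ni u$ on which $p$ is injective, and use continuity of $q_m$ to pick open sets $U\ni x$, $V\ni y$ with $q_m(U),q_m(V)\subseteq W$. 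For any $(x',y')\in (U\times V)\cap R_n$ we have $p(q_m(x'))=q_n(x')=q_n(y')=p(q_m(y'))$ with $q_m(x'),q_m(y')\in W$; injectivity of $p$ on $W$ forces $q_m(x')=q_m(y')$, i.e. $(x',y')\in R_m$. Hence $(U\times V)\cap R_n$ is a neighbourhood of $(x,y)$ in $R_n$ contained in $R_m$, so $R_m$ is open in $R_n$.

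For (iii), I would first recall that each $R_n=R(q_n)$ is a locally compact Hausdorff \'etale groupoid, as observed for relations of the form $R(\psi)$ in the discussion preceding the statement, and that all the $R_n$ share the unit space $R^{(0)}=R_n^{(0)}\cong X$ of Example \ref{exa:equiv_relation_groupoid}. Using (i) and (ii) I would then verify that $R$, endowed with the inductive limit topology, is genuinely the inductive limit groupoid of the $R_n$: the clopenness from (ii) propagates to show that each $R_n$ is clopen in $R$ (for $N\ge n$ one has $R_n\cap R_N=R_n$, which is open in $R_N$ by (ii), and for $N<n$ one has $R_n\cap R_N=R_N$), and consequently the inductive limit topology restricts to the original topology on each $R_n$, so the inclusions are open embeddings. \'Etalicity of $R$ is then exactly Lemma \ref{lemma:inductive_limit_groupoid_etalicity}, while local compactness follows since every $g\in R$ lies in some $R_n$, which is open in $R$ and locally compact. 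Continuity of inversion and of the product is checked against the inductive limit topology by restricting to the clopen pieces $R_n$ (noting that a product of elements of $R_{n_1}$ and $R_{n_2}$ lands in $R_{\max(n_1,n_2)}$), where the operations are already continuous.

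I expect the main obstacle to be the openness half of (ii): it is the only place the local homeomorphism assumption on the maps $q_{n,n-1}$ is essential, and it is precisely what makes the topological bookkeeping in (iii) work—namely the clopenness of the $R_n$ in $R$, the resulting agreement of the inductive limit topology with the subspace topologies, and the care needed for continuity of the product on $R^{(2)}\subseteq R\times R$, where products over an inductive limit must be controlled piecewise.
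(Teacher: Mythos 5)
Your proof is correct. Note that the paper does not supply its own argument for this statement: it is quoted as Proposition 2.1 of \cite{renault:05} and the proof is left to that reference, so there is nothing in the text to compare against line by line. Your write-up is the standard argument and it fills the gap soundly: (i) is definitional, the openness half of (ii) is exactly where local injectivity of the connecting map $p=q_{n,n-1}\circ\cdots\circ q_{m+1,m}$ must be used, and your verification that the inductive limit topology restricts to the subspace topology on each clopen $R_n$ is the point that makes the appeal to Lemma \ref{lemma:inductive_limit_groupoid_etalicity} and the piecewise continuity of the operations legitimate. The only items left implicit are routine: Hausdorffness of $(R,\tau_{\lim})$ (immediate since $\tau_R\subseteq\tau_{\lim}$ and $\tau_R$ is Hausdorff) and the fact that each $R_n=R(q_n)$ is indeed locally compact Hausdorff \'etale, which the paper records in the discussion of relations of the form $R(\psi)$.
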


\begin{remark} \label{rmk:inductive_limit_topology} The inductive limit topology on an AP equivalence relation $R \subseteq X \times X$ is described as follows: $U \subseteq R$ is open if, and only if, $U \cap R_n$ is open for every $n \in \mathbb{N}$. We denote this topology by $\tau_{\lim}$ and its Borel $\sigma$-algebra by $\mathcal{B}_{\lim}$. 
\end{remark}

Because of the different topologies used to define the different notions of Gibbs measures, one important question concerns the compatibility of these theories, in the sense that these measures are defined on the same $\sigma$-algebra. Two distinct topologies for the same equivalence relation are used in this paper, namely, $\tau_{\lim}$ explained in Remark \ref{rmk:inductive_limit_topology}, and the subspace topology $\tau_R$ of the prodiscrete topology on $X\times X$. 

\begin{proposition}\label{prop:topologies_comparison_R} Let $\tau_{R}$ and $\tau_{\lim}$ be, respectively the subspace topology of the product topology and the inductive limit topology both on $R$. Then $\tau_R \subseteq \tau_{\lim}$, and their respective Borel $\sigma$-algebras coincide.   
\end{proposition}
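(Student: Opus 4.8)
The plan is to prove the two assertions separately, keeping in mind that the topological statement is one-directional while the $\sigma$-algebra statement is an equality. The inclusion $\tau_R \subseteq \tau_{\lim}$ is essentially immediate from the description of the inductive limit topology in Remark \ref{rmk:inductive_limit_topology}. Given $U \in \tau_R$, write $U = V \cap R$ for some $V$ open in $X \times X$. Since $R_n \subseteq R$ for every $n$, one has $U \cap R_n = V \cap R_n$, which is open in the subspace topology of $R_n$. As this holds for all $n$, the defining criterion of $\tau_{\lim}$ is met, so $U \in \tau_{\lim}$. In particular $\mathcal{B}_R \subseteq \mathcal{B}_{\lim}$, and it remains to establish the reverse inclusion $\mathcal{B}_{\lim} \subseteq \mathcal{B}_R$.

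The crux is that the reverse \emph{topological} inclusion fails in general (indeed, later in the paper the inclusion $\tau_R \subseteq \tau_{\lim}$ is shown to be strict for the full shift), so I cannot simply compare open sets. Instead I will show that every $\tau_{\lim}$-open set already lies in $\mathcal{B}_R$, which suffices because $\mathcal{B}_R$ is a $\sigma$-algebra and $\mathcal{B}_{\lim} = \sigma(\tau_{\lim})$. The key preliminary observation is that each $R_n$ is closed in $X \times X$: since $X_n$ is Hausdorff and each $q_{n,n-1}$ is a (hence continuous) local homeomorphism, the composite $q_n$ is continuous, so the map $q_n \times q_n : X \times X \to X_n \times X_n$ is continuous and $R_n = (q_n \times q_n)^{-1}(\Delta_{X_n})$ is the preimage of the closed diagonal $\Delta_{X_n}$. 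Consequently $R_n$ is $\tau_R$-closed, and in particular $R_n \in \mathcal{B}_R$.

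Now let $U \in \tau_{\lim}$. By definition $U \cap R_n$ is open in $R_n$, so $U \cap R_n = W_n \cap R_n$ for some $W_n$ open in $X \times X$. Here $W_n \cap R$ is $\tau_R$-open, hence $\tau_R$-Borel, and intersecting with the Borel set $R_n$ gives $W_n \cap R_n = (W_n \cap R) \cap R_n \in \mathcal{B}_R$. Since $R = \bigcup_{n \in \mathbb{N}} R_n$ and $U \subseteq R$, we have $U = \bigcup_{n \in \mathbb{N}} (U \cap R_n) = \bigcup_{n \in \mathbb{N}} (W_n \cap R_n)$, a countable union of $\tau_R$-Borel sets, whence $U \in \mathcal{B}_R$. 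This yields $\mathcal{B}_{\lim} \subseteq \mathcal{B}_R$, and combined with the first part gives $\mathcal{B}_R = \mathcal{B}_{\lim}$.

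The main obstacle, as indicated, is the absence of a topological comparison in the reverse direction; circumventing it rests entirely on two facts, namely the countability of the exhaustion $R = \bigcup_n R_n$ and the closedness of each $R_n$ in $X \times X$. These are precisely what allow the two $\sigma$-algebras to coincide even though the underlying topologies genuinely differ, and I expect the only delicate point in the write-up to be recording carefully that relative openness in $R_n$ lifts to an ambient open set $W_n$ and that the finer-topology membership criterion is used only through the countable family $\{R_n\}_{n\in\mathbb{N}}$.
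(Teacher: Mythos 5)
Your proof is correct and follows essentially the same route as the paper's: the inclusion $\tau_R \subseteq \tau_{\lim}$ via $U \cap R_n = V \cap R_n$, the closedness of each $R_n$ as the preimage of the diagonal of $X_n \times X_n$ under $(q_n, q_n)$, and the representation of a $\tau_{\lim}$-open set as a countable union of sets of the form (open) $\cap$ (closed). The only difference is cosmetic: you spell out the final countable-union step $U = \bigcup_n (U \cap R_n)$ that the paper leaves implicit.
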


\begin{proof} Let $U \in \tau_{R}$. Then, there exists $V$ open subset of $X \times X$ such that $U = V\cap R$, and hence $R_n \cap U = R_n \cap V$ is an open subset of $R_n$ for every $n \in \mathbb{N}$, and then $\tau_R \subseteq \tau_{\lim}$, and also $\mathcal{B}_{R} \subseteq \mathcal{B}_{\lim}$. Now, we claim that $R_n$ is closed in $\tau_{R}$. In fact, in the product topologies of $X \times X$ and $X_n \times X_n$, the map $H:X \times X \to X_n \times X_n$ given by $H(x,y):= (q_n(x),q_n(y))$ is continuous, and since $X_n \times X_n$ is Hausdorff, $\diag(X_n \times X_n)$ is closed, and so is $R_n = H^{-1}(\diag(X_n \times X_n))$, and the claim is proved. For $U \in \tau_{\lim}$, we have that, for each $n \in \mathbb{N}$, that $U \cap R_n$ is open. Then there exists $V_n \subseteq X \times X$ open such that $V_n \cap R_n = U \cap R_n$, the intersection of an open set with a closed set, and therefore it belongs to $\mathcal{B}_{R}$, and so is $U$. Then, $\tau_{\lim} \subseteq \mathcal{B}_{R}$, and therefore $\mathcal{B}_{R} = \mathcal{B}_{\lim}$.
\end{proof}


\subsection{Shift spaces}

In this section, we recall some of the basics about shift spaces, more specifically, subshifts and their essential properties. These objects are of central importance in the study of lattice spin systems in classical equilibrium statistical mechanics and dynamical systems. In this work, we restrict ourselves to multidimensional shifts over finite alphabets, but if the reader is interested in results concerning the case of countably infinite alphabets, see \cite{muir, Muir_paper,sarig:09, BEFR2022}.

In the following, we recall the definition of a full shift and its topological aspects. 
Given a positive integer $d$, let us consider the $d$-dimensional integer lattice $\zd$ endowed with its usual additive group structure.
Furthermore, we let $\|i\| := \max\limits_{1 \leq l \leq d}|i_l|$ for each point $i$ in $\zd$, and 
consider an exhausting sequence $(\Lambda_n)_{n \geq 0}$ of sets in $\zd$, where we call each $\Lambda_n$ a \emph{box} and it is defined by 
\begin{equation}
    \Lambda_n := \{i \in \zd : \|i\| < n\}
\end{equation}
for every non-negative integer $n$.

The \emph{$\zd$-full shift} over a finite alphabet $\mathcal{A}$ is defined as the set $\mathcal{A}^{\zd}$ of all functions from $\zd$ into $\mathcal{A}$. It is straightforward to show that the $\zd$-full shift  endowed with the prodiscrete topology is a compact metrizable space with the metric
\begin{equation}\label{metric}
\rho (x,y) = 
\begin{cases}
2^{-n(x,y)}  & \text{if} \; x \neq y ,\; \text{where} \; n(x,y) := \max\left\{n \geq 0 : x_{\Lambda_n} = y_{\Lambda_n}\right\}, \\
0  & \text{if} \; x=y;
\end{cases}
\end{equation}
where $x_\Lambda$ denotes the restriction of $x$ to a set $\Lambda \subseteq \zd$. 

\begin{remark}\label{box}
Note that, for every positive integer $n$ and for every elements $x$ and $y$ of $\mathcal{A}^{\zd}$, we have
\begin{equation}
\rho(x,y) \leq 2^{-n}\;\, \text{if and only if}\;\, x_{\Lambda_n} = y_{\Lambda_n}.
\end{equation}
\end{remark}

A typical element $x$ of $\mathcal{A}^{\zd}$ whose $i$-th coordinate is $x_i$ will often be expressed as $x = (x_i)_{i \in \zd}$.  Let us consider the action $j \mapsto \sigma^j$ on $\mathcal{A}^{\zd}$ that associates to each $j \in \zd$ a homeomorphism $\sigma^j : \mathcal{A}^{\zd} \rightarrow \mathcal{A}^{\zd}$ defined by $\sigma^j (x) = \left(x_{i+j}\right)_{i \in \zd}$. We refer to each $\sigma^j$ as the \emph{shift or translation by j}.




Now, we present the notion of subshift. A subset $X$ of $\mathcal{A}^{\zd}$ is said to be a \emph{subshift} if it is topologically closed and invariant under translations (i.e., the inclusion
$\sigma^{j}(X) \subseteq X$ holds for each $j$ in $\zd$).



For $\Lambda \subseteq \zd$ finite, we will often refer to an element of $\mathcal{A}^{\Lambda}$ as a \emph{pattern} on $\Lambda$.
Given an arbitrary collection $\mathcal{F}$ of patterns, let us define the set $\mathsf{X}_{\mathcal{F}}$ as the set of all configurations in $\mathcal{A}^{\zd}$ in which the patterns in $\mathcal{F}$ do not appear, that is, 
\begin{equation}\label{subshift}
\mathsf{X}_{\mathcal{F}} := \left\{x \in \mathcal{A}^{\zd}: \sigma^{j}(x)_{\Lambda} \notin \mathcal{F}\; \text{for all}\; j \in \zd\;\text{and for every}\; \Lambda \subseteq \zd\; \textnormal{finite}\right\}.
\end{equation}
It is well-known that given an arbitrary $X \subseteq \mathcal{A}^{\zd}$, the set $X$ is a subshift if and only if it can be written in the form $X = \mathsf{X}_{\mathcal{F}}$ for some
collection $\mathcal{F}$ of patterns (see \cite{Kimura}). In particular, when $\mathcal{F}$ is finite, we call $X$ a \emph{Subshift of Finite Type} (SFT).

In the following, we introduce the \emph{Gibbs relation} or \emph{homoclinic equivalence relation} \cite{schmidt:97} defined as the set of all pairs of elements of the subshift that differ up to a finite number of sites, that is,
\begin{equation}\label{gibbsrel_shift}
\gr := \big\{(x,y) \in X \times X : x_{\Lambda^c} = y_{\Lambda^c}\;\text{for some}\;\Lambda \subseteq \zd\; \text{finite}\big\}.
\end{equation}
In addition, we also introduce the \emph{topological Gibbs relation} which is a subset of $\gr$ defined as follows
\begin{equation}\label{topgibbsrel_shift}
    \gr^{0} := \left\{(x,y) \in X \times X : y=\varphi(x)\;\text{for some}\;\varphi \in \mathcal{F}(X) \right\},
\end{equation}
where $\mathcal{F}(X)$ is the family of homeomorphisms on $X$ that does not change the configurations out of a box $\Lambda_n$ for some $n\in \mathbb{N}$, i.e.,
\begin{eqnarray*}
    \mathcal{F}(X) = \left\{\varphi \in \mathsf{Homeo(}X{\mathsf{)}}: \exists n \in \mathbb{N} \;\text{such that}\; \varphi(x)_{{\Lambda}_{n}^{c}}=x_{{\Lambda}_{n}^{c}}\; \text{for all}\; x \in X\right\}.	
\end{eqnarray*}

As we will see in the following sections, the Gibbs relation $\gr$ fits into the category of countable Borel equivalence relations and plays a central role in the definition of conformal/Gibbs measures. Analogously, the topological Gibbs relation $\gr^0$ is used to define the topological Gibbs measures.

\subsection{Groupoid approach for the Gibbs relations} \label{subsec:Gibbs_relation_properties}



We dedicate this subsection to approach the Gibbs relations $\gr$ and $\gr^0$ for a fixed subshift $X \subseteq \mathcal{A}^{\zd}$ from the point-of-view of the groupoid and AP relations theory. Here, we prove that $\gr$ is, in fact, an AP equivalence relation, and both $\gr$ and $\gr^0$ are amenable groupoids. The latter property ensures that the full groupoid C$^*$-algebra coincides with its reduced one for each groupoid. Moreover, we discuss the differences and similarities between the groupoid inductive limit topology and the subspace topology of the product topology of $X \times X$. Although these topologies do not coincide in general, they generate the same $\sigma$-algebra. 

First, we study the groupoid $\gr$.

\begin{proposition} The Gibbs relation $\gr \subseteq X \times X$ is a approximately proper relation.
\end{proposition}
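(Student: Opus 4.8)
The plan is to exhibit $\gr$ directly as an approximately proper relation in the sense of Definition \ref{def:AP_relation}, using the natural exhaustion of $\zd$ by the boxes $\Lambda_n$. Since $\Lambda_0 = \{i \in \zd : \|i\| < 0\} = \emptyset$, we have $\Lambda_0^c = \zd$, so it is consistent to set $X_0 := X$ and, for each $n \geq 1$,
\begin{equation*}
X_n := \{x_{\Lambda_n^c} : x \in X\} \subseteq \mathcal{A}^{\Lambda_n^c},
\end{equation*}
with $q_n : X \to X_n$ the restriction $q_n(x) = x_{\Lambda_n^c}$ and $q_{n,n-1} : X_{n-1} \to X_n$ the further restriction that forgets the coordinates in the finite annulus $A_n := \Lambda_n \setminus \Lambda_{n-1}$, so that $q_n = q_{n,n-1}\circ \cdots \circ q_{1,0}$. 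The set-theoretic identity required by Definition \ref{def:AP_relation} is then immediate: $q_n(x) = q_n(y)$ means $x_{\Lambda_n^c} = y_{\Lambda_n^c}$, and since every finite $\Lambda \subseteq \zd$ is contained in some box $\Lambda_n$, the relation \eqref{gibbsrel_shift} satisfies $\gr = \{(x,y) : \exists\, n,\ q_n(x)=q_n(y)\}$, with $R_n := \{(x,y)\in X\times X : q_n(x)=q_n(y)\}$ the relation of agreeing off $\Lambda_n$.

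The routine part is to verify that each $X_n$ is Hausdorff and each $q_{n,n-1}$ is a surjective local homeomorphism. Hausdorffness (indeed compactness and metrizability) is inherited by $X_n$ as the continuous image of the compact space $X$ inside the compact metrizable space $\mathcal{A}^{\Lambda_n^c}$. Surjectivity follows by restricting a preimage: if $w = x_{\Lambda_n^c}$ with $x \in X$, then $x_{\Lambda_{n-1}^c} \in X_{n-1}$ maps onto $w$; continuity is clear as the map is a coordinate restriction. For local injectivity I would partition $X_{n-1}$ into the finitely many clopen cylinders
\begin{equation*}
U_p := \{v \in X_{n-1} : v_{A_n} = p\}, \qquad p \in \mathcal{A}^{A_n},
\end{equation*}
indexed by patterns on the finite annulus; on each $U_p$ the map $q_{n,n-1}$ is injective, since an element of $U_p$ is determined by its image on $\Lambda_n^c$ together with the fixed values $p$ on $A_n$. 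As $U_p$ is compact and $X_n$ Hausdorff, $q_{n,n-1}|_{U_p}$ is a homeomorphism onto its image
\begin{equation*}
Y_p := q_{n,n-1}(U_p) = \{w \in X_n : (p,w) \in X_{n-1}\},
\end{equation*}
where $(p,w)$ is the configuration on $\Lambda_{n-1}^c$ equal to $p$ on $A_n$ and to $w$ on $\Lambda_n^c$.

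The main obstacle is the remaining openness requirement: to conclude that $q_{n,n-1}$ is a local homeomorphism I must show that each nonempty $Y_p$ is \emph{open} in $X_n$, equivalently that $q_{n,n-1}$ is an open map. Compactness gives only that $Y_p$ is closed (the image of the compact $U_p$), so the real content is that the condition ``the annulus pattern $p$ can be inserted into $w$'' is locally decidable, hence clopen, as a condition on $w$. Reducing a general basic open subset of $X_{n-1}$ to sets of the form $\{v \in U_p : v_F = q\}$ with $F \subseteq \Lambda_n^c$ finite and $q \in \mathcal{A}^F$, I would show that its image equals $\{w \in Y_p : w_F = q\}$, so that openness of $q_{n,n-1}$ reduces exactly to openness of the sets $Y_p$. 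This is where the structure of $X$ must be used: for an SFT the admissibility of inserting $p$ is governed by finitely many coordinates of $w$ near $A_n$, so $Y_p$ is a finite union of cylinders and is clopen, which yields the local homeomorphism property and hence, by Definition \ref{def:AP_relation} and Proposition \ref{prop:properties_AP_relation}, that $\gr$ is approximately proper with an \'etale, locally compact inductive-limit topology. For a general subshift this openness is the delicate point; I expect the argument to hinge on a uniform, compactness-based control of which finite annulus patterns extend, replacing the global extension condition defining $Y_p$ by a finite window around $A_n$, and this is the step I would expect to require the most care.
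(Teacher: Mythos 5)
Your construction is the same as the paper's (restriction maps $q_{n,n-1}\colon X_{n-1}\to X_n$ onto the complements of the boxes), and every step you actually carry out matches what the paper verifies: the set-theoretic identity, Hausdorffness of the $X_n$, surjectivity, continuity, injectivity on the clopen cylinders $U_p$ over annulus patterns, and the fact that $q_{n,n-1}|_{U_p}$ is a homeomorphism onto its image (you get this from compactness of $U_p$; the paper runs a direct sequential argument for continuity of the inverse — both are fine). The difference is that you then isolate one further requirement, openness of $Y_p=q_{n,n-1}(U_p)$ in $X_n$, prove it only under the SFT hypothesis, and explicitly leave the general case as an expectation. As a proof of the proposition as stated — for an arbitrary subshift — your argument is therefore incomplete: the last paragraph is a plan, not a proof. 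For comparison, the paper's own proof stops exactly where your completed part stops; it never addresses openness of the image at all, so it is implicitly reading ``local homeomorphism'' as ``homeomorphism of some open neighborhood onto its image with the subspace topology.'' Under that weaker reading your compactness argument already finishes the job and your ``main obstacle'' is not part of the definition being used.

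You should know, however, that the obstacle you flagged is not merely delicate: the openness you are after genuinely fails for some non-SFT subshifts, so no ``uniform compactness-based control'' can supply it. Take $d=1$, $\mathcal{A}=\{0,1\}$, and $X=\{x: x_i=1 \text{ for at most one } i\}$, a subshift that is not of finite type. Write $e_i$ for the configuration with a single $1$ at site $i$ and $\mathbf{0}$ for the zero configuration. For $n=1$ the cylinder $U_p$ with $p=1$ on $A_1=\{0\}$ is the open singleton $\{e_0\}$, and its image in $X_1$ is the single point $\mathbf{0}_{\Lambda_1^c}$, which is an accumulation point of $X_1$ (the restrictions of $e_j$ converge to it as $|j|\to\infty$) and hence not open; one checks further that $e_0$ admits no open neighborhood at all that is mapped homeomorphically onto an open subset of $X_1$. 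So with the strong (open-image) notion of local homeomorphism, the step you postponed cannot be completed, and the dichotomy is: either the weaker notion is intended (in which case delete your last paragraph and you are done, matching the paper), or the proposition needs an additional hypothesis such as $X$ being an SFT, which is the only case your argument covers.
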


\begin{proof} Set $X_0 = X$, and for each $n \in \mathbb{N}$, we take $X_n$ being the projections on each box complement, that is, $X_n:=\left \{x_{\Lambda_n^c}: x \in X\right\}$ endowed with the subspace topology of the prodiscrete topology on $\mathcal{A}^{\Lambda_n^c}$ and observe that each $X_n$ is a metric space, an hence they are Hausdorff spaces. Define $q_{n,n-1}:X_{n-1}\to X_n$ being the projection map to $\Lambda_n^c$, that is, $q_{n,n-1}(x_{\Lambda_{n-1}^c}):=x_{\Lambda_n^c}$. These maps are surjective, since the definition of $X_n$ gives that, for every $y \in X_n$, there exists $x \in X$ such that $x_{\Lambda_n^c} = y$ and $z = x_{\Lambda_{n-1}^c} \in X_{n-1}$, and then we have $q_{n,n-1}(z) = y$. Now, we show that $q_{n,n-1}$ is a local homeomorphism. Let $x \in X_{n-1}$ and the set $O = \bigcap_{i\in \Lambda_n\setminus\Lambda_{n-1}} \pi_{i,n-1}^{-1}(x_i)$, where $\pi_{i,n-1}:X_{n-1}\to \mathcal{A}$ is the canonical projection on the coordinate $i \in \Lambda_{n-1}^c$. We have that $O$ is an open set of $X_{n-1}$ containing $x$. We show that the restriction $q_{n,n-1}\vert_O:O \to q_{n,n-1}(O)$ is a homeomorphism. In fact, given $y,z \in O$, we have that $y_{\Lambda_{n-1}^c} = z_{\Lambda_{n-1}^c} = x_{\Lambda_{n-1}^c}$, and, if $q_{n,n-1}(y) = q_{n,n-1}(z)$, then $y_{\Lambda_n^c} = z_{\Lambda_n^c}$, and so we obtain $y = y_{\Lambda_{n-1}^c} = z_{\Lambda_{n-1}^c} = z$, and we conclude that $q_{n,n-1}\vert_O$ is injective. It is straightforward that $q_{n,n-1}$ is continuous, so is $q_{n,n-1}\vert_O$. Let a convergent sequence $(x^m)_{m \in \mathbb{N}}$ in $q_{n,n-1}(O)$ such that $x^m \to x \in q_{n,n-1}(O)$ as $m \to \infty$. Then, for each $j \in \Lambda_n^c$ $x^m_j \to x_j$ and then $q_{n,n-1}\vert_O^{-1}(x^m)_j \to q_{n,n-1}\vert_O(x)_j$, and since $q_{n,n-1}\vert_O^{-1}(x^m),q_{n,n-1}\vert_O^{-1}(x) \in O$, we have $q_{n,n-1}\vert_O^{-1}(x^m)_i = q_{n,n-1}\vert_O^{-1}(x)_i$ for every $i \in \Lambda_{n}\setminus\Lambda_{n-1}$, and so $q_{n,n-1}\vert_O^{-1}(x^m) \to q_{n,n-1}\vert_O^{-1}(x)$. Therefore, $q_{n-1,n}$ is a local homeomorphism for every $n \in \mathbb{N}$. It is straightforward to check that
\begin{equation*}
    \gr = \left\{(x,y) \in X \times X: \exists n \in \mathbb{N} \text{ s.t. } q_n(x) = q_n(y)\right\},
\end{equation*}
where $q_n = q_{n,n-1}\circ q_{n-1,n-2}\circ \cdots \circ q_{1,0}$.
\end{proof}

\begin{remark} \label{rmk:maps_q_n_and_compactness_of_I_n}In the proof above, $q_n:X \to X_n$ is the restriction of $X$ on the lattice $\Lambda_n^c$. In addition, we have that
\begin{equation*}
    \gr_n = \left\{(x,y) \in X \times X: q_n(x) = q_n(y)\right\} = \left\{(x,y) \in X \times X: x_{\Lambda_n^c} = y_{\Lambda_n^c}\right\},
\end{equation*}
for each $n \in \mathbb{N}$. Moreover, each $\gr_n$ is compact in both topologies $\tau_\gr$ and $\tau_{\lim}$, because the topology $\tau_{\lim}$ restricted to $\gr_n$ coincides with the topology $\tau_\gr$ and $\gr_n$ closed in $X\times X$, a compact space.
\end{remark}





By Proposition \ref{prop:topologies_comparison_R}, we have that $\tau_{\gr} \subseteq \tau_{\lim}$. In general, these topologies do not coincide, as we show in the next example.

\begin{example} Let $X = \mathcal{A}^{\zd}$, where $|\mathcal{A}| \geq 2$. For fixed $n \in \mathbb{N}$, we have by Proposition \ref{prop:properties_AP_relation} $(i)$ that, for every $m \leq n$, $\gr_n \cap \gr_m = \gr_m$ and therefore it is open in $\gr_m$. Similarly, for $m > n$ we have $\gr_n \cap \gr_m = \gr_n$, and by Proposition \ref{prop:properties_AP_relation} $(iii)$, $\gr_n$ is an open subset of $\gr_m$, and therefore $\gr_n \in \tau_{\lim}$.\footnote{Until this part the result is general.} We claim that $\gr_n \notin \tau_{X\times X}$. In fact, for fixed $n \in \mathbb{N}$, let $(x^m, y^m)_{m \in \mathbb{N}}$ be a sequence in $\gr_n^c$ constructed as follows: choose $x_i^m \neq y_i^m$ for every $i \in \Lambda^c_m$, and $x_i^m = y_i^m = x_i$ otherwise. Then $x^m$ and $y^m$ converge to $x$, and hence $(x^m,y^m)$ converges to $(x,x) \in \gr_n$, and then $\gr_n^c$ is not closed in $X \times X$, that is, $\gr_n \notin \tau_{X\times X}$.
\end{example}

Now we prove that the groupoid $\gr$ is amenable, and therefore its full and the reduced C$^*$-algebras coincide.

\begin{theorem} $\gr$ is an amenable groupoid. Hence, the full and the reduced C$^*$-algebras coincide.
\end{theorem}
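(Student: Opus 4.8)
The plan is to recognize $\gr$ as an AF groupoid and then string together two standard permanence facts: that AF groupoids are amenable, and that for amenable (Hausdorff, second-countable, locally compact) \'etale groupoids the full and reduced C$^*$-algebras coincide. The heavy structural work has already been done in establishing that $\gr$ is approximately proper, so the proof should be short.

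First I would invoke the approximating sequence $(\gr_n)_{n \in \mathbb{N}}$ from Remark \ref{rmk:maps_q_n_and_compactness_of_I_n}. The space $X \subseteq \mathcal{A}^{\zd}$ is zero-dimensional, since the cylinder sets form a basis of clopen sets for the prodiscrete topology; hence each $\gr_n = R(q_n)$ is an elementary groupoid in the sense of the AF definition. By Proposition \ref{prop:properties_AP_relation}$(ii)$ and the description of $\tau_{\lim}$, each $\gr_n$ is clopen in $\gr$, and all the $\gr_n$ share the common unit space $X$. Therefore $\gr = \bigcup_{n} \gr_n$ is an increasing union of open elementary subgroupoids with the same unit space, i.e. $\gr$ is an AF groupoid (its \'etalicity being guaranteed by Proposition \ref{prop:properties_AP_relation}$(iii)$ / Lemma \ref{lemma:inductive_limit_groupoid_etalicity}).

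Next I would establish amenability. Each $\gr_n$ is compact (Remark \ref{rmk:maps_q_n_and_compactness_of_I_n}) and principal, and its fibers $\gr_n^x = \{(y,x): y_{\Lambda_n^c} = x_{\Lambda_n^c}\}$ are finite, of cardinality at most $|\mathcal{A}|^{|\Lambda_n|}$, because elements of a fiber differ only on the finite box $\Lambda_n$ over the finite alphabet $\mathcal{A}$. Such proper (indeed compact) \'etale groupoids are amenable, and amenability is preserved under increasing unions of open subgroupoids; hence $\gr$ is amenable. Equivalently, one simply cites that every AF groupoid is amenable \cite{renaultamenable, Renault:80}. If one prefers a self-contained argument, the approximating functions $(f_i)$ of the definition of topological amenability can be produced explicitly by normalizing the indicator functions $\mathbbm{1}_{\gr_n}$ along the fibers, which are continuous on each $\gr_m$ and compactly supported in $\tau_{\lim}$.

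Finally, since $\gr$ is a Hausdorff, second-countable, locally compact, \'etale and amenable groupoid, its full and reduced C$^*$-algebras coincide by \cite[Theorem 10.1.4]{SSW2020}. I expect the only genuinely delicate point to be the transfer of amenability to the inductive limit: one must use the clopen structure of the $\gr_n$ inside $\gr$ so that the stagewise approximating functions can be transported to $C_c(\gr)$, whereas the finiteness of the fibers makes the amenability of each stage immediate. This is precisely the easy counterpart to the more subtle analysis required for $\gr^0$, whose \'etale structure is not transparent and which instead calls for the $\mathcal{G}$-space machinery of Proposition \ref{prop:amenability_via_G_spaces}.
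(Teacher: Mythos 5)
Your proof is correct, and it reaches amenability by a genuinely different route than the paper once the AF structure is in place. Both arguments begin identically: each $\gr_n = R(q_n)$ is elementary, the $\gr_n$ are open in the inductive limit topology and share the unit space $X$, so $\gr$ is an AF groupoid. From there you stay at the groupoid level: each stage $\gr_n$ is a proper (indeed compact, principal, finite-fibered) \'etale groupoid, hence amenable, and amenability passes to an increasing union of open subgroupoids --- equivalently, you just cite that AF groupoids are amenable. The paper instead detours through operator algebras: it invokes Proposition 1.15 of Renault to conclude that $C^*(\gr)$ is an AF C$^*$-algebra, deduces nuclearity from the fact that finite-dimensional algebras are nuclear and nuclearity passes to inductive limits (Theorems 6.3.9 and 6.3.10 of Murphy), and then uses the converse implication ``$C^*_r(\mathcal{G})$ nuclear $\Rightarrow$ $\mathcal{G}$ amenable'' (Theorem 10.1.5 of Sims--Szab\'o--Williams) to come back to the groupoid. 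Your approach is the more direct and arguably more standard one, and it isolates exactly where the work lies (properness of the stages plus permanence under open increasing unions); the paper's approach buys the intermediate fact that $C^*(\gr)$ is an AF algebra, which is of independent interest. One small caution on your optional self-contained construction: the continuity of the normalized indicators $g \mapsto \mathbbm{1}_{\gr_n}(g)\,\vert\gr_n^{r(g)}\vert^{-1/2}$ is not completely immediate for a general subshift; it does hold, but one should justify that $x \mapsto \vert\gr_n^x\vert$ is continuous, e.g.\ by partitioning the compact open set $\gr_n$ into finitely many compact open bisections (using that $\gr$ is ample) and writing $\vert\gr_n^x\vert$ as a finite sum of indicators of their clopen ranges. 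Since that construction is offered only as an alternative to the citation, this does not affect the validity of your proof.
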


\begin{proof} For every $n \in \mathbb{N}$, the groupoid $\gr_n$ is an elementary groupoid. In fact, $X$ and $X_n$ are compact Hausdorff topological spaces, and $q_n: X \to X_n$ is a local homeomorphism, and it is straightforward that $\gr_n = R(q_n)$, and therefore $\gr_n$ is elementary. Proposition \ref{prop:properties_AP_relation} $(iii)$ gives that $\gr_n$ is open for all $n \in \mathbb{N}$, and hence $R_n$ is an AF groupoid. By Proposition 1.15 of \cite{Renault:80}, $C^*(\gr)$ is an AF C$^*$-algebra, and hence it is an inductive limit of an increasing sequence of finite-dimensional C$^*$-algebras $\{U_n\}_{n \in \mathbb{N}}$. By Theorem 6.3.9 of \cite{Murphy:90}, each $U_i$ is nuclear, and so is $C^*(\gr)$ because of Theorem 6.3.10 of \cite{Murphy:90}. By Theorem 10.1.5 of \cite{SSW2020}, we conclude that $\gr$ is amenable and, therefore $C^*(\gr) \simeq C^*_r(\gr)$.
\end{proof}

Now, we study the groupoid $\gr^0$, which is a subgroupoid of $\gr$. For each $n \in \mathbb{N}$, we define
\begin{eqnarray*}
    \mathcal{F}_n(X) := \left\{\varphi \in \mathsf{Homeo(}X{\mathsf{)}}: \varphi(x)_{{\Lambda}_{n}^{c}}=x_{{\Lambda}_{n}^{c}}\; \text{for all}\; x \in X\right\},	
\end{eqnarray*}
and
\begin{equation*}
    \gr^{0}_n := \left\{(x,y) \in X \times X : y=\varphi(x)\;\text{for some}\;\varphi \in \mathcal{F}_n(X) \right\}.
\end{equation*}
Observe that $\mathcal{F}_n(X) \subseteq \mathcal{F}_{n+1}(X)$ for every $n \in \mathbb{N}$, and then $\gr^{0}_n \subseteq \gr^{0}_{n+1}$. In addition, each $\gr^{0}_n$ is a subgroupoid of $\gr^{0}$, and $(\gr^{0}_n)^{(0)} = (\gr^{0})^{(0)} = X$. Furthermore, the following identities hold:
\begin{equation*}
    \mathcal{F}(X) = \bigcup_{n \in \mathbb{N}} \mathcal{F}_n(X) \quad \text{and} \quad \gr^{0} = \bigcup_{n \in \mathbb{N}} \gr^{0}_n.
\end{equation*}
We endow $\gr^{0}$ with the inductive limit topology, where each $\gr^{0}_n$ is endowed with the subspace topology of $\gr$. Equivalently, the topology of $\gr^{0}$ is the subspace topology of $\gr$ when this is endowed with the topology $\tau_{\lim}$.

\begin{lemma}\label{lemma:gr0_n_is_clopen_in_gr_n} $\gr^0_n$ is a clopen subgroupoid of $\gr_n$.
\end{lemma}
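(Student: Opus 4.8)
The plan is to reduce the statement to a finite, combinatorial check on cylinder sets and then to analyse precisely when the interior of the box $\Lambda_n$ can be ``swapped'' by an element of $\mathcal{F}_n(X)$. That $\gr^{0}_n$ is a \emph{subgroupoid} is immediate, since $\mathcal{F}_n(X)$ is a group under composition ($\mathrm{id}$ gives the units, $\varphi^{-1}$ the inverses, $\psi\circ\varphi$ the products); so the content is topological. I would use that $\gr_n$ is compact (Remark \ref{rmk:maps_q_n_and_compactness_of_I_n}) and decomposes into a \emph{finite} disjoint union of clopen cylinder pieces,
\begin{equation*}
  \gr_n=\bigsqcup_{a,b\in\mathcal{A}^{\Lambda_n}}C_{a,b},\qquad C_{a,b}:=\big\{(x,y)\in\gr_n:\ x_{\Lambda_n}=a,\ y_{\Lambda_n}=b\big\},
\end{equation*}
each $C_{a,b}$ being clopen as it is cut out by fixing finitely many coordinates. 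Writing $[c]_n:=\{z\in X:z_{\Lambda_n}=c\}$ and $S_c:=\{\xi\in X_n:c\xi\in X\}=q_n([c]_n)$, the map $(x,y)\mapsto x_{\Lambda_n^c}$ identifies $C_{a,b}$ homeomorphically with the closed set $X_n^{a,b}:=S_a\cap S_b$ (a continuous bijection from a compact space onto a Hausdorff one). Thus it suffices to show $\gr^{0}_n\cap C_{a,b}$ is clopen in $C_{a,b}$ for each pair $(a,b)$.

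The heart of the argument is a \emph{local-swap characterisation}: I claim $(a\xi,b\xi)\in\gr^{0}_n$ if and only if $\xi$ has a neighbourhood $V$ that is relatively clopen in both $S_a$ and $S_b$ and satisfies $V\subseteq X_n^{a,b}$. For the ``if'' direction I would build the involution $\psi$ that interchanges the interior patterns $a$ and $b$ on $q_n^{-1}(V)$ and is the identity elsewhere; because $V$ is clopen in $S_a$ and in $S_b$, the moved sets $\{a\eta:\eta\in V\}$ and $\{b\eta:\eta\in V\}$ are clopen in $X$, whence $\psi\in\mathcal{F}_n(X)$ and $\psi(a\xi)=b\xi$. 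For the ``only if'' direction I would use that every $\varphi\in\mathcal{F}_n(X)$ fixes the exterior, so $q_n\circ\varphi=q_n$, and that $z\mapsto\varphi(z)_{\Lambda_n}$ is locally constant: on a basic cylinder neighbourhood $N\subseteq[a]_n$ of $a\xi$ chosen so that $\varphi(N)\subseteq[b]_n$, the common image $V:=q_n(N)=q_n(\varphi(N))$ is clopen in $S_a$ (via $N\subseteq[a]_n$) and in $S_b$ (via $\varphi(N)\subseteq[b]_n$), contains $\xi$, and lies in $X_n^{a,b}$.

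Granting this characterisation, \emph{openness} of $\gr^{0}_n\cap C_{a,b}$ comes essentially for free: the witnessing neighbourhood $V$ for $(a\xi,b\xi)$ also witnesses membership of $(a\xi',b\xi')$ for \emph{every} $\xi'\in V$, so each point of $\gr^{0}_n\cap C_{a,b}$ has a $C_{a,b}$-open neighbourhood contained in $\gr^{0}_n$. I expect the \textbf{main obstacle} to be \emph{closedness}, i.e.\ showing that the set of swappable exteriors is relatively closed in $X_n^{a,b}$. The delicate point is that the witnessing neighbourhoods $V$ may shrink as one approaches a boundary exterior, so no single swap need survive in the limit; the obstruction to swapping at $\xi$ is the presence of arbitrarily close exteriors admitting only one of the two patterns $a,b$. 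I would attempt to close this gap by showing the complement is open, using the compactness of $\gr_n$ and tracking how such obstructions propagate under $q_n$. This is precisely the step where structure of the subshift beyond compactness and zero-dimensionality must be invoked, and I would scrutinise it carefully before claiming the full ``clopen'' conclusion.
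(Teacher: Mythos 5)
Your set-up is correct as far as it goes: the finite clopen decomposition of $\gr_n$ into the pieces $C_{a,b}$, the identification of $C_{a,b}$ with $S_a\cap S_b$, the local-swap characterisation of membership in $\gr^0_n$, and the resulting openness of $\gr^0_n$ in $\gr_n$ are all sound. But the half you yourself flag as the main obstacle --- closedness --- is exactly the half you do not prove, so the proposal is incomplete. Moreover, under the definition of $\mathcal{F}_n(X)$ you are implicitly using (all homeomorphisms fixing every coordinate outside $\Lambda_n$, which is the definition displayed just before the lemma), your own characterisation identifies $\gr^0_n\cap C_{a,b}$ with the union of all subsets of $S_a\cap S_b$ that are clopen in $S_a\cup S_b$; such a union is open but there is no general reason for it to be closed, so your hesitation is well placed: the statement cannot be recovered from compactness and zero-dimensionality alone along your route.

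The ingredient you are missing, and the one the paper's proof turns on, is that $\mathcal{F}_n(X)$ is \emph{finite}. The paper writes $\gr^0_n=\bigcup_{\varphi\in\mathcal{F}_n(X)}\mathsf{gr}(\varphi)$ and obtains closedness at once as a finite union of closed graphs; its openness argument (showing $B_n=\gr_n\setminus\gr^0_n$ is sequentially closed) again quantifies over the finitely many $\varphi\in\mathcal{F}_n(X)$. Be aware, though, that finiteness fails for the definition as literally displayed: already on the full shift $\{0,1\}^{\mathbb{Z}}$ the involutions $\varphi(x)_0=x_0+h(x_{\Lambda_1^c})\bmod 2$, with $h$ ranging over continuous $\{0,1\}$-valued functions of the exterior, give infinitely many elements of $\mathcal{F}_1(X)$. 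Finiteness does hold under the reading used in the proof of Theorem \ref{DLR1}, where $\mathcal{F}_n(X)$ consists of the homeomorphisms satisfying both conditions of Lemma \ref{satanasjr}; condition (b) there forces $\varphi(x)_{\Lambda_n}$ to depend only on $x_{\Lambda_n}$, so $\varphi$ is determined by a map $\mathcal{A}^{\Lambda_n}\to\mathcal{A}^{\Lambda_n}$ and $|\mathcal{F}_n(X)|\le|\mathcal{A}^{\Lambda_n}|^{|\mathcal{A}^{\Lambda_n}|}$. With that reading your swap characterisation collapses ($\gr^0_n\cap C_{a,b}$ is either empty or all of $C_{a,b}$, since the existence of a swap no longer depends on the exterior $\xi$) and both halves of the clopenness claim become easy; without it, the lemma itself is in doubt. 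Either way, the missing step is not a topological refinement of your argument but this finiteness/rigidity of $\mathcal{F}_n(X)$.
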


\begin{proof} For each $n \in \mathbb{N}$, $\gr^0_n \subseteq \gr_n$, and we have
\begin{equation}\label{eq:I_0_n_union_of_graphs}
    \gr^0_n = \bigcup_{\varphi \in \mathcal{F}_n(X)} \mathsf{gr}(\varphi),
\end{equation}
then $\gr^0_n$ is a finite union of closed sets; hence, it is closed in the subspace of $X \times X$, and then $\gr^0_n$ is a closed subgroupoid of $\gr_n$. Now we prove that $\gr^0_n$ is an open subset of $\gr_n$. Define $B_n := \gr_n \setminus \gr^0_n$. In other words,
\begin{equation*}
    B_n = \left\{(x,y) \in X \times X: x_{\Lambda_n^c} = y_{\Lambda_n^c} \text{ and } \varphi(x) \neq y \text{ for every } \varphi \in \mathcal{F}_n(X) \right\}.
\end{equation*}
Let $\left\{(x^m,y^m)\right\}_{m \in \mathbb{N}}$ be a sequence in $B_n$ that converges to some point $(x,y) \in X \times X$. Since $\gr_n$ is closed in $X\times X$, we have that $(x,y) \in \gr_n$. Since $\mathcal{A}$, $\Lambda_n$, and $\mathcal{F}_n(X)$ are finite sets, we have that there exists $M > 0$ such that, for every $m > M$, we have $x^m_{\Lambda_n} = x_{\Lambda_n}$, $y^m_{\Lambda_n} = y_{\Lambda_n}$, and $\varphi(x^m)_{\Lambda_n} = \varphi(x)_{\Lambda_n}$. Suppose that there exists $\varphi \in \mathcal{F}_n(X)$ satisfying $\varphi(x) = y$. Then, for $m > M$, we have
\begin{equation*}
    \varphi(x^m)_{\Lambda_n} = \varphi(x)_{\Lambda_n} = y_{\Lambda_n} = y^m_{\Lambda_n}.
\end{equation*}
Since $(x^m,y^m) \in \gr_n$ and $\varphi \in \mathcal{F}_n(X)$, we have
\begin{equation*}
    \varphi(x^m)_{\Lambda_n^c} = x^m_{\Lambda_n^c} = y^m_{\Lambda_n^c},
\end{equation*}
and hence $\varphi(x^m) = y^m$, a contradiction since $(x^m,y^m) \notin \gr^0_n$. We conclude that $B_n$ is closed, and therefore $\gr^0_n$ is open.
\end{proof}

\begin{proposition} \label{prop:I_0_etale} $\gr^0$ is an \'etale groupoid.    
\end{proposition}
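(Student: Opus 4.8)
The plan is to establish that $\gr^0$ is étale by combining the inductive limit structure with the clopen-subgroupoid result just proved. The key observation is that $\gr^0$ is the inductive limit of the increasing sequence $\{\gr^0_n\}_{n \in \mathbb{N}}$, and by Lemma~\ref{lemma:gr0_n_is_clopen_in_gr_n}, each $\gr^0_n$ is a clopen subgroupoid of $\gr_n$. Since we already know from the preceding theorem (and Proposition~\ref{prop:properties_AP_relation}) that each $\gr_n$ is a locally compact Hausdorff ample groupoid with unit space $X = \gr_n^{(0)} = (\gr^0_n)^{(0)}$, and $X$ is separable metrizable (being a subshift of a compact metrizable space), Lemma~\ref{lemma:clopen_subgroupoid_of_etale_groupoid_is_etale_same_unity_set} applies directly to conclude that each $\gr^0_n$ is itself étale.

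With each $\gr^0_n$ shown to be étale and locally compact, the second step is to invoke Lemma~\ref{lemma:inductive_limit_groupoid_etalicity}. First I would verify the three defining conditions of an inductive limit groupoid: condition~(1) is the identity $\gr^0 = \bigcup_{n} \gr^0_n$ established just before Lemma~\ref{lemma:gr0_n_is_clopen_in_gr_n}; condition~(2) is the fact that $(\gr^0_n)^{(0)} = X$ for all $n$, also noted there; and condition~(3) holds by construction, since $\gr^0$ was explicitly endowed with the inductive limit topology relative to $\{\gr^0_n\}$. Once these are confirmed, Lemma~\ref{lemma:inductive_limit_groupoid_etalicity} immediately yields that $\gr^0$ is étale.

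The main subtlety, which explains the remark in the introduction that this proof is ``more delicate,'' is ensuring the hypotheses of Lemma~\ref{lemma:clopen_subgroupoid_of_etale_groupoid_is_etale_same_unity_set} are genuinely met at the level of each $\gr^0_n$ — in particular that $\gr_n$ is ample (i.e.\ that $X$ is zero-dimensional, which follows since $X$ inherits a basis of clopen cylinder sets from the prodiscrete topology) and that $\gr^0_n$ is clopen inside $\gr_n$ rather than merely inside $X \times X$. The clopenness is exactly the content of Lemma~\ref{lemma:gr0_n_is_clopen_in_gr_n}, so the work has already been front-loaded into that lemma. The remaining point to handle carefully is that local compactness of each $\gr^0_n$ is preserved, which follows because a clopen subset of a locally compact space is locally compact. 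Given all the preparatory lemmas, the proof itself should be short: check the inductive limit axioms, cite Lemma~\ref{lemma:clopen_subgroupoid_of_etale_groupoid_is_etale_same_unity_set} to get étalicity of each level, and apply Lemma~\ref{lemma:inductive_limit_groupoid_etalicity} to pass to the limit.
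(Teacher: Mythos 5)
Your proposal is correct and follows essentially the same route as the paper: Lemma~\ref{lemma:gr0_n_is_clopen_in_gr_n} plus Lemma~\ref{lemma:clopen_subgroupoid_of_etale_groupoid_is_etale_same_unity_set} to get \'etalicity of each $\gr^0_n$, then Lemma~\ref{lemma:inductive_limit_groupoid_etalicity} to pass to the inductive limit. The only cosmetic difference is how local compactness of each level is justified --- the paper notes that the two topologies coincide on $\gr^0_n$ and that $\gr^0_n$ is in fact compact, while you argue via clopenness in a locally compact space; both suffice.
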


\begin{proof} By Lemma \ref{lemma:gr0_n_is_clopen_in_gr_n}, for each $n \in \mathbb{N}$, the subgroupoid $\gr^0_n$ is clopen in $\gr_n$. Since each $\gr_n$ is \'etale and $(\gr^0_n)^{(0)} = \gr_n^{(0)} = X$, we have by Lemma \ref{lemma:clopen_subgroupoid_of_etale_groupoid_is_etale_same_unity_set} that $\gr^{0}_n$ is \'etale. By definition, the topologies $\tau_{\lim}$ and the prodiscrete topology coincide as subspace topologies on $\gr^0_n$, and hence each $\gr^0_n$ is a compact subset of $\gr^0$. Then, by Lemma \ref{lemma:inductive_limit_groupoid_etalicity} we conclude that $\gr^0$ is \'etale.
\end{proof}

\begin{remark} \label{rmk:I_0_n_compact} Observe that \eqref{eq:I_0_n_union_of_graphs} implies on the compactness of $\gr^0_n$, since a finite union of closed subsets of $X \times X$. By similar argument in Remark \ref{rmk:maps_q_n_and_compactness_of_I_n}, the compactness also holds in $\tau_{\lim}$.    
\end{remark}

\begin{lemma}\label{lemma:I_0_n+1_is_I_0_n_space} $\gr^{0}_{n+1}$ is a proper $\gr^{0}_n$-space.
\end{lemma}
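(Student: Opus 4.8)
The plan is to exhibit $\gr^{0}_{n+1}$ as a left $\gr^{0}_{n}$-space via left multiplication inside the ambient groupoid, and then to deduce properness of the associated map $\Theta$ from the compactness of all the spaces involved, which is guaranteed by Remark \ref{rmk:I_0_n_compact}. Concretely, I would take as range map $r:\gr^{0}_{n+1}\to X$ the range map of the groupoid $\gr^{0}_{n+1}$, namely $r(w,z)=w$ (after identifying $(\gr^{0}_{n+1})^{(0)}$ with $X$), and define the action $\cdot:\gr^{0}_{n}\ast\gr^{0}_{n+1}\to\gr^{0}_{n+1}$ by $(x,y)\cdot(y,z):=(x,z)$. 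Here $\gr^{0}_{n}\ast\gr^{0}_{n+1}=\{((x,y),(w,z)):y=w\}$, and the product $(x,z)$ indeed lies in $\gr^{0}_{n+1}$ because $\gr^{0}_{n}\subseteq\gr^{0}_{n+1}$ and $\gr^{0}_{n+1}$ is a groupoid, hence closed under composition of composable pairs.

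Next I would verify the three axioms of Definition \ref{def:G_spaces}. Axiom $(a)$, $r((x,y)\cdot(y,z))=r(x,z)=x=r(x,y)$, and axiom $(b)$, $r(w,z)\cdot(w,z)=(w,w)\cdot(w,z)=(w,z)$, are immediate from the range and unit structure, while axiom $(c)$ is just the associativity of the groupoid product. Continuity of both $r$ and the action is clear, since $r$ is the restriction of the coordinate projection $(w,z)\mapsto w$ and the action is the restriction of $((x,y),(y,z))\mapsto(x,z)$, both continuous as maps on subspaces of $X\times X$ and its products. Finally $\gr^{0}_{n+1}$ is compact Hausdorff by Remark \ref{rmk:I_0_n_compact}, hence locally compact Hausdorff, so it qualifies as a $\gr^{0}_{n}$-space.

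It then remains to prove properness of $\Theta:\gr^{0}_{n}\ast\gr^{0}_{n+1}\to\gr^{0}_{n+1}\times\gr^{0}_{n+1}$, $\Theta(g,p)=(g\cdot p,p)$. The key point, and essentially the only place where the specific structure enters, is that the domain $\gr^{0}_{n}\ast\gr^{0}_{n+1}$ is compact: it is the subset of the compact space $\gr^{0}_{n}\times\gr^{0}_{n+1}$ cut out by the condition $s(g)=r(p)$, which is closed because it is the preimage of $\diag(X\times X)$ under the continuous map $(g,p)\mapsto(s(g),r(p))$ into the Hausdorff space $X\times X$. Given this, properness is automatic: $\Theta$ is continuous, its codomain $\gr^{0}_{n+1}\times\gr^{0}_{n+1}$ is Hausdorff, so any compact $K$ in the codomain is closed, whence $\Theta^{-1}(K)$ is a closed subset of the compact domain and therefore compact. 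Thus $\gr^{0}_{n+1}$ is a proper $\gr^{0}_{n}$-space. I do not anticipate a genuine obstacle here; the only care needed is to identify the action correctly and to invoke compactness (Remark \ref{rmk:I_0_n_compact}), after which the properness of $\Theta$ reduces to the standard fact that a continuous map from a compact space into a Hausdorff space is proper.
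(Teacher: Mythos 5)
Your proposal is correct and follows essentially the same route as the paper: left multiplication as the action, well-definedness from $\gr^{0}_{n}\subseteq\gr^{0}_{n+1}$, and properness deduced from compactness of $\gr^{0}_{n}\ast\gr^{0}_{n+1}$ together with the Hausdorff property of the codomain. If anything, your version is slightly cleaner, since you verify properness of the map $\Theta$ from Definition \ref{def:G_spaces} directly (the paper phrases this step in terms of the action map $\cdot$) and you obtain closedness of $\gr^{0}_{n}\ast\gr^{0}_{n+1}$ as a preimage of the diagonal rather than by a sequence argument.
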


\begin{proof} Let the continuous map $r:\gr^{0}_{n+1} \to (\gr^{0}_{n})^{(0)} = X$ from Definition \ref{def:G_spaces} be the range map on $\gr^{0}_{n+1}$, and $s$ be the source map on $\gr^{0}_n$. We have that
\begin{align*}
    \gr^{0}_{n} \ast \gr^{0}_{n+1} &= \{(g,h) \in \gr^{0}_{n} \times \gr^{0}_{n+1}: s(g) = r(h)\} \\
    &= \{((x,y),(y,z)) \in (X \times X) \times (X \times X): (x,y) \in \gr^{0}_{n}, (y,z) \in \gr^{0}_{n+1}\}.
\end{align*}
Now, choose the map $\cdot: \gr^{0}_{n} \ast \gr^{0}_{n+1} \to \gr^{0}_{n+1}$ being the groupoid product of $\gr^0$ restricted to $\gr^{0}_{n} \ast \gr^{0}_{n+1}$. We claim that the $\cdot$ is well-defined, i.e., for $(g,h) \in \gr^{0}_{n} \ast \gr^{0}_{n+1}$, we have $g \cdot h \in \gr^{0}_{n+1}$. Indeed, let $(x,y) \in \gr^{0}_{n}$ and $(y,z) \in \gr^{0}_{n+1}$. Then, there exist $\varphi_1 \in \mathcal{F}_n(X)$ and $\varphi_2 \in \mathcal{F}_{n+1}(X)$ such that $y = \varphi_1(x)$ and $z = \varphi_2(y)$. Then, $z = \varphi_2 \circ \varphi_1(x)$, $(x,y) \cdot (y,z) = (x,z)$, and $\varphi_2 \circ \varphi_1 \in \mathcal{F}_{n+1}(X)$, and hence $(X,z) \in \gr^{0}_{n+1}$. Therefore $\cdot$ is well-defined. It is straightforward that $\cdot$ is continuous, and the properties $(a)-(c)$ in Definition \ref{def:G_spaces} are direct consequences of the properties of the source and range maps, and hence $\gr^0_{n+1}$ is a $\gr^0_n$-space. Now we prove that $\gr^0_n$ is proper. In fact, $\gr^{0}_{n} \ast \gr^{0}_{n+1}$ is closed because every sequence $((x^m,y^m),(y^m,z^m))$ in $\gr^{0}_{n} \ast \gr^{0}_{n+1}$ that converges to some point $((x,y)(w,z))$ in $(X \times X) \times (X \times X)$ implies that $w=y$, $(x,y) \in \gr^{0}_{n}$, and $\gr^{0}_{n+1}$, because $\gr^{0}_n$ is closed for every $n \in \mathbb{N}$. Hence $\gr^{0}_{n} \ast \gr^{0}_{n+1}$ is compact. Since $I^0_{n+1}$ is Hausdorff and compact, its compact subsets are precisely its closed subsets, by continuity of $\cdot$, the preimages of every compact set of $\cdot$ are closed in $\gr^{0}_{n} \ast \gr^{0}_{n+1}$, and therefore compact. We conclude that $\cdot$ is a proper map and therefore $\gr^{0}_{n+1}$ is a proper $\gr^{0}_n$-space.
\end{proof}

\begin{theorem} $\gr^0$ is an amenable groupoid. Hence, the full and the reduced C$^*$-algebras coincide.
\end{theorem}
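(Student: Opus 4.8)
The plan is to invoke Proposition \ref{prop:amenability_via_G_spaces} with $\mathcal{G} = \gr^0$ and the exhausting family $\mathcal{G}_n = \gr^0_n$. Most of its hypotheses have already been assembled in the preceding lemmas: $\gr^0$ is \'etale by Proposition \ref{prop:I_0_etale}; the inclusions $X = (\gr^0)^{(0)} = (\gr^0_n)^{(0)} \subseteq \gr^0_n \subseteq \gr^0_{n+1}$ give property $(i)$; Lemma \ref{lemma:I_0_n+1_is_I_0_n_space} is exactly property $(ii)$, namely that $\gr^0_{n+1}$ is a proper $\gr^0_n$-space; and the identity $\gr^0 = \bigcup_{n \in \mathbb{N}} \gr^0_n$ is property $(iii)$. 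What remains is to check that each $\gr^0_n$ is a \emph{closed} subgroupoid of $\gr^0$ and, most importantly, that each $\gr^0_n$ is amenable.

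For the closedness, I would argue via Remark \ref{rmk:I_0_n_compact}, which records that each $\gr^0_n$ is compact in $\tau_{\lim}$. Since the inductive limit topology is finer than the subspace topology $\tau_{\gr}$ inherited from the Hausdorff space $X \times X$, the groupoid $\gr^0$ is itself Hausdorff in its topology, and hence each compact $\gr^0_n$ is closed in $\gr^0$.

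The substantive step is the amenability of each $\gr^0_n$, which I would obtain by descending twice through Proposition \ref{prop:amenability_closed_or_open_subgroupoid}. First, $\gr_n$ is an \emph{open} subgroupoid of the amenable \'etale groupoid $\gr$ (its openness in $\gr$ with the topology $\tau_{\lim}$ being Proposition \ref{prop:properties_AP_relation}$(iii)$), so $\gr_n$ is amenable. Second, by Lemma \ref{lemma:gr0_n_is_clopen_in_gr_n}, $\gr^0_n$ is a \emph{clopen} subgroupoid of $\gr_n$, hence in particular open or closed, so the same proposition yields that $\gr^0_n$ is amenable. With every hypothesis verified, Proposition \ref{prop:amenability_via_G_spaces} gives that $\gr^0$ is amenable, and the coincidence of the full and reduced C$^*$-algebras then follows exactly as in the proof for $\gr$, via \cite[Theorem 10.1.4]{SSW2020}.

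The main obstacle I anticipate is purely one of topological bookkeeping rather than deep content: one must keep careful track of which topology ($\tau_{\gr}$, $\tau_{\lim}$, or the subspace topology on $\gr^0_n$) is being used at each invocation, and confirm that the amenability transfer of Proposition \ref{prop:amenability_closed_or_open_subgroupoid} is applied with $\gr_n$ and $\gr^0_n$ regarded inside the correct ambient \'etale groupoid. Once the compatibility of these topologies (already addressed in Proposition \ref{prop:I_0_etale} and Remark \ref{rmk:I_0_n_compact}) is used to justify that $\gr^0_n$ is simultaneously clopen in $\gr_n$ and closed in $\gr^0$, the proof reduces to citing the results above in sequence.
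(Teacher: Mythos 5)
Your proposal is correct and follows essentially the same route as the paper: both verify the hypotheses of Proposition \ref{prop:amenability_via_G_spaces} for the exhausting family $\{\gr^0_n\}_{n \in \mathbb{N}}$ (étalicity from Proposition \ref{prop:I_0_etale}, properness from Lemma \ref{lemma:I_0_n+1_is_I_0_n_space}) and deduce amenability of each $\gr^0_n$ from Proposition \ref{prop:amenability_closed_or_open_subgroupoid}. The only immaterial difference is that the paper applies that proposition once, viewing $\gr^0_n$ directly as a closed subgroupoid of the amenable groupoid $\gr$, whereas you descend in two steps through $\gr_n$.
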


\begin{proof} This is an application of Proposition \ref{prop:amenability_via_G_spaces}. $\gr^0$ is an \'etale groupoid due to Proposition \ref{prop:I_0_etale}. On the other hand, $\{\gr^0_n\}_{n \in \mathbb{N}}$ is an increasing family of closed subgroupoids of $\gr^0$ such that $(\gr^0)^{(0)} \subseteq \gr^0_n$ for every $n \in \mathbb{N}$, where $\gr^0 = \bigcup_{n \in \mathbb{N}} \gr^0_n$. Also, Lemma \ref{lemma:I_0_n+1_is_I_0_n_space} gives that $\gr^0_{n+1}$ is a proper $\gr^0_n$-space. It remains to prove that each $\gr^0_n$ is amenable, but this comes from the fact that $\gr^0_n$ is a closed subgroupoid of $\gr$, which is amenable, and then the amenability of $\gr^0_n$ is a consequence of Proposition \ref{prop:amenability_closed_or_open_subgroupoid}.  
\end{proof}

\section{Conformal measures}\label{rn}

In this section, we start by recalling some concepts present in \cite{feldman:77} concerning countable Borel equivalence relations on Polish spaces in order to introduce the notions of quasi-invariant measures, Radon-Nikodym derivatives, and conformal measures. We also establish a connection between various terminologies and objects that commonly appear in the descriptive set theory and the groupoid theory; in particular, we concentrate on the case addressed in Example \ref{exa:equiv_relation_groupoid}. The main purpose of connecting these two theories is to link the so-called DLR measures to the KMS states, in such a way that, once we establish such a correspondence between these objects, the DLR measures can be regarded as thermodynamic objects for a quantum system. Therefore, we can transport a classical statistical mechanical problem to the non-commutative C$^*$-algebra setting. If the reader wishes to look at this approach only from the Borel sets/descriptive set theory point-of-view, see \cite{Merda,Kimura}.

Given an equivalence relation $R \subseteq X \times X$ on a Polish space $X$, we denote the equivalence class of an element $x$ of $X$  by $R(x) := \{y \in X : (x,y) \in R\}$, and, given a subset $A$ of $X$, we define its $R$-saturation by $R(A) := \bigcup\{R(x) : x \in A\}$. In the case where each of its equivalence classes is countable, $R$ is said to be a countable equivalence relation. 

\begin{remark} On the groupoid from Example \ref{exa:equiv_relation_groupoid}, we have $R(x) \equiv s(r^{-1}(x))$ and $R(A) \equiv s(r^{-1}(A))$.    
\end{remark}

In this whole section, we assume that $R$ is a countable equivalence relation on a Polish space  $X$ which is Borel measurable with respect to the product topology on $X \times X$. As before, $\mathcal{B}_R$ denotes the Borel $\sigma$-algebra of $R$, which coincides with the restriction of the Borel $\sigma$-algebra of $X \times X$ to $R$.
Let us define the functions $r, s: R \rightarrow X$ by letting $r(x,y) = x$ and $s(x,y) = y$.
The maps $r$ and $s$ defined above are called the left projection and the right projection of $R$, respectively, and their notations were derived from the range and source maps from groupoid theory. Let us remark that there is no confusion about using such notation because in our case $X$ is homeomorphic to the unit space of the groupoid $\mathcal{G}(R)$. It is also useful to consider the flip map $\theta: R \rightarrow R$ defined by
$\theta(x,y) = (y,x)$, that corresponds to the inverse map in Example \ref{exa:equiv_relation_groupoid}. 
\begin{remark}
We claim that $\theta$ is a Borel isomorphism, thus it maps
sets in $\mathcal{B}_R$ onto sets in $\mathcal{B}_R$; and the projections $r$ and $s$ map sets in $\mathcal{B}_R$ onto Borel sets of $X$.
The proof of the first statement is straightforward.
The second one follows by using Theorem $4.12.3$ from \cite{sri:98} and the fact that $s = r \circ \theta$.	
\end{remark}

Next, we provide some important examples of countable Borel equivalence relations.

\begin{example}[Gibbs relation]\label{gr}
Let $X$ be a compact metrizable space, and let $T$ be a continuous $\zd$-action on $X$, that is, $T$ is a homomorphism of the additive group $\zd$ into the group of homeomorphisms on $X$. If we assume that $T$ is expansive, then,
the Gibbs (or homoclinic) relation of $(X,T)$ defined by
\[\gr(X,T) := \left\{(x,y) \in X \times X : \lim\limits_{\|i\| \to \infty} \rho(T^ix,T^iy) = 0 \right\},\]
where $\rho$ is a metric on $X$ which induces its topology, is a countable Borel equivalence relation, see \cite{meyerovitch:13}.
Note that the definition of $\gr(X,T)$ does not depend on the choice of the metric $\rho$. In particular, if we let $X$ be a subshift, $T:i \mapsto \sigma^i$ the shift action of $\zd$, and $\rho$ the metric defined in equation (\ref{metric}), it follows from Remark \ref{box} that $\gr(X,T)$ coincides with the Gibbs relation defined in (\ref{gibbsrel_shift}). 
\end{example}

\begin{example}[Orbit equivalence relation]\label{group}
Let $X$ be a Polish space and let
\[\mathsf{Aut(}X\mathsf{)} := \left\{f \in X^X : f\; \textnormal{is invertible, and both}\; f\; \text{and}\;f^{-1} \; \textnormal{are Borel measurable}\right\}\]
be the group of all Borel automorphisms of $X$. If we
consider a countable group $G \subseteq \mathsf{Aut(}X\mathsf{)}$ and define the orbit equivalence relation by
\[R_G := \big\{(x,y) \in X \times X:y=g(x)\;\text{for some}\;g\in G\big\},\] 
then, $R_G$ is a countable Borel equivalence relation on $X$. In fact,
since $R_G$ is a countable union of graphs of Borel measurable functions, it follows that $R_G$ is a Borel subset of $X \times X$; moreover, each $R_G(x) = \{g(x) : g \in G\}$ is countable. 
\end{example}

An important result due to J. Feldman and C. C. Moore \cite{feldman:77} proves that the reciprocal of Example \ref{group} is true, that is, every countable Borel equivalence relation can be written in the form  $R = R_G$ for some countable group of Borel automorphisms.

\begin{theorem}[Feldman and Moore]\label{fmteo}
Let $R$ be a countable Borel equivalence relation on a Polish space $X$. Then there exists a countable group
$G \subseteq \mathsf{Aut(}X\mathsf{)}$ such that $R = R_G$.
\end{theorem}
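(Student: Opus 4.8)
The plan is to realize $R$ as a countable union of graphs of Borel automorphisms and then take the group they generate. The deep input is the Lusin--Novikov uniformization theorem (see \cite{sri:98}): since $R$ is Borel and, being a countable equivalence relation, has all horizontal and vertical sections countable, one may write
\[
R = \Delta \cup \bigcup_{n \in \mathbb{N}} \graph(f_n),
\]
where $\Delta := \{(x,x) : x \in X\}$ and each $f_n \colon A_n \to B_n$ is a Borel isomorphism between Borel subsets $A_n, B_n$ of $X$ whose graph is off-diagonal. To pass from the partial \emph{functions} produced by Lusin--Novikov to partial \emph{injections}, note that each such function is countable-to-one, since its fibres lie inside single $R$-classes, and apply the uniformization theorem once more to the flipped graphs.

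First I would arrange that each piece has disjoint domain and range. The functional graph associated with $f_n$, joining $x$ to $f_n(x)$, is a Borel graph of maximal degree two, hence admits a Borel proper $3$-colouring $c_n \colon X \to \{0,1,2\}$ (a standard fact about Borel graphs of bounded degree). Setting
\[
A_n^{i,j} := \{ x \in A_n : c_n(x) = i,\ c_n(f_n(x)) = j \}, \qquad i \neq j,
\]
the restriction $f_n^{i,j} := f_n\!\restriction_{A_n^{i,j}}$ is a Borel isomorphism whose domain (contained in $c_n^{-1}(i)$) and range (contained in $c_n^{-1}(j)$) are disjoint, and these finitely many pieces exhaust $f_n$ because a proper colouring never assigns $c_n(x) = c_n(f_n(x))$ to adjacent vertices.

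Relabelling this countable family as $\{h_m\}_{m \in \mathbb{N}}$, with $h_m \colon C_m \to D_m$ a Borel isomorphism satisfying $C_m \cap D_m = \emptyset$ and $\graph(h_m) \subseteq R$, I would symmetrize each $h_m$ into a Borel \emph{involution}
\[
g_m(x) :=
\begin{cases}
h_m(x) & x \in C_m, \\
h_m^{-1}(x) & x \in D_m, \\
x & x \in X \setminus (C_m \cup D_m).
\end{cases}
\]
The disjointness $C_m \cap D_m = \emptyset$ makes $g_m$ a well-defined Borel bijection with $g_m = g_m^{-1}$, so $g_m \in \mathsf{Aut(}X\mathsf{)}$; moreover $(x, g_m(x)) \in R$ for every $x$, since on $C_m \cup D_m$ this pair lies in $\graph(h_m) \cup \graph(h_m^{-1}) \subseteq R$ and elsewhere it lies in $\Delta$.

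Finally, let $G \subseteq \mathsf{Aut(}X\mathsf{)}$ be the subgroup generated by $\{g_m\}_{m \in \mathbb{N}}$; it is countable. The claim is that $R_G = R$. The inclusion $R_G \subseteq R$ holds because every generator has graph inside $R$ and $R$, being an equivalence relation, is closed under composition and inversion, so every $g \in G$ satisfies $(x, g(x)) \in R$. For $R \subseteq R_G$, any off-diagonal $(x,y) \in R$ equals $(x, h_m(x))$ for some $m$ with $x \in C_m$, whence $y = g_m(x)$ and $(x,y) \in R_G$; the diagonal is contained in $R_G$ trivially. The main obstacle is entirely concentrated in the descriptive-set-theoretic decomposition: securing the Borel partial injections from Lusin--Novikov and then rendering their domains and ranges disjoint \emph{measurably}, after which the group-theoretic bookkeeping is routine.
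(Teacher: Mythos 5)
The paper does not actually prove this theorem: it only points to Theorem 5.8.13 of \cite{sri:98}, so there is no in-paper argument to match yours against. Your proposal is a correct, self-contained rendition of the standard Feldman--Moore proof that such references contain: Lusin--Novikov uniformization to write $R \setminus \Delta$ as a countable union of graphs of Borel functions, a second application to the flipped graphs to refine countable-to-one functions into partial injections, a further decomposition to separate domains from ranges, symmetrization into Borel involutions, and then the routine verification that the generated countable group realizes $R$. The one place you depart from the classical route is the domain/range separation step: the usual argument fixes a countable basis $\{U_k\}$ of the Polish space and restricts each partial injection $h$ to the sets $\{x : x \in U_k,\ h(x) \notin U_k\}$, which already have domain and range disjoint; you instead invoke the Kechris--Solecki--Todor\v{c}evi\'c Borel $(d+1)$-coloring theorem for Borel graphs of degree $\le d$. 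That is a valid (if historically anachronistic and heavier) substitute --- the graph of a partial injection with its inverse does have maximal degree two off the diagonal, a proper coloring does force $c(x) \neq c(h(x))$, and the finitely many pieces $A^{i,j}$ do exhaust the domain --- but the basis trick gets the same conclusion from nothing beyond second countability. Everything else (the involutions $g_m$ being well-defined Borel bijections, $R_G \subseteq R$ via transitivity and symmetry of $R$, and $R \subseteq R_G$ since every off-diagonal pair is hit by some generator) is sound.
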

\begin{proof}
For a modern proof, see Theorem 5.8.13 from \cite{sri:98}.
\end{proof}

Let us present the necessary tools that will allow us to introduce the notion of a Radon-Nikodym derivative of a $\sigma$-finite Borel measure on $X$ with respect to $R$. First, let us note that for every Borel set $A$ of $X$, its $R$-saturation $R(A)$ is also a Borel set of $X$. Indeed,
according to Theorem \ref{fmteo}, there is a countable group $G \subseteq \mathsf{Aut(}X\mathsf{)}$ such that
$R(A) = \bigcup\limits_{\varphi \in G} \varphi^{-1}(A)$.

\begin{definition}\label{def:quasi_invariant_FM}
Let $\mu$ be a $\sigma$-finite Borel measure on $X$. We say that $\mu$ is quasi-invariant under $R$ (or $R$ is non-singular with respect to $\mu$) if the condition
\[\mu(A)=0 \;\,\text{implies}\;\, \mu\left(R(A)\right) = 0\] 
is satisfied for every Borel set $A$ of X.
\end{definition}

The next theorem will provide us with two measures on the measurable space $(R,\mathcal{B}_R)$ which will allow us to define the Radon-Nikodym derivative of
$\mu$ with respect to $R$.

\begin{theorem}\label{thm:nur_nus}
Let $\mu$ be a $\sigma$-finite Borel measure on $X$. Then, the following properties hold.
\begin{itemize}
\item[(a)] For each $C \in \mathcal{B}_R$, the map $x \mapsto \left|r^{-1}(\{x\})\cap C\right|$ defined on $X$ is Borel measurable, and the formula 
\begin{equation}\label{eq:nu_r}
\nu_r(C) = \int_X{\left|r^{-1}(\{x\}) \cap C\right|\, d\mu (x)} 
\end{equation}
defines a $\sigma$-finite measure on $\mathcal{B}_R$. This measure will be referred to as the left counting measure of $\mu$.

\item[(b)] The null sets of $\nu_r$ are exactly the elements of $\left\{C \in \mathcal{B}_R: \mu\left(r(C)\right) = 0\right\}$.

\item[(c)] The right counting measure of $\mu$, defined as the pushforward measure $\nu_s = \theta_{\ast}\nu_r$, is a $\sigma$-finite measure on $\mathcal{B}_R$. Moreover,
\begin{equation}\label{eq:nu_s}
\nu_s(C) = \int_{X}{\left|s^{-1}(\{x\}) \cap C\right|\, d\mu (x)} 
\end{equation}
for every $C \in \mathcal{B}_R$.

\item[(d)] The null sets of $\nu_s$ are exactly the elements of $\left\{C \in \mathcal{B}_R: \mu\left(s(C)\right) = 0\right\}$.
\end{itemize}
\end{theorem}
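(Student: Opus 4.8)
The plan is to reduce everything to the Feldman--Moore decomposition (Theorem \ref{fmteo}) and then apply the monotone convergence theorem. By Theorem \ref{fmteo} there is a countable group $G = \{g_n\}_{n \in \mathbb{N}} \subseteq \mathsf{Aut}(X)$ with $R = R_G$. First I would replace the (possibly overlapping) graphs $\mathrm{graph}(g_n) = \{(x,g_n(x)) : x \in X\}$ by the disjointified family $G_n := \mathrm{graph}(g_n) \setminus \bigcup_{m<n}\mathrm{graph}(g_m)$, so that $R = \bigsqcup_{n} G_n$ is a partition of $R$ into Borel sets, each contained in the graph of a single Borel automorphism. The point of this step is that $x \mapsto (x,g_n(x))$ is a Borel isomorphism of $X$ onto $\mathrm{graph}(g_n)$, so for any $C \in \mathcal{B}_R$ the slice $A_n(C) := \{x \in X : (x,g_n(x)) \in C \cap G_n\}$ is Borel. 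Disjointification is essential: it guarantees that each point of $R$ is counted exactly once, whereas the raw graphs would overcount whenever $g_n(x) = g_m(x)$.

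The heart of part $(a)$ is the measurability of the counting function. Since each fiber $r^{-1}(\{x\})$ meets $G_n$ in at most the single point $(x,g_n(x))$, I would observe that
\begin{equation*}
  \big|r^{-1}(\{x\}) \cap C\big| = \sum_{n} \mathbbm{1}_{A_n(C)}(x),
\end{equation*}
a countable sum of Borel functions, hence Borel measurable. Countable additivity of $\nu_r$ then follows from the identity $|r^{-1}(\{x\})\cap \bigsqcup_k C_k| = \sum_k |r^{-1}(\{x\})\cap C_k|$ together with monotone convergence, and the same decomposition yields $\nu_r(C) = \sum_n \mu(A_n(C))$. For $\sigma$-finiteness I would fix an exhaustion $X = \bigcup_m X_m$ with $\mu(X_m) < \infty$ and note that the countable family $\{G_n \cap r^{-1}(X_m)\}_{n,m}$ covers $R$ with $\nu_r(G_n \cap r^{-1}(X_m)) \le \mu(X_m) < \infty$.

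Part $(b)$ is then immediate from non-negativity of the integrand: $\nu_r(C) = 0$ iff $|r^{-1}(\{x\})\cap C| = 0$ for $\mu$-a.e.\ $x$, and the set where this count is positive is exactly $r(C)$, which is Borel by the remark preceding the statement; hence $\nu_r(C)=0 \iff \mu(r(C)) = 0$. For part $(c)$ I would use that the flip $\theta$ is an involutive Borel isomorphism with $r \circ \theta = s$, so $\theta$ carries $s^{-1}(\{x\})$ onto $r^{-1}(\{x\})$ and therefore $|r^{-1}(\{x\}) \cap \theta(C)| = |s^{-1}(\{x\}) \cap C|$; substituting this into $\nu_s(C) = \nu_r(\theta(C))$ gives the integral formula, while $\sigma$-finiteness is inherited through $\theta$. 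Finally, part $(d)$ follows either by repeating $(b)$ with $s$ in place of $r$, or simply by applying $(b)$ to $\theta(C)$ and using $r(\theta(C)) = s(C)$.

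The main obstacle is the measurability claim in $(a)$: without a concrete handle on the horizontal slices of $C$ one cannot even make sense of the integral defining $\nu_r$. This is precisely what the disjoint graph decomposition supplies, so the real work is front-loaded into setting up that decomposition correctly; everything afterward is an application of the monotone convergence theorem and elementary bijective bookkeeping through $\theta$.
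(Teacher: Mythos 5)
Your proof is correct, and it is essentially the argument the paper delegates wholesale to \cite{feldman:77} (the paper's own ``proof'' is just that citation): decompose $R$ via Feldman--Moore (Theorem \ref{fmteo}) into disjointified Borel graphs, so that the counting function becomes a countable sum of indicators of the Borel slices $A_n(C)$, and push everything through $\theta$ for the right-hand versions. The disjointification step you flag is indeed the one point where care is needed to avoid overcounting, and your handling of measurability, countable additivity, $\sigma$-finiteness, and parts (b)--(d) is complete and accurate.
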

\begin{proof} 
See \cite{feldman:77}.    
\end{proof}

The following result shows that the notion of quasi-invariance can be connected with the equivalence between the left and right counting measures.

\begin{corollary}\label{xala}
A $\sigma$-finite Borel measure $\mu$ is quasi-invariant under $R$  if and only if $\nu_r \ll \nu_s$ and $\nu_s \ll \nu_r$.
\end{corollary}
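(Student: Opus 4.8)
The plan is to reduce everything to a comparison of null sets, using parts $(b)$ and $(d)$ of Theorem \ref{thm:nur_nus}. Those say that $\nu_r(C)=0 \iff \mu(r(C))=0$ and $\nu_s(C)=0 \iff \mu(s(C))=0$ for every $C \in \mathcal{B}_R$. Hence the two-sided absolute continuity $\nu_r \ll \nu_s$ and $\nu_s \ll \nu_r$ is equivalent to saying that $\nu_r$ and $\nu_s$ have exactly the same null sets, which in turn is equivalent to the assertion that $\mu(r(C))=0 \iff \mu(s(C))=0$ for all $C \in \mathcal{B}_R$. The whole argument then consists of matching this symmetric null-set condition with quasi-invariance.

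For the forward implication, I would assume $\mu$ is quasi-invariant. The key observation is the pair of inclusions $r(C) \subseteq R(s(C))$ and $s(C) \subseteq R(r(C))$, both of which follow at once from the symmetry of $R$: if $(x,y)\in C$ then $x\in R(y)$ and $y \in R(x)$. Thus, if $\mu(s(C))=0$, quasi-invariance (Definition \ref{def:quasi_invariant_FM}) yields $\mu(R(s(C)))=0$ and therefore $\mu(r(C))=0$; the symmetric argument gives the reverse implication, so the null sets of $\nu_r$ and $\nu_s$ coincide, which by the reduction above is exactly $\nu_r \ll \nu_s$ and $\nu_s \ll \nu_r$.

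For the converse, I would suppose $\nu_r$ and $\nu_s$ have the same null sets and take a Borel set $A$ of $X$ with $\mu(A)=0$. The natural test set is $C := r^{-1}(A) \in \mathcal{B}_R$, for which $r(C)=A$ and $s(C)=R(A)$; this last identity is simply $s(r^{-1}(A)) = R(A)$ spelled out from the definitions of $r$, $s$, and the saturation. Since $\mu(r(C))=\mu(A)=0$, the null-set identity forces $\mu(s(C))=\mu(R(A))=0$, which is precisely quasi-invariance. The only genuine points needing care are measurability issues, namely that $r(C)$, $s(C)$ and $R(A)$ are Borel subsets of $X$; these are exactly the facts recorded in the Remark following Theorem \ref{fmteo} (the projections along $r$ and $s$ send $\mathcal{B}_R$ into the Borel $\sigma$-algebra of $X$, via Theorem 4.12.3 of \cite{sri:98}) together with the Feldman--Moore representation $R(A)=\bigcup_{\varphi\in G}\varphi^{-1}(A)$ showing saturations are Borel. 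I expect this measurability bookkeeping, rather than any analytic difficulty, to be the main thing to get right, since the analytic content is entirely carried by Theorem \ref{thm:nur_nus}.
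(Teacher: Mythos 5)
Your proof is correct and follows essentially the same route as the paper's: both directions rest on the inclusions $s(C)\subseteq R(r(C))$, $r(C)\subseteq R(s(C))$ together with parts $(b)$ and $(d)$ of Theorem \ref{thm:nur_nus}, and the converse uses the same test set $C=r^{-1}(A)$ with $s(r^{-1}(A))=R(A)$. The only cosmetic difference is that you phrase the equivalence as ``$\nu_r$ and $\nu_s$ have the same null sets'' before matching it to quasi-invariance, which is just a repackaging of the paper's argument.
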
 
\begin{proof}
It is straightforward to show that, given an arbitrary subset $A$ of $X$, its $R$-saturation coincides with the sets $s\left(r^{-1}(A)\right)$ and $ r\left(s^{-1}(A)\right)$.

Assume that $\mu$ is quasi-invariant under $R$. Let us show that $\nu_s \ll \nu_r$.
Let $C$ be a set of $\mathcal{B}_R$ satisfying $\mu\left(r(C)\right) = 0$. It follows from our hypothesis that $\mu\left(R(r(C))\right) =0$. Since $C \subseteq r^{-1}\left(r(C)\right)$, it follows that
$s(C) \subseteq s\left(r^{-1}(r(C))\right) = R(r(C))$, thus $\mu\left(s(C)\right) = 0$.
Analogously, one can easily prove that $\nu_r \ll \nu_s$.

On the other hand, let us assume that $\nu_r$ and $\nu_s$ are equivalent measures and $A$ is a Borel set satisfying $\mu(A) = 0$. Since $r(r^{-1}(A)) \subseteq A$, then, we have $\mu(r(r^{-1}(A))) = 0$, in other words, $\nu_{r}(r^{-1}(A)) = 0$. Therefore,  it follows that $\nu_{s}(r^{-1}(A)) = 0$, and we conclude that $\mu(R(A)) = \mu(s(r^{-1}(A))) = 0$.
\end{proof}

Some comments should be made about the Corollary \ref{xala}. This result proves the equivalence of two different notions of quasi-invariant measures. The notion of a quasi-invariant measure presented in Definition \ref{def:quasi_invariant_FM} is the one introduced in \cite{feldman:77}. On the other hand, there is another notion adopted in the groupoid theory that corresponds to the equivalence between the measures $\nu_r$ and $\nu_s$. In the context of non-\'etale groupoids, the latter definition requires the existence of a Haar system, which consists of a set of measures $\{\lambda^x\}_{x \in G^{(0)}}$ where each $\lambda^x$ is supported on the $r$-fiber of $x$ and satisfies some properties of continuity with respect to $x$ and left-invariance (see Definition 2.2 of \cite{Renault:80}). It is well-known that Haar systems generalize the concept of Haar measures studied in the context of group theory (see Lemma 4.1.2 of \cite{Lima2019}). For locally compact \'{e}tale groupoids, the existence of a Haar system is ensured and the unique choice possible corresponds to $\lambda^x$ as the counting measure on $G^x$ up to a multiplicative constant (see Lemma I.2.7 of \cite{Renault:80} and also Proposition 4.1.3 of \cite{Lima2019}), which is precisely the term $|r^{-1}(\{x\}) \cap C|$ in equation \eqref{eq:nu_r}.

From now on, whenever $R$ is non-singular with respect to $\mu$, instead of saying that a property of points of $R$ holds $\nu_r$-a.e. (equivalently, $\nu_s$-a.e.), we will simply say that this property holds almost everywhere (or a.e.). 

\begin{definition}\label{def:radon_nikodym_nu_r_nu_s}
Let $\mu$ be a $\sigma$-finite Borel measure on $X$ quasi-invariant under $R$. Then, the Radon-Nikodym derivative of $\mu$ with respect to $R$ is the measurable function $D_{\mu,R}$ on $R$ defined by 
\begin{equation}\label{defrn}
D_{\mu,R} = \frac{d\nu_r}{d\nu_s}.
\end{equation}
The function $D_{\mu,R}$ is unique up to almost everywhere equality.
\end{definition}

In the following proposition, given a Borel subset $A$ of $X$ we will denote by $\mu_A$ the measure $\mu$ restricted to the $\sigma$-algebra of Borel subsets of $A$.

\begin{proposition}\label{boraut}
Let $A, B \subseteq X$ be Borel sets, let $\varphi: A \rightarrow B$ be an Borel isomorphism with $\mathsf{gr}(\varphi) \subseteq R$, and let $\mu$ be a $\sigma$-finite Borel measure on $X$ quasi-invariant under $R$.
Then, $\varphi_{\ast}\mu_A$ is absolutely continuous with respect
to $\mu_B$ and the equation
\begin{equation}\label{xena}
\frac{d \varphi_{\ast}\mu_A}{d\mu_B}(y) = D_{\mu,R}\left(\varphi^{-1}(y),y\right)
\end{equation}
holds for $\mu$-almost every $y \in B$.
\end{proposition}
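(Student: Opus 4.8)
The plan is to reduce the statement to a single integral identity: I will show that for every Borel set $E \subseteq B$,
\[
(\varphi_{\ast}\mu_A)(E) = \int_E D_{\mu,R}\bigl(\varphi^{-1}(y),y\bigr)\, d\mu_B(y),
\]
since this simultaneously yields $\varphi_{\ast}\mu_A \ll \mu_B$ (the right-hand side vanishes whenever $\mu_B(E)=0$) and identifies the Radon--Nikodym derivative as claimed. The geometric observation driving everything is that the graph $G := \mathsf{gr}(\varphi) \subseteq R$ is a Borel set on which both projections are injective: $r|_G\colon (x,\varphi(x))\mapsto x$ is a Borel bijection onto $A$, and $s|_G\colon (x,\varphi(x))\mapsto \varphi(x)$ is a Borel bijection onto $B$. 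By the Lusin--Souslin theorem (the same Theorem $4.12.3$ of \cite{sri:98} already invoked for the projections), these restrictions are Borel isomorphisms, with $(s|_G)^{-1}(y)=(\varphi^{-1}(y),y)$.

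The computational core is to evaluate the counting measures of Theorem \ref{thm:nur_nus} on the set $C := G \cap s^{-1}(E)=\{(x,\varphi(x)) : x\in \varphi^{-1}(E)\}$. Because $r$ is injective on $G$, the fiber $r^{-1}(\{x\})$ meets $C$ in exactly one point when $x\in r(C)=\varphi^{-1}(E)$ and in none otherwise, so $\left|r^{-1}(\{x\})\cap C\right| = \mathbbm{1}_{\varphi^{-1}(E)}(x)$ and hence $\nu_r(C)=\mu(\varphi^{-1}(E))=(\varphi_{\ast}\mu_A)(E)$. Symmetrically, injectivity of $s$ on $G$ gives $\left|s^{-1}(\{x\})\cap C\right|=\mathbbm{1}_{E}(x)$, so $\nu_s(C)=\mu(E)=\mu_B(E)$. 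This is where quasi-invariance enters: by Corollary \ref{xala} we have $\nu_r\ll\nu_s$, so the density $D_{\mu,R}=\tfrac{d\nu_r}{d\nu_s}$ of Definition \ref{def:radon_nikodym_nu_r_nu_s} exists as a finite nonnegative $\nu_s$-a.e. measurable function.

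Finally I will transport the identity $\nu_r(C)=\int_C D_{\mu,R}\,d\nu_s$ from $R$ back to $B$. Since $C\subseteq G$ and $s|_G$ is a Borel isomorphism, the restriction $\nu_s|_G$ pushes forward under $s$ to $\mu_B$ (indeed $(s_\ast(\nu_s|_G))(F)=\nu_s(s^{-1}(F)\cap G)=\mu(F)$ for Borel $F\subseteq B$, again by the injectivity count). Applying the change-of-variables formula for pushforwards along $(s|_G)^{-1}(y)=(\varphi^{-1}(y),y)$ and using $\mathbbm{1}_C(\varphi^{-1}(y),y)=\mathbbm{1}_E(y)$ converts $\int_C D_{\mu,R}\,d\nu_s$ into $\int_E D_{\mu,R}(\varphi^{-1}(y),y)\,d\mu_B(y)$, which combined with $\nu_r(C)=(\varphi_{\ast}\mu_A)(E)$ closes the argument.

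The computation itself is short and clean; the only genuine care is in the measure-theoretic bookkeeping. The step I expect to require the most attention is justifying the pushforward change of variables along $s|_G$ rigorously, i.e.\ checking that $s|_G$ is a Borel isomorphism onto $B$ and that $s_\ast(\nu_s|_G)=\mu_B$, so that integration against $\nu_s$ on the graph may legitimately be replaced by integration against $\mu_B$ on $B$. Everything else follows from reading off the fiber counts in the definitions \eqref{eq:nu_r} and \eqref{eq:nu_s} of $\nu_r$ and $\nu_s$.
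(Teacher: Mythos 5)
Your argument is correct. The paper does not actually prove this proposition — it simply cites \cite{feldman:77} — and your proof is essentially the standard one from that reference: evaluate the left and right counting measures of Theorem \ref{thm:nur_nus} on the Borel set $C=\mathsf{gr}(\varphi)\cap s^{-1}(E)$, where the injectivity of $r$ and $s$ on the graph makes the fiber counts the indicators of $\varphi^{-1}(E)$ and $E$ respectively, so that $\nu_r(C)=(\varphi_{\ast}\mu_A)(E)$ and $\nu_s(C)=\mu_B(E)$, and then transport $\nu_r(C)=\int_C D_{\mu,R}\,d\nu_s$ back to $B$ via the Borel isomorphism $s\vert_{\mathsf{gr}(\varphi)}$. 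All the measure-theoretic steps you flag (Borel-ness of the graph via Lusin--Souslin, measurability of the fiber-count maps from Theorem \ref{thm:nur_nus}(a), $\sigma$-finiteness of $\nu_r,\nu_s$ so that the Radon--Nikodym theorem applies, and $s_{\ast}(\nu_s\vert_{\mathsf{gr}(\varphi)})=\mu_B$) go through exactly as you describe, so the single integral identity over all Borel $E\subseteq B$ delivers both the absolute continuity and the formula \eqref{xena}.
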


\begin{proof}
See \cite{feldman:77}.    
\end{proof}



In order to define the concept of a conformal measure, first we need to present the definition of a cocycle. For further references, see \cite{aaronson:07,DU91, Renault:80,schmidt:97}.

\begin{definition}
An $R$-cocycle (also called an 1-cocycle of $R$) is a measurable function $c : R \rightarrow \mathbb{R}$ such that
\begin{equation}\label{cocycle}
c(x,z) = c(x,y) + c(y,z){}
\end{equation}
holds for all $x,y, z \in X$ satisfying $(x,y), (y,z) \in R$.
\end{definition}



\begin{definition}
Let $c : R \rightarrow \mathbb{R}$ be an $R$-cocycle. A Borel probability measure $\mu$ on $X$ is
called $(c, R)$-conformal if $\mu$ is quasi-invariant under $R$ and the formula
\begin{equation}\label{xeena} 
D_{\mu,R} = e^{-c} 
\end{equation}
holds almost everywhere.
\end{definition}

The following proposition characterizes a conformal measure in terms of a group $G$ (provided by Theorem \ref{fmteo}) which generates the relation $R$. 
This result will be useful in later sections.

\begin{proposition}\label{haaa}
Let $G \subseteq \mathsf{Aut(}X\mathsf{)}$ be a countable group which generates $R$. Then, a Borel probability measure $\mu$ on $X$ is 
$(c, R)$-conformal if and only if for each $\varphi \in G$ the measure $\varphi_{\ast}\mu$ is absolutely continuous with respect to $\mu$ and 
the equation
\begin{equation}\label{xuxa}
\frac{d \varphi_{\ast}\mu}{d \mu}(x) = e^{c(x,\varphi^{-1}(x))} 	
\end{equation}
holds for $\mu$-almost every $x \in X$. 
\end{proposition}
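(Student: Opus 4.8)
The plan is to deduce both implications from Proposition \ref{boraut}, applied to each $\varphi \in G$ in the case $A = B = X$ (so that $\mu_A = \mu_B = \mu$, and $\mathsf{gr}(\varphi) \subseteq R_G = R$ holds automatically since $\varphi \in G$), together with the antisymmetry $c(y,x) = -c(x,y)$. This antisymmetry follows immediately from the cocycle identity \eqref{cocycle}: taking $x=y=z$ gives $c(x,x)=0$, and then taking $z=x$ gives $0 = c(x,y)+c(y,x)$. With this in hand, Proposition \ref{boraut} reads, for every $\varphi \in G$,
\begin{equation*}
\frac{d\varphi_{\ast}\mu}{d\mu}(x) = D_{\mu,R}\big(\varphi^{-1}(x),x\big) \qquad \mu\text{-a.e. } x,
\end{equation*}
which is the bridge between the local statement \eqref{xuxa} and the global statement \eqref{xeena}.

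For the forward implication, suppose $\mu$ is $(c,R)$-conformal. Then $\mu$ is quasi-invariant, so the displayed identity holds; substituting $D_{\mu,R} = e^{-c}$ and using antisymmetry gives $\frac{d\varphi_{\ast}\mu}{d\mu}(x) = e^{-c(\varphi^{-1}(x),x)} = e^{c(x,\varphi^{-1}(x))}$, which is exactly \eqref{xuxa}; in particular $\varphi_{\ast}\mu \ll \mu$ for each $\varphi \in G$.

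For the converse I would proceed in two steps. First, establish quasi-invariance: since $R = R_G$ one has $R(A) = \bigcup_{\varphi \in G}\varphi^{-1}(A)$, and if $\mu(A) = 0$ then $\varphi_{\ast}\mu \ll \mu$ forces $\mu(\varphi^{-1}(A)) = (\varphi_{\ast}\mu)(A) = 0$ for every $\varphi$, so by countable subadditivity $\mu(R(A)) = 0$. By Corollary \ref{xala} this makes $D_{\mu,R}$ well-defined. Applying Proposition \ref{boraut} and the hypothesis \eqref{xuxa} then yields, for each fixed $\varphi$, the equality $D_{\mu,R}(\varphi^{-1}(x),x) = e^{-c(\varphi^{-1}(x),x)}$ for $\mu$-a.e. $x$; call the exceptional $\mu$-null set $N_\varphi := \{x : (\varphi^{-1}(x),x) \in E\}$, where $E := \{(x,y)\in R : D_{\mu,R}(x,y)\neq e^{-c(x,y)}\}$.

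The remaining, and main, difficulty is to upgrade these countably many fiberwise a.e. identities into the single identity $D_{\mu,R} = e^{-c}$ holding $\nu_s$-a.e. on $R$. Here I would use the description of $\nu_s$ as a fiber-counting integral \eqref{eq:nu_s}: every point of the $s$-fiber $s^{-1}(\{y\})$ has the form $(\varphi^{-1}(y),y)$ for some $\varphi \in G$, so
\begin{equation*}
\big|s^{-1}(\{y\})\cap E\big| \le \sum_{\varphi \in G}\mathbbm{1}_{N_\varphi}(y).
\end{equation*}
Integrating in $y$ and exchanging sum and integral by Tonelli (legitimate since $G$ is countable and the integrand is nonnegative) gives $\nu_s(E) \le \sum_{\varphi \in G}\mu(N_\varphi) = 0$, whence $D_{\mu,R} = e^{-c}$ almost everywhere and $\mu$ is $(c,R)$-conformal. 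The one point requiring care in this last step is the counting bound: because $\varphi \mapsto \varphi^{-1}(y)$ need not be injective on $G$, the fiber may be enumerated with repetitions, but this only makes the left-hand side no larger than the sum, so the inequality—and therefore the conclusion—still holds.
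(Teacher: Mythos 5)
Your proposal is correct and follows essentially the same route as the paper: the forward direction combines Proposition \ref{boraut} with $D_{\mu,R}=e^{-c}$ and cocycle antisymmetry, and the converse first derives quasi-invariance from the countable union $R(A)=\bigcup_{\varphi\in G}\varphi^{-1}(A)$ and then assembles the countably many $\mu$-null exceptional sets $N_\varphi$ into a $\nu_s$-null set covering where $D_{\mu,R}\neq e^{-c}$. The only cosmetic difference is that the paper takes $C_0=s^{-1}\bigl(\bigcup_\varphi N_\varphi\bigr)$ and invokes Theorem \ref{thm:nur_nus}(d), whereas you bound $\nu_s(E)$ directly via the fiber-counting formula \eqref{eq:nu_s}; these are the same argument in substance.
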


\begin{proof}
If we assume $\mu$ is $(c,R)$-conformal, then equation (\ref{xuxa}) immediately follows by combining Proposition \ref{boraut} with equation (\ref{xeena}).

On the other hand, let us show that $\mu$ is quasi-invariant under $R$ and satisfies (\ref{xeena}). Given a  Borel subset $A$ of $X$ such that 
$\mu(A) = 0 $, we have $\varphi_{\ast}\mu(A) = 0$ for each $\varphi \in G$. It follows that $\mu\left(R(A)\right) = \mu\left(\bigcup\limits_{\varphi \in G}\varphi^{-1}(A)\right) = 0$,
thus $\mu$ is quasi-invariant under $R$.
By combining our assumption with Proposition \ref{boraut}, we obtain that for each $\varphi \in G$ we have
\begin{equation}\label{psico}
D_{\mu,R}\left(\varphi^{-1}(y),y\right) = e^{-c(\varphi^{-1}(y),y)}
\end{equation}
for $\mu$-almost every $y \in X$. Let $N$ be a $\mu$-null set such that (\ref{psico}) is satisfied for every $\varphi \in G$ at each point $y$ of $X\backslash N$.
If we define $C_0 = s^{-1}(N)$, then it follows that $C_0$ is a $\nu_s$-null set and for all $(x,y) \in R\backslash C_0$ we have
\[D_{\mu,R}(x,y) = e^{-c(x,y)}.\]
\end{proof}

\section{Gibbs measures and thermodynamic formalism}\label{sec:therm_form} \\
Let us start this section by introducing the notion of Gibbs measures on subshifts, whose description is given in terms of conformal measures and is frequently adopted in the approach to symbolic dynamical systems, for instance, \cite{aaronson:07,cap:76,keller:98,meyerovitch:13,PetSch:97,schmidt:97}.
We restrict ourselves to the study of Gibbs measures for a specific class of potentials, the so-called
functions with $d$-summable variation \cite{meyerovitch:13} or regular local energy functions \cite{keller:98,muir}. By relying on the theory of conformal measures, we introduce two different definitions of Gibbs measures provided by \cite{meyerovitch:13} that, in particular, coincide for SFTs.  
Differently from the description that is widely employed in statistical physics and probability theory, the above-mentioned definitions do not characterize the Gibbs measures in terms of prescribed conditional probabilities with respect to configurations outside of finite regions, so, for this reason,
we dedicate part of this section to connecting the notions provided by \cite{meyerovitch:13} with other definitions frequently considered in the literature (e.g. \cite{cap:76,georgii:11,ny:08,sarig:09}). For the sake of conciseness, the proofs of some technical statements were omitted in this section, but they can be found explicitly in \cite{Kimura}. Later, we recall the notions of C$^*$-dynamical system and KMS state, in particular, we recall the 1-cocycle dynamics via one-parameter group of $*$-automorphisms. A result by J. Renault \cite{renault2009} (Theorem \ref{thm:Renault_KMS}) allows us to explicit the connection between the KMS states from this dynamics and the Gibbs measures, including the case of the topological Gibbs measures.

Let us begin by introducing some notation. From now on, we will always let $X$ denote a subshift of $\mathcal{A}^{\zd}$ and consider the shift action $i \mapsto \sigma^i$ of $\zd$ on $X$.
Given an arbitrary subset $\Delta$  of $\zd$, then we define the set of all $\Delta$-configurations permited on $X$ by $X_\Delta := \{x_\Delta : x \in X\}$.
And, given a finite subset $\Lambda$ of $\zd$ and a pattern $\omega \in \mathcal{A}^{\Lambda}$, we define the \emph{cylinder} with configuration $\omega$ as the subset of $X$ given by $[\omega] := \{x \in X : x_\Lambda = \omega\}$. 

\subsection{Gibbs measures}\label{sec:medgibbs}

In this subsection, we introduce the Gibbs measures for subshifts, considered by T. Meyerovitch \cite{meyerovitch:13}, J. Aaronson and H. Nakada \cite{aaronson:07}, K. Schmidt \cite{schmidt:97}, and K. Petersen and K. Schmidt \cite{PetSch:97}. In Meyerovitch's paper 
were given two different definitions for Gibbs measures, where for the first notion they were referred to as `Gibbs measures' and for the second one as `topologically Gibbs'. Although these definitions coincide for SFTs, they differ from the usual presented in the literature (cf. \cite{velenik2017,georgii:11,ny:08,ruelle:04,sarig:09}). We will show later that they can be formulated, as usual, in terms of the so-called DLR equations. In addition, these definitions are closely related to another one provided by
Capocaccia \cite{cap:76}. In particular, we will show that all these definitions coincide for subshifts of finite type.

The class of potentials that we will be dealing with in this paper is defined as follows. Given an arbitrary real-valued function $f$ defined on $X$ and a positive integer $n$, we define the $n$-th variation of $f$ as the non-negative extended real number given by   
\begin{equation}
\delta_n (f) := \sup\left\{\left|f(x)-f(y)\right| : x, y \in X \;\text{satisfy}\; x_{\Lambda_{n}} = y_{\Lambda_{n}}\right\}.
\end{equation}
Then, we define the set of all functions with $d$-summable variation on $X$ by

\begin{equation}
SV_{d}(X) := \left\{f \in \mathbb{R}^{X} : \sum\limits_{n =1}^{\infty} n^{d-1}\delta_{n}(f) < +\infty\right\}.	
\end{equation}

\begin{remark}
For each $f \in SV_{d}(X)$ we have $\lim\limits_{n \to \infty} \delta_{n}(f) = 0$. It follows that every 
function with $d$-summable variation is uniformly continuous. 
\end{remark}


  As we previously mentioned, a Gibbs measure for a function $f$ with $d$-summable variation on $X$ will be defined as a special kind of conformal measure. Recall that it follows from Example \ref{gr} that the Gibbs relation $\gr$ defined in equation (\ref{gibbsrel_shift}) is a countable Borel equivalence relation on $X$. Then, given a potential $f$ in $SV_{d}(X)$ let us consider the $\gr$-cocycle $c_{f}$ associated with it defined as follows. 

\begin{definition}\label{def:cocycle_potential}
Given a function $f$ in $SV_d(X)$, we define the cocycle $c_{f} : \gr \rightarrow \mathbb{R}$ by
\begin{equation}
c_{f}(x,y) = \sum\limits_{k \in \zd}f(\sigma^k y) - f(\sigma^k x),
\end{equation}
where the sum above is an unordered sum.
\end{definition}

\begin{definition}[Gibbs measure]\label{gibbsm}
A Borel probability measure $\mu$ on $X$ is called a Gibbs measure for a function $f \in SV_d(X)$ 
if it is $(c_{f}, \gr)$-conformal.
\end{definition}

As proved in Proposition \ref{haaa}, the Gibbs measures can be characterized in terms of a countable group $G$ of Borel automorphisms of $X$ that generates $\gr$, whose existence is assured by Theorem \ref{fmteo}. Such a group $G$ can be determined for the Gibbs relation as follows. For each pair of finite configurations $\omega, \eta$ in $\mathcal{A}^{\Lambda}$,
let us define the map $\varphi_{\omega, \eta} : X \rightarrow X$ by letting
\begin{equation}\label{eq:geradores}
\varphi_{\omega, \eta}(x) = 
\begin{cases}
\omega x_{\Lambda^c} & \text{if}\;x\in [\eta]\;\text{and}\;\omega x_{\Lambda^c}\in X, \\
\eta x_{\Lambda^c} & \text{if}\;x\in [\omega]\;\text{and}\;\eta x_{\Lambda^c}\in X, \\
x &	\text{otherwise}.
\end{cases}
\end{equation}	
It is straightforward to check that each $\varphi_{\omega, \eta}$ is a Borel automorphism of $X$ such that $\mathsf{gr}(\varphi) \subseteq \gr$, and the set $G$ of all such functions written in the form (\ref{eq:geradores}) is a countable group Borel automorphisms of $X$ that generates $\gr$.

Due to certain technical difficulties, it can be challenging to deal with proofs where Gibbs measures are involved unless one can guarantee that the generators of $\gr$ are homeomorphisms. In \cite{meyerovitch:13}, the author provided a generalization of Lanford-Ruelle theorem \cite{lanfordruelle:69} on general subshifts by considering a weaker notion of Gibbs measures which is introduced as follows. Let us consider the countable group $\mathcal{F}(X)$ of  Borel automorphisms of $X$ defined by
\begin{eqnarray*}
\mathcal{F}(X) := \left\{\varphi \in \mathsf{Homeo(}X{\mathsf{)}}: \exists n \in \mathbb{N} \;\text{such that}\; \varphi(x)_{{\Lambda}_{n}^{c}}=x_{{\Lambda}_{n}^{c}}\; \text{for all}\; x \in X\right\}.	
\end{eqnarray*}

We define the topological Gibbs relation $\gr^{0}$ as the equivalence relation generated by $\mathcal{F}(X)$, i.e.,
\begin{equation}
\gr^{0} := R_{\mathcal{F}(X)} = \left\{(x,y) \in X \times X : y=\varphi(x)\;\text{for some}\;\varphi \in \mathcal{F}(X) \right\}.
\end{equation}
Observe that $\gr^{0}$ is a countable Borel equivalence relation (see Example \ref{group}) included in $\gr$; moreover, it can be proven that the equality holds if $X$ is a SFT, see \cite{Kimura}. Given a function $f$ in $SV_{d}(X)$ the restriction of $c_{f}$ to $\gr^{0}$ is a $\gr^{0}$-cocycle. Now, we are able to introduce the concept of a topological Gibbs measure.

\begin{definition}[Topological Gibbs measure]\label{topgibbsm}
A Borel probability measure $\mu$ on $X$ is called a topological Gibbs measure for a function $f \in SV_d(X)$ 
if it is $\left(c_{f}\vert_{\gr^{0}}, \gr^0\right)$-conformal.
\end{definition}

\begin{remark}\label{capeeeta}
It follows from Proposition \ref{haaa} that a Borel probability measure $\mu$ on $X$ is a topological Gibbs measure for a function $f$ in $SV_{d}(X)$ if and only if for each $\varphi$ in $\mathcal{F}(X)$ 
the measure $\varphi_{\ast}\mu$ is absolutely continuous with respect to $\mu$ and 
the equation
\begin{equation}
\frac{d \varphi_{\ast}\mu}{d \mu}(x) = e^{c_{f}(x,\varphi^{-1}(x))} 		
\end{equation}	
holds for $\mu$-almost every $x$ in $X$.
\end{remark}

Using Proposition \ref{boraut} and Remark \ref{capeeeta}, one can prove that every Gibbs measure for a function $f$ in $SV_{d}(X)$ is also a topological Gibbs measure for $f$, but the converse is not necessarily true \cite{meyerovitch:13}. 

\subsection{DLR equations}\label{sec:DLReq}
The results in this section connecting the Gibbs measure definition from \cite{meyerovitch:13} with the DLR equations were all based on \cite{Kimura}.
The main result of this section is Theorem \ref{DLR}, which proves that item (c) implies item (e) in the Theorem \ref{main_theorem}. On the other hand, with the improved version of Theorem \ref{DLR1} in \cite{Merda}, we guarantee that the reciprocal statement holds. 
Therefore, with the equivalence between items (c) and (e) from Theorem \ref{main_theorem}, we provide an alternative characterization for Gibbs measures on subshifts that is formulated in terms of a set of conditional expectations, the so-called DLR equations. 

We start this section by introducing some notation. 
Let us denote the collection of all nonempty finite subsets of $\zd$
by $\mathscr{S}$. If $X$ is a subshift of $\mathcal{A}^{\zd}$, we denote by $\mathcal{B}$ its Borel $\sigma$-algebra; moreover,
for each $j$ in $\zd$ the projection of the subshift $X$ onto the $j$-th coordinate is the map $\pi_j : X \rightarrow \mathcal{A}$ 
defined by letting $\pi_{j}(x) = x_{j}$ for each $x = (x_i)_{i \in \zd}$.
Given an arbitrary subset $\Delta$ of $\zd$, let $\mathcal{B}_{\Delta}$ be the smallest $\sigma$-algebra on $X$ that contains the collection  
\begin{equation}
\left\{\pi_i^{-1}(A): i \in \Delta, A \subseteq \mathcal{A}\right\}.
\end{equation}
Note that $\mathcal{B}_{\Delta}$ is a sub-$\sigma$-algebra of $\mathcal{B}$.
In the following, given a real-valued function $f$ defined on $X$ and a positive integer $n$, for the sake of conciseness, we will write $f_n$ instead of $\sum\limits_{i \in \Lambda_n}f\circ \sigma^i$.

\begin{lemma}\label{xenaaprincesaguerreira}
Let $X \subseteq \mathcal{A}^{\zd}$ be a subshift, let $f$ be a function in $SV_d(X)$, and let $\Lambda \in \mathscr{S}$. Then, the limit	
\begin{equation}\label{limlimlim}
\lim\limits_{n \to \infty}\frac{e^{f_n(\omega x_{\Lambda^c})}\mathbbm{1}_{\{\omega x_{\Lambda^c}\in X\}}}{\sum\limits_{\eta\, \in \mathcal{A}^{\Lambda}}e^{f_n(\eta x_{\Lambda^c})}\mathbbm{1}_{\{\eta x_{\Lambda^c}\in X\}}} 
\end{equation}
exists for each $\omega \in \mathcal{A}^{\Lambda}$ and each $x \in X$, moreover, it is a non-negative real number.
\end{lemma}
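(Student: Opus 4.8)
The plan is to fix $x \in X$ and, for each $\omega \in \mathcal{A}^{\Lambda}$, write $x^{\omega} := \omega x_{\Lambda^c}$ for the configuration agreeing with $\omega$ on $\Lambda$ and with $x$ on $\Lambda^c$. First I would observe that the denominator in \eqref{limlimlim} never vanishes: taking $\eta = x_{\Lambda}$ gives $x^{\eta} = x \in X$, so the sum contains the strictly positive term $e^{f_n(x)}$. If $x^{\omega} \notin X$, the numerator is identically zero and the expression equals $0$ for every $n$, so the limit exists and equals the non-negative number $0$. Hence I may assume $x^{\omega} \in X$, divide numerator and denominator by $e^{f_n(x^{\omega})}$, and reduce the claim to showing that
\begin{equation*}
\sum_{\eta \in \mathcal{A}^{\Lambda}} e^{f_n(x^{\eta}) - f_n(x^{\omega})}\,\mathbbm{1}_{\{x^{\eta} \in X\}}
\end{equation*}
converges, as $n \to \infty$, to a strictly positive real number; the desired limit is then its reciprocal.

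Since $\mathcal{A}^{\Lambda}$ is finite, it suffices to prove that for each $\eta$ with $x^{\eta} \in X$ the exponent $f_n(x^{\eta}) - f_n(x^{\omega})$ converges. Note $(x^{\omega}, x^{\eta}) \in \gr$, since the two configurations differ only on the finite set $\Lambda$; I claim that $f_n(x^{\eta}) - f_n(x^{\omega}) \to c_f(x^{\omega}, x^{\eta})$, the finite value of the cocycle. Writing $f_n(x^{\eta}) - f_n(x^{\omega}) = \sum_{i \in \Lambda_n}\big(f(\sigma^i x^{\eta}) - f(\sigma^i x^{\omega})\big)$, this reduces to the absolute convergence of the unordered sum $\sum_{i \in \zd}\big(f(\sigma^i x^{\eta}) - f(\sigma^i x^{\omega})\big)$. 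The key estimate is term-wise: since $x^{\omega}$ and $x^{\eta}$ agree on $\Lambda^c$, the shifted configurations $\sigma^i x^{\omega}$ and $\sigma^i x^{\eta}$ agree off $\Lambda - i$, hence on the box $\Lambda_{d_i}$, where $d_i := \min_{\lambda \in \Lambda}\|i - \lambda\|$ (indeed $\|j\| < d_i$ forces $j \notin \Lambda - i$). By the definition of $\delta_{d_i}(f)$ this yields
\begin{equation*}
\big| f(\sigma^i x^{\eta}) - f(\sigma^i x^{\omega}) \big| \le \delta_{d_i}(f).
\end{equation*}

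It then remains to check that $\sum_{i \in \zd} \delta_{d_i}(f) < \infty$, which is where the $d$-summable variation hypothesis enters. Grouping the lattice points by the value $d_i = k$, each $i$ with $d_i = k \ge 1$ lies in $\lambda + \{j : \|j\| = k\}$ for some $\lambda \in \Lambda$, and the sup-norm sphere of radius $k$ has cardinality $(2k+1)^d - (2k-1)^d \le C_d\,k^{d-1}$ for a dimensional constant $C_d$. Consequently $\big|\{i : d_i = k\}\big| \le |\Lambda|\,C_d\,k^{d-1}$, while the finitely many points with $d_i = 0$ (namely $i \in \Lambda$) contribute at most $|\Lambda|\,\delta_0(f)$, which is finite because $f$, being uniformly continuous on the compact space $X$, is bounded. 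Therefore
\begin{equation*}
\sum_{i \in \zd}\delta_{d_i}(f) \le |\Lambda|\,\delta_0(f) + |\Lambda|\,C_d \sum_{k=1}^{\infty} k^{d-1}\delta_k(f) < \infty,
\end{equation*}
the last sum being finite precisely because $f \in SV_d(X)$. This establishes the absolute convergence, hence $e^{f_n(x^{\eta}) - f_n(x^{\omega})} \to e^{c_f(x^{\omega}, x^{\eta})}$ for each admissible $\eta$. Summing the finitely many terms, the denominator converges to $\sum_{\eta} e^{c_f(x^{\omega}, x^{\eta})}\,\mathbbm{1}_{\{x^{\eta} \in X\}}$, which is strictly positive (the term $\eta = \omega$ already contributes $1$), and the limit in \eqref{limlimlim} equals its reciprocal, a non-negative real number. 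The main obstacle is the geometric counting estimate matching the $O(k^{d-1})$ growth of the distance-$k$ shells around $\Lambda$ against the weight $k^{d-1}\delta_k(f)$ summed by the $SV_d$ condition; everything else is bookkeeping.
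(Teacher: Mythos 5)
Your proof is correct and follows the same route as the paper: handle the $\omega x_{\Lambda^c}\notin X$ case separately, divide numerator and denominator by $e^{f_n(\omega x_{\Lambda^c})}$, and show the resulting denominator converges to $\sum_{\eta}e^{c_f(\omega x_{\Lambda^c},\eta x_{\Lambda^c})}\mathbbm{1}_{\{\eta x_{\Lambda^c}\in X\}}>0$. The only difference is that the paper labels the convergence of $f_n(\eta x_{\Lambda^c})-f_n(\omega x_{\Lambda^c})$ to the cocycle as ``straightforward,'' whereas you supply the shell-counting estimate against the $SV_d$ condition that actually justifies it.
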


\begin{proof}
First, note that for each positive integer $n$, we have 
\[\sum\limits_{\eta\, \in \mathcal{A}^{\Lambda}}e^{f_n(\eta x_{\Lambda^c})}\mathbbm{1}_{\{\eta x_{\Lambda^c}\in X\}} > 0.\]
In the case where $\omega x_{\Lambda^c}$ does not belong to $X$, the limit given by equation (\ref{limlimlim}) is equal to $0$. 
Otherwise, if $\omega x_{\Lambda^c}$ belongs to $X$, then
\begin{eqnarray*}
\frac{e^{f_n(\omega x_{\Lambda^c})}\mathbbm{1}_{\{\omega x_{\Lambda^c}\in X\}}}{\sum\limits_{\eta\, \in \mathcal{A}^{\Lambda}}e^{f_n(\eta x_{\Lambda^c})}\mathbbm{1}_{\{\eta x_{\Lambda^c} \in X\}}} 
&=& \frac{e^{f_n(\omega x_{\Lambda^c})}}{\sum\limits_{\eta\, \in \mathcal{A}^{\Lambda}}e^{f_n(\eta x_{\Lambda^c})}\mathbbm{1}_{\{\eta x_{\Lambda^c} \in X\}}} \\ 	
&=& \frac{1}{\sum\limits_{\eta\, \in \mathcal{A}^{\Lambda}}\exp{\left(\sum\limits_{i \in \Lambda_n}f\circ \sigma^{i}(\eta x_{\Lambda^c})-f\circ \sigma^{i}(\omega x_{\Lambda^c})\right)}\mathbbm{1}_{\{\eta x_{\Lambda^c} \in X\}}} \\	
\end{eqnarray*}
holds for every positive integer $n$. It is straightforward  to prove that 
\[\lim\limits_{n \to \infty}\sum\limits_{\eta\, \in \mathcal{A}^{\Lambda}}\exp{\left(\sum\limits_{i \in \Lambda_n}f\circ \sigma^{i}(\eta x_{\Lambda^c})-f\circ \sigma^{i}(\omega x_{\Lambda^c})\right)}\mathbbm{1}_{\{\eta x_{\Lambda^c} \in X\}} = \sum\limits_{\eta\, \in \mathcal{A}^{\Lambda}} e^{c_f(\omega x_{\Lambda^c},\eta x_{\Lambda^c})}\mathbbm{1}_{\{\eta x_{\Lambda^c} \in X\}} > 0.\]
Therefore, it follows that
\begin{equation}\label{oiadlr}
\lim\limits_{n \to \infty}\frac{e^{f_n(\omega x_{\Lambda^c})}\mathbbm{1}_{\{\omega x_{\Lambda^c}\in X\}}}{\sum\limits_{\eta\, \in \mathcal{A}^{\Lambda}}e^{f_n(\eta x_{\Lambda^c})}\mathbbm{1}_{\{\eta x_{\Lambda^c} \in X\}}} 
= \frac{1}{\sum\limits_{\eta\, \in \mathcal{A}^{\Lambda}} e^{c_f(\omega x_{\Lambda^c},\eta x_{\Lambda^c})}\mathbbm{1}_{\{\eta x_{\Lambda^c} \in X\}}} > 0. 
\end{equation}
\end{proof}

\begin{definition}\label{especificacao15}
Let $X$ be a subshift of $\mathcal{A}^{\zd}$, and let $f$ be a function in $SV_d(X)$. Let us define 
the family $\gamma = (\gamma_{\Lambda})_{\Lambda \in \mathscr{S}}$, where each $\gamma_{\Lambda} : \mathcal{B}\times X\rightarrow {[}0,1]$ is defined by letting
\begin{equation}\label{gamma}
\gamma_{\Lambda}(A|\,x) = \lim\limits_{n \to \infty}\frac{\sum\limits_{\omega \in \mathcal{A}^{\Lambda}}e^{f_n(\omega x_{\Lambda^c})}\mathbbm{1}_{\{\omega x_{\Lambda^c}\in A\}}}{\sum\limits_{\eta\, \in \mathcal{A}^{\Lambda}}e^{f_n(\eta x_{\Lambda^c})}\mathbbm{1}_{\{\eta x_{\Lambda^c}\in X\}}}  
\end{equation}
for each $A \in \mathcal{B}$ and each point $x \in X$.
\end{definition}
\begin{remark}
 Using Lemma \ref{xenaaprincesaguerreira}, the reader can verify that equation (\ref{gamma}) 
is well-defined, and the relation
\begin{equation}\label{lalalalallalalala}
\gamma_{\Lambda}(A|x) =  \sum\limits_{\omega \in \mathcal{A}^{\Lambda}} \gamma_{\Lambda}([\omega]|x) \mathbbm{1}_{\{\omega x_{\Lambda^c} \in A\}}	
\end{equation}
holds for each $A \in \mathcal{B}$ and each $x \in X$.
\end{remark}

In the following lemma, we show that $\gamma$ given in Definition \ref{especificacao15} is a family of probability kernels satisfying special properties that characterize it as a \emph{specification} \cite{velenik2017,georgii:11}.

\begin{lemma}\label{gammaprop}
The family $\gamma = (\gamma_{\Lambda})_{\Lambda \in \mathscr{S}}$ is a
specification with parameter set $\zd$ and state space $(\mathcal{A}, \mathcal{P}(\mathcal{A}))$. 
In other words, given a nonempty finite subset $\Lambda$ of $\zd$, the following properties hold:
\begin{enumerate}[label=(\alph*),ref=\alph*]
\item $\gamma_{\Lambda}(\cdot\,|x)$ is a Borel probability measure on $X$ for each $x \in X$.

\item $\gamma_{\Lambda}(A|\,\cdot)$ is a $\mathcal{B}_{\Lambda^c}$-measurable function for each $A \in \mathcal{B}$.

\item $\gamma_{\Lambda}(B|\,\cdot) = \mathbbm{1}_{B}$ for every $B \in \mathcal{B}_{\Lambda^c}$. 

\item The consistency condition 
\begin{equation}\label{eetalele}
\gamma_{\Delta}\gamma_{\Lambda} = \gamma_{\Delta}
\end{equation}
holds whenever $\Lambda$ and $\Delta$ are elements of $\mathscr{S}$ satisfying $\Lambda \subseteq \Delta$. It means that the equation
\begin{equation}
\int_{X}\gamma_{\Delta}(dy|x)\gamma_{\Lambda}(A|y)  = \gamma_{\Delta}(A|x)
\end{equation}
holds for each set $A \in \mathcal{B}$ and each point $x \in X$.
\end{enumerate} 
\end{lemma}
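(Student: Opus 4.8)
The plan is to exploit the explicit form of $\gamma_\Lambda$ extracted in Lemma \ref{xenaaprincesaguerreira}: by \eqref{oiadlr} and the decomposition \eqref{lalalalallalalala}, for every $x \in X$ the set function $\gamma_\Lambda(\cdot\,|x)$ is the finite convex combination of point masses
\begin{equation*}
\gamma_\Lambda(\cdot\,|x) = \sum_{\omega \in \mathcal{A}^\Lambda} \gamma_\Lambda([\omega]|x)\, \delta_{\omega x_{\Lambda^c}},
\end{equation*}
in which only the finitely many $\omega$ with $\omega x_{\Lambda^c} \in X$ contribute. The decisive first move is to rewrite the weights by means of the cocycle identity \eqref{cocycle}: fixing $x \in X$ and setting $\Phi(y) := c_f(x,y)$ on the $\gr$-class of $x$, the relation $c_f(\omega x_{\Lambda^c}, \eta x_{\Lambda^c}) = \Phi(\eta x_{\Lambda^c}) - \Phi(\omega x_{\Lambda^c})$ turns \eqref{oiadlr} into the Gibbs--Boltzmann form
\begin{equation*}
\gamma_\Lambda([\omega]|x) = \frac{e^{\Phi(\omega x_{\Lambda^c})}}{\displaystyle\sum_{\eta:\,\eta x_{\Lambda^c}\in X} e^{\Phi(\eta x_{\Lambda^c})}} \qquad (\omega x_{\Lambda^c}\in X),
\end{equation*}
with a single ``potential'' $\Phi$ and an explicit partition function $Z_\Lambda(x)$ in the denominator. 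Property $(a)$ is then immediate: the weights are non-negative, finite in number, and summing over $\omega$ with $\omega x_{\Lambda^c}\in X$ gives $\gamma_\Lambda(X|x) = 1$, so $\gamma_\Lambda(\cdot\,|x)$ is a Borel probability measure.

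For $(b)$ I would note that, for each fixed $n$, the map $x \mapsto \omega x_{\Lambda^c}$ depends only on $x_{\Lambda^c}$ and is continuous, so each approximant in \eqref{gamma} is a ratio of a $\mathcal{B}_{\Lambda^c}$-measurable numerator and a strictly positive $\mathcal{B}_{\Lambda^c}$-measurable denominator; passing to the limit (which exists by Lemma \ref{xenaaprincesaguerreira}) preserves $\mathcal{B}_{\Lambda^c}$-measurability. Property $(c)$ follows from the defining feature of $\mathcal{B}_{\Lambda^c}$: if $B \in \mathcal{B}_{\Lambda^c}$ then membership in $B$ is determined by the restriction to $\Lambda^c$, so for $x \in X$ one has $\mathbbm{1}_{\{\omega x_{\Lambda^c}\in B\}} = \mathbbm{1}_B(x)\,\mathbbm{1}_{\{\omega x_{\Lambda^c}\in X\}}$; substituting this into \eqref{lalalalallalalala} and using $(a)$ collapses the sum to $\mathbbm{1}_B(x)$.

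The hard part is the consistency condition $(d)$, and here the common-potential reformulation pays off. Fix $\Lambda \subseteq \Delta$ and $x \in X$, and let $V_\Delta(x) := \{y \in X : y_{\Delta^c} = x_{\Delta^c}\}$, a finite set all of whose elements lie in a single $\gr$-class, so that the same $\Phi(y)=c_f(x,y)$ serves both for $\gamma_\Delta(\cdot\,|x)$ and, after the cancellation of a factor $e^{-\Phi(y)}$ arising from $c_f(y,\cdot) = c_f(y,x)+c_f(x,\cdot)$, for $\gamma_\Lambda(\cdot\,|y)$ with any $y \in V_\Delta(x)$. Writing $\int_X \gamma_\Delta(dy|x)\gamma_\Lambda(A|y)$ as a double sum over $V_\Delta(x)$ and then over $\mathcal{A}^\Lambda$, I would partition $V_\Delta(x)$ into classes according to the restriction to $\Lambda^c$; on each such class the inner partition function $Z_\Lambda$ is constant and equals $\sum_{y\text{ in the class}} e^{\Phi(y)}$, which is precisely the weight accumulated by the outer sum over that class. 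This forces the cancellation $\bigl(\sum_{\text{class}} e^{\Phi(y)}\bigr)/Z_\Lambda = 1$, leaving $\frac{1}{Z_\Delta}\sum_{w \in V_\Delta(x)} e^{\Phi(w)}\mathbbm{1}_A(w) = \gamma_\Delta(A|x)$. The only subtlety I anticipate is bookkeeping: verifying that each $\omega y_{\Lambda^c}$ with $y \in V_\Delta(x)$ again lies in $V_\Delta(x)$ (which holds because $\Lambda \subseteq \Delta$), and that $\xi \mapsto \xi x_{\Delta^c}$ identifies $\{\xi \in \mathcal{A}^\Delta : \xi x_{\Delta^c}\in X\}$ bijectively with $V_\Delta(x)$, so that the partition-function cancellation is exact rather than merely formal.
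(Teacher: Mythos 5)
Your proof is correct, and all four parts go through. Parts (a)--(c), which the paper dismisses as ``straightforward,'' are handled cleanly: the identification of $\gamma_\Lambda(\cdot\,|x)$ with a finite convex combination of Dirac masses via \eqref{lalalalallalalala} gives (a) at once, your factorization $\mathbbm{1}_{\{\omega x_{\Lambda^c}\in B\}}=\mathbbm{1}_B(x)\,\mathbbm{1}_{\{\omega x_{\Lambda^c}\in X\}}$ for $B\in\mathcal{B}_{\Lambda^c}$ (which is exactly Remark \ref{depdepdep} applied to $\Lambda^c$) gives (c), and the measurability argument for (b) is fine. For the consistency condition (d), the underlying cancellation is the same as in the paper's proof --- in both cases one splits the sum over $\mathcal{A}^\Delta$ according to the restriction to $\Lambda^c$ and observes that the inner partition function over $\mathcal{A}^\Lambda$ cancels against the weight accumulated by the outer sum over each class --- but the technical vehicle differs. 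The paper first establishes the integration formula \eqref{intgamma} by the standard indicator/simple-function/monotone-convergence machine, then performs the cancellation at the level of the finite-$n$ approximants $f_n$ from \eqref{gamma}, interchanging the limit with the finite sum over $\mathcal{A}^\Delta$. You instead pass immediately to the closed-form expression \eqref{oiadlr} and use the cocycle identity \eqref{cocycle} to rewrite all the kernels with a single reference potential $\Phi=c_f(x,\cdot)$, after which the integral against a finite combination of point masses needs no approximation argument and the cancellation is a purely finite computation on $V_\Delta(x)$. Your route is arguably tidier, since it avoids the limit-interchange bookkeeping entirely, at the cost of invoking the cocycle property of $c_f$ explicitly; the paper's route stays closer to the defining formula \eqref{gamma} and would survive even without the cocycle reformulation. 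The ``subtleties'' you flag at the end (that $\omega y_{\Lambda^c}$ stays in $V_\Delta(x)$ when $\Lambda\subseteq\Delta$, and that $\xi\mapsto\xi x_{\Delta^c}$ is a bijection onto $V_\Delta(x)$) are indeed the only points requiring care, and both hold as you state.
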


\begin{proof}
The proofs of conditions (a), (b), and (c) are straightforward. So, let us prove that given a Borel measurable function $f: X \rightarrow [0,+\infty]$, the equation
\begin{equation}\label{intgamma}
\int_{X}\gamma_{\Delta}(dy|x)f(y)  = \sum\limits_{\omega \in \mathcal{A}^{\Delta}}\gamma_{\Delta}([\omega]|x)\cdot \left(f(\omega x_{\Delta^c}) \mathbbm{1}_{\{\omega x_{\Delta^c} \in X\}}\right)
\end{equation}
holds for every point $x$ in $X$. Indeed, the equation above is easily verified if $f$ is a characteristic function. Using 
the linearity of the integral, it is straightforward to show that the equation above also holds for simple functions. Now, in the case where
$f$ is a non-negative extended real-valued function, the result follows by using the fact that 
there is an increasing sequence of non-negative (measurable) simple functions converging pointwise to $f$, and then applying
the monotone convergence theorem. 

Therefore,
\begin{align*}
\int_{X}&\gamma_{\Delta}(dy|x)\gamma_{\Lambda}(A|y) = \\
&= \sum\limits_{\omega \in \mathcal{A}^{\Delta}}\gamma_{\Delta}([\omega]|x) \cdot \left(\gamma_{\Lambda}(A|\omega x_{\Delta^c})\mathbbm{1}_{\{\omega x_{\Delta^c} \in X\}}\right) \\
&= \lim\limits_{n \to \infty}\sum\limits_{\omega \in \mathcal{A}^{\Delta}}\frac{\left(e^{f_n(\omega x_{\Delta^c})}\mathbbm{1}_{\{\omega x_{\Delta^c}\in X\}}\right)\left(\sum\limits_{\omega' \in \mathcal{A}^{\Lambda}}e^{f_n(\omega'\omega_{\Delta \backslash \Lambda}x_{\Delta^c})}\mathbbm{1}_{\{\omega'\omega_{\Delta \backslash \Lambda}x_{\Delta^c}\in A\}}\right)}{\left(\sum\limits_{\eta\, \in \mathcal{A}^{\Delta}}e^{f_n(\eta x_{\Delta^c})}\mathbbm{1}_{\{\eta x_{\Delta^c}\in X\}}\right)\left(\sum\limits_{\eta'\, \in \mathcal{A}^{\Lambda}}e^{f_n(\eta'\omega_{\Delta \backslash \Lambda}x_{\Delta^c})}\mathbbm{1}_{\{\eta'\omega_{\Delta \backslash \Lambda}x_{\Delta^c}\in X\}}\right)}\,\mathbbm{1}_{\{\omega x_{\Delta^c}\in X\}} \\
&= \lim\limits_{n \to \infty}\sum\limits_{\omega'' \in \mathcal{A}^{\Delta \backslash \Lambda}}\sum\limits_{\omega \in \mathcal{A}^{\Lambda}}\frac{\left(e^{f_n(\omega \omega''x_{\Delta^c})}\mathbbm{1}_{\{\omega \omega''x_{\Delta^c}\in X\}}\right) \left(\sum\limits_{\omega' \in \mathcal{A}^{\Lambda}}e^{f_n(\omega'\omega''x_{\Delta^c})}\mathbbm{1}_{\{\omega'\omega''x_{\Delta^c}\in A\}}\right)}{\left(\sum\limits_{\eta\, \in \mathcal{A}^{\Delta}}e^{f_n(\eta x_{\Delta^c})}\mathbbm{1}_{\{\eta x_{\Delta^c}\in X\}}\right)\left(\sum\limits_{\eta'\, \in \mathcal{A}^{\Lambda}}e^{f_n(\eta'\omega''x_{\Delta^c})}\mathbbm{1}_{\{\eta'\omega''x_{\Delta^c}\in X\}}\right)}\,\mathbbm{1}_{\{\omega \omega''x_{\Delta^c}\in X\}} \\
&= \lim\limits_{n \to \infty}\frac{\sum\limits_{\omega'' \in \mathcal{A}^{\Delta \backslash \Lambda}}\;\sum\limits_{\omega' \in \mathcal{A}^{\Lambda}}e^{f_n(\omega'\omega''x_{\Delta^c})}\mathbbm{1}_{\{\omega'\omega''x_{\Delta^c}\in A\}}}{\sum\limits_{\eta\, \in \mathcal{A}^{\Delta}}e^{f_n(\eta x_{\Delta^c})}\mathbbm{1}_{\{\eta x_{\Delta^c}\in X\}}} \\
&= \gamma_{\Delta}(A|x)
\end{align*}
holds for each $A \in \mathcal{B}$ and each $x \in X$.
\end{proof}

The main results of this section discuss the relationship between the notions of Gibbs measures established in section \ref{sec:medgibbs} with probability measures specified (or admitted) by a specification. In the framework of classical statistical physics, in particular, we are mainly concerned with the description of equilibrium states by means of probability measures specified by a \emph{Gibssian specification} in the sense of \cite{georgii:11}. In that context, such a measure is referred to as a Gibbs measure or DLR state, and its corresponding system of equations of prescribed conditional expectations is referred to as DLR equations (e.g. \cite{velenik2017,georgii:11,muir,sarig:09,ny:08}), named in honor of R. L. Dobrushin, O. E. Lanford, and D. Ruelle. Although such terminology should be reserved only for the Gibssian case, its extension to the general case is often observed in common practice.

\begin{theorem}\label{DLR}
Let $X \subseteq \mathcal{A}^{\zd}$ be a subshift, let $f$ be a function in $SV_d(X)$, and let $\gamma = (\gamma_{\Lambda})_{\Lambda \in \mathscr{S}}$ be the corresponding specification given in Definition \ref{especificacao15}. 
If $\mu$ is a Gibbs measure for $f$, then $\mu$ satisfies
\begin{equation}\label{DLReq}
\mu(A|\mathcal{B}_{\Lambda^c}) = \gamma_{\Lambda}(A|\,\cdot\,) \quad \text{$\mu$-a.e.}
\end{equation}
for each $\Lambda \in \mathscr{S}$ and each $A \in \mathcal{B}$.
\end{theorem}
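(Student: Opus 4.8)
The plan is to reduce the statement, for an arbitrary $A\in\mathcal B$, to the single case of a one-block cylinder $[\omega]$ with $\omega\in\mathcal A^{\Lambda}$. Since $\Lambda$ is finite, the atoms of $\mathcal B_{\Lambda^c}$ are the finite fibres $\{y\in X:y_{\Lambda^c}=x_{\Lambda^c}\}=\{\omega x_{\Lambda^c}:\omega\in\mathcal A^{\Lambda},\ \omega x_{\Lambda^c}\in X\}$. Writing $p_\omega:=\mu([\omega]\mid\mathcal B_{\Lambda^c})$, I would first record the structural identity $\mu(A\mid\mathcal B_{\Lambda^c})=\sum_{\omega\in\mathcal A^{\Lambda}}p_\omega\,\mathbbm 1_{\{\omega x_{\Lambda^c}\in A\}}$ $\mu$-a.e.; this follows from a monotone-class argument, checking the defining property of conditional expectation on cylinders $A=[\alpha]$ with base a box $\Delta\supseteq\Lambda$, where $\mathbbm 1_{\{\omega x_{\Lambda^c}\in[\alpha]\}}$ factorises as $\mathbbm 1_{\{\omega=\alpha_\Lambda\}}\mathbbm 1_{\{x_{\Delta\setminus\Lambda}=\alpha_{\Delta\setminus\Lambda}\}}$, so that the integral against any $B\in\mathcal B_{\Lambda^c}$ collapses to $\mu([\alpha]\cap B)$. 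Comparing this with the structural identity \eqref{lalalalallalalala} for $\gamma_\Lambda$, the theorem reduces to showing $p_\omega=\gamma_\Lambda([\omega]\mid\cdot)$ $\mu$-a.e.\ for every $\omega\in\mathcal A^{\Lambda}$.

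\textbf{Exploiting conformality.} For this I would use Proposition \ref{haaa} applied to the countable group $G$ of generators $\varphi_{\omega,\eta}$ from \eqref{eq:geradores}, which generate $\gr$. Each $\varphi:=\varphi_{\omega,\eta}$ is an involution with $\tfrac{d\varphi_\ast\mu}{d\mu}(x)=e^{c_f(x,\varphi(x))}$ $\mu$-a.e., whence the change of variables gives $\int_S e^{c_f(x,\varphi(x))}\,d\mu=\mu(\varphi(S))$ for every Borel $S$. Fix $\omega,\eta$ and the $\mathcal B_{\Lambda^c}$-set $V:=\{x:\omega x_{\Lambda^c}\in X,\ \eta x_{\Lambda^c}\in X\}$. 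Since $\varphi$ preserves $x_{\Lambda^c}$, it leaves every $B\in\mathcal B_{\Lambda^c}$ and $V$ invariant and maps $[\eta]\cap V$ bijectively onto $[\omega]\cap V$; moreover $c_f(x,\varphi(x))=-c_f(\omega x_{\Lambda^c},\eta x_{\Lambda^c})$ on $[\eta]\cap V$. Taking $S=[\eta]\cap B\cap V$ yields
\[
\mu([\omega]\cap B\cap V)=\int_{[\eta]\cap B\cap V}e^{-c_f(\omega x_{\Lambda^c},\eta x_{\Lambda^c})}\,d\mu .
\]
As the exponential weight is $\mathcal B_{\Lambda^c}$-measurable, the right-hand side equals $\int_{B\cap V}p_\eta\,e^{-c_f(\omega x_{\Lambda^c},\eta x_{\Lambda^c})}\,d\mu$ and the left-hand side equals $\int_{B\cap V}p_\omega\,d\mu$; since $B$ is arbitrary this gives the pointwise relation $p_\omega=p_\eta\,e^{-c_f(\omega x_{\Lambda^c},\eta x_{\Lambda^c})}$ $\mu$-a.e.\ on $V$.

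\textbf{Conclusion.} To finish I would match supports and normalisations. Testing against $B=\{\omega x_{\Lambda^c}\notin X\}\in\mathcal B_{\Lambda^c}$, on which $[\omega]\cap B=\varnothing$, shows $p_\omega=0$ $\mu$-a.e.\ where $\omega x_{\Lambda^c}\notin X$, so the support of $(p_\omega(x))_\omega$ agrees a.e.\ with that of $(\gamma_\Lambda([\omega]\mid x))_\omega$, namely the admissible blocks. On that common support the ratios coincide: by antisymmetry $c_f(a,b)=-c_f(b,a)$ of the cocycle together with \eqref{oiadlr} one gets $\gamma_\Lambda([\omega]\mid x)/\gamma_\Lambda([\eta]\mid x)=e^{-c_f(\omega x_{\Lambda^c},\eta x_{\Lambda^c})}$, exactly the relation proved for the $p_\omega$. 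Two probability vectors on the same finite support with identical pairwise ratios are equal (both $\sum_\omega p_\omega=1$ and the normalisation $\sum_\omega\gamma_\Lambda([\omega]\mid\cdot)=1$ from Lemma \ref{gammaprop} hold), so $p_\omega=\gamma_\Lambda([\omega]\mid\cdot)$ $\mu$-a.e., and the reduction completes the proof. The main obstacle is the constant presence of the admissibility constraint $\omega x_{\Lambda^c}\in X$: it forces one to work on the sets $V$ rather than on all of $X$, to verify that the generators $\varphi_{\omega,\eta}$ genuinely implement the block swap only there, and to discard the exceptional $\mu$-null sets simultaneously over the finitely many pairs $(\omega,\eta)\in\mathcal A^{\Lambda}\times\mathcal A^{\Lambda}$ so that a single null set serves for all the pointwise identities.
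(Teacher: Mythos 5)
Your proposal is correct and follows essentially the same route as the paper's proof: it uses the generators $\varphi_{\omega,\eta}$ of \eqref{eq:geradores} together with Proposition \ref{haaa}/\ref{boraut} and the $\varphi$-invariance of $\mathcal{B}_{\Lambda^c}$ to derive the pairwise relation $p_\omega=p_\eta\,e^{-c_f(\omega x_{\Lambda^c},\eta x_{\Lambda^c})}$ on the admissibility set, establishes the support identity $p_\omega=p_\omega\mathbbm{1}_{\{\omega x_{\Lambda^c}\in X\}}$ (the paper's Claim), and then normalises via $\sum_\eta p_\eta=1$ and \eqref{oiadlr} before extending to general $A\in\mathcal{B}$ by the decomposition \eqref{lalalalallalalala}. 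The only differences are cosmetic (order of the reduction step and phrasing the final identification as equality of ratios rather than summing over $\eta$).
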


\begin{proof}
Step 1. Let us show that for all $\Lambda \in \mathscr{S}$ and $\omega \in \mathcal{A}^{\Lambda}$, the equation
\begin{equation}
\mu\left([\omega]|{\mathcal{B}_{\Lambda^c}}\right)(x) = \gamma_{\Lambda}([\omega]|x)  
\end{equation}
holds for $\mu$-almost every point $x$ in $X$. Given $\eta$ in $\mathcal{A}^{\Lambda}$,
let us consider the map $\varphi = \varphi_{\omega,\eta}$  as in equation (\ref{eq:geradores}).
Recall that $\varphi$ is a Borel automorphism of $X$ such that $\mathsf{gr}(\varphi) \subseteq \gr$. 
For each $F \in \mathcal{B}_{\Lambda^c}$, we have
\begin{eqnarray*}
\int_{F}\mathbbm{1}_{[\eta]}(x)\mathbbm{1}_{\{\omega x_{\Lambda^c} \in X\}}\,d\mu(x) 
&=& \mu\big([\eta]\cap \{y \in X : \omega y_{\Lambda^c} \in X\} \cap F\big)\\
&=& \varphi_{\ast}\mu\big(\varphi\big([\eta]\cap \{y \in X : \omega y_{\Lambda^c} \in X\}\big) \cap \varphi(F)\big)\\
&=& \varphi_{\ast}\mu\big([\omega]\cap \{y \in X : \eta y_{\Lambda^c} \in X\} \cap \varphi(F)\big).
\end{eqnarray*}
Observe that the $\sigma$-algebra $\mathcal{B}_{\Lambda^c}$ is generated by the collection $\mathscr{C}$ given by
$\mathscr{C} = \{\pi_{i}^{-1}(A) : i \in \Lambda^c, A\subseteq \mathcal{A}\}$.
Moreover, since the collection $\{F \in \mathcal{B}_{\Lambda^c} : \varphi(F) = F\}$ is a $\sigma$-algebra of subsets of $X$ which contains
$\mathscr{C}$, then it coincides with $\mathcal{B}_{\Lambda^c}$. It follows that 

\begin{equation}
\int_{F}\mathbbm{1}_{[\eta]}(x)\mathbbm{1}_{\{\omega x_{\Lambda^c} \in X\}}\,d\mu(x) 
= \varphi_{\ast}\mu\big([\omega]\cap \{y \in X : \eta y_{\Lambda^c} \in X\} \cap F\big).
\end{equation}
In view of Proposition \ref{boraut} and the fact that $\mu$ is $(c_{f},\gr)$-conformal, we obtain
\begin{eqnarray*}
\int_{F}\mathbbm{1}_{[\eta]}(x)\mathbbm{1}_{\{\omega x_{\Lambda^c} \in X\}}\,d\mu(x) 
&=& \int_{[\omega]\cap \{y \in X : \eta y_{\Lambda^c} \in X\} \cap F}e^{c_f(x,\varphi^{-1}(x))}d\mu(x) \\
&=& \int_{F}e^{c_f(x,\varphi(x))}\mathbbm{1}_{\{\eta x_{\Lambda^c} \in X\}}\mathbbm{1}_{[\omega]}(x)\,d\mu(x) \\
&=& \int_{F}\underbrace{\left(e^{c_f(\omega x_{\Lambda^c},\varphi(\omega x_{\Lambda^c}))}\mathbbm{1}_{\{\omega x_{\Lambda^c} \in X,\; \eta x_{\Lambda^c} \in X\}}\right)}_{\mathcal{B}_{\Lambda^c}\;\text{-measurable function on}\;X}\cdot\,\mathbbm{1}_{[\omega]}(x)\,d\mu(x) \\
&=& \int_{F}\left(e^{c_f(\omega x_{\Lambda^c},\eta x_{\Lambda^c})}\mathbbm{1}_{\{\omega x_{\Lambda^c} \in X,\; \eta x_{\Lambda^c} \in X\}}\right)\cdot\mu([\omega]|\mathcal{B}_{\Lambda^c})(x)\,d\mu(x) 
\end{eqnarray*}
for every $F \in \mathcal{B}_{\Lambda^c}$. It follows that the equation
\begin{equation*}\label{dlreita}
\mu([\eta]|\mathcal{B}_{\Lambda^c})(x)\mathbbm{1}_{\{\omega x_{\Lambda^c} \in X\}} = \left(e^{c_f(\omega x_{\Lambda^c},\eta x_{\Lambda^c})}\mathbbm{1}_{\{\omega x_{\Lambda^c} \in X,\; \eta x_{\Lambda^c} \in X\}}\right)\cdot\mu([\omega]|\mathcal{B}_{\Lambda^c})(x)
\end{equation*}
holds for $\mu$-almost every $x$ in $X$.
Thus, if we sum the equation above over all elements $\eta$ of $\mathcal{A}^{\Lambda}$, we conclude that
\begin{equation}\label{dlr1}
\mathbbm{1}_{\{\omega x_{\Lambda^c} \in X\}} = \left(\,\sum\limits_{\eta\, \in \mathcal{A}^{\Lambda}}e^{c_f(\omega x_{\Lambda^c},\eta x_{\Lambda^c})}\mathbbm{1}_{\{\omega x_{\Lambda^c} \in X,\; \eta x_{\Lambda^c} \in X\}}\right)\cdot\mu([\omega]|\mathcal{B}_{\Lambda^c})(x)
\end{equation}
holds for $\mu$-almost every point $x$ in $X$. 

\begin{claim*}\label{lemadlr}
The equality
\begin{equation}\label{eqlemmadlr}
\mu([\omega]|\mathcal{B}_{\Lambda^c})(x) =\mu([\omega]|\mathcal{B}_{\Lambda^c})(x) \mathbbm{1}_{\{\omega x_{\Lambda^c} \in X\}}
\end{equation}
holds for $\mu$-almost every $x$ in $X$.
\end{claim*}
\begin{proof}[Proof of the Claim]\renewcommand{\qedsymbol}{\QEDB}
Given a set $F$ in $\mathcal{B}_{\Lambda^c}$, we have
\begin{align*} 
\int_{F}\mathbbm{1}_{[\omega]}(x)\,d\mu = \int_{F}\mathbbm{1}_{[\omega]}(x)\mathbbm{1}_{\{\omega x_{\Lambda^c} \in X\}}\,d\mu(x) =  \int_{F}\underbrace{\mu([\omega]|\mathcal{B}_{\Lambda^c})(x)\mathbbm{1}_{\{\omega x_{\Lambda^c} \in X\}}}_{\mathcal{B}_{\Lambda^c}\text{-measurable function on}\;X}\,d\mu(x).
\end{align*}
Thus, equality holds.
\end{proof}

Now, let $N \subseteq X$ be a set of measure zero such that equations (\ref{dlr1}) and (\ref{eqlemmadlr}) hold at each point of $X \backslash N$. For every $x$ in $X \backslash N$, if $\omega x_{\Lambda^c}$ belongs to  $X$, then
\[\underbrace{\left(\sum\limits_{\eta\, \in \mathcal{A}^{\Lambda}}e^{c_f(\omega x_{\Lambda^c},\eta x_{\Lambda^c})}\mathbbm{1}_{\{\eta x_{\Lambda^c} \in X\}}\right)}_{>\;0}\cdot\,\mu([\omega]|\mathcal{B}_{\Lambda^c})(x) = 1,\]
and using equation (\ref{oiadlr}), we obtain
\begin{equation}\label{baratabola15}
\mu([\omega]|\mathcal{B}_{\Lambda^c})(x) = \frac{1}{\sum\limits_{\eta\, \in \mathcal{A}^{\Lambda}}e^{c_f(\omega x_{\Lambda^c},\eta x_{\Lambda^c})}\mathbbm{1}_{\{\eta x_{\Lambda^c} \in X\}}} = \lim\limits_{n \to \infty}\frac{e^{f_n(\omega x_{\Lambda^c})}\mathbbm{1}_{\{\omega x_{\Lambda^c} \in X\}}}{\sum\limits_{\eta\, \in \mathcal{A}^{\Lambda}}e^{f_n(\eta x_{\Lambda^c})}\mathbbm{1}_{\{\eta x_{\Lambda^c} \in X\}}}. 	
\end{equation}
Otherwise, if $\omega x_{\Lambda^c}$ does not belong to $X$, we have $\mu([\omega]|\mathcal{B}_{\Lambda^c})(x) = 0$.
Therefore, we conclude that 
\begin{equation}
\mu([\omega]|\mathcal{B}_{\Lambda^c})(x) 
= \lim\limits_{n \to \infty}\frac{e^{f_n(\omega x_{\Lambda^c})}\mathbbm{1}_{\{\omega x_{\Lambda^c} \in X\}}}{\sum\limits_{\eta \in \mathcal{A}^{\Lambda}}e^{f_n(\eta x_{\Lambda^c})}\mathbbm{1}_{\{\eta x_{\Lambda^c} \in X\}}} 		
= \gamma_{\Lambda}([\omega]|x)
\end{equation}
holds at each point $x$ in $X \backslash N$.

Step 2. For all $\Lambda \in \mathscr{S}$ and $A \in \mathcal{B}$,  
it is straightforward to prove that we have
\[\mu(A|\mathcal{B}_{\Lambda^c})(x) = \sum\limits_{\omega \in \mathcal{A}^{\Lambda}}\mu([\omega]|\mathcal{B}_{\Lambda^c})(x) \mathbbm{1}_{\{\omega x_{\Lambda^c}\in A\}}\]
for $\mu$-almost every $x$ in $X$.
Then, in view of the previous step and equation (\ref{lalalalallalalala}), we conclude that
\begin{equation}
\mu(A|\mathcal{B}_{\Lambda^c}) (x) = \sum\limits_{\omega \in \mathcal{A}^{\Lambda}}\gamma_{\Lambda}([\omega]|x)\mathbbm{1}_{\{\omega x_{\Lambda^c}\in A\}} = \gamma_{\Lambda}(A|x)
\end{equation}
holds for $\mu$-almost every point $x$ in $X$. 
\end{proof} 

On the other hand, we have the following result.

\begin{theorem}\label{DLR1}
Let $X \subseteq \mathcal{A}^{\zd}$ be a subshift, let $f$ be a function in $SV_d(X)$, and let $\gamma = (\gamma_{\Lambda})_{\Lambda \in \mathscr{S}}$ be the corresponding specification given 
in Definition \ref{especificacao15}. If $\mu$ is a Borel probability measure on $X$ that 
satisfies 
\begin{equation}\label{spec}
\mu(A|\mathcal{B}_{\Lambda^c}) = \gamma_{\Lambda}(A|\,\cdot\,) \qquad \mu\text{-a.e.}
\end{equation}
for each $\Lambda \in \mathscr{S}$ and each $A \in \mathcal{B}$, then $\mu$ is a topological Gibbs measure for $f$.
\end{theorem}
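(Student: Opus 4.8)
The plan is to verify the characterization of topological Gibbs measures provided by Remark \ref{capeeeta}: it suffices to show that for every $\varphi \in \mathcal{F}(X)$ the pushforward $\varphi_*\mu$ is absolutely continuous with respect to $\mu$ with $\frac{d\varphi_*\mu}{d\mu}(x) = e^{c_f(x,\varphi^{-1}(x))}$ for $\mu$-a.e. $x$, equivalently that $\int_X g\circ\varphi\,d\mu = \int_X g\,e^{c_f(\,\cdot\,,\varphi^{-1}(\,\cdot\,))}\,d\mu$ for every non-negative Borel function $g$ on $X$. I fix such a $\varphi$ and choose a box $\Lambda = \Lambda_n \in \mathscr{S}$ with $\varphi(x)_{\Lambda^c} = x_{\Lambda^c}$ for all $x$; since then $(x,\varphi^{-1}(x)) \in \gr^{0} \subseteq \gr$ (note $\varphi^{-1}\in\mathcal{F}(X)$), the cocycle value $c_f(x,\varphi^{-1}(x))$ is well defined everywhere.

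First I would record the disintegration consequence of the hypothesis \eqref{spec}: because $\gamma_\Lambda(\,\cdot\,|x)$ is a probability measure (Lemma \ref{gammaprop}(a)) and is a version of $\mu(\,\cdot\,|\mathcal{B}_{\Lambda^c})$, the elementary property of conditional expectations gives $\int_X \gamma_\Lambda(A|x)\,d\mu(x) = \mu(A)$ for Borel $A$; by linearity on simple functions and monotone convergence this extends to $\int_X h\,d\mu = \int_X \gamma_\Lambda(h|x)\,d\mu(x)$ for every non-negative Borel $h$, where $\gamma_\Lambda(h|x) := \int_X h\,d\gamma_\Lambda(\,\cdot\,|x)$ is computed by the explicit formula \eqref{intgamma}.

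The geometric input is that, for each $x$, the map $\varphi$ permutes the finite fiber $S_x := \{\omega \in \mathcal{A}^\Lambda : \omega x_{\Lambda^c} \in X\}$: since $\varphi$ is a bijection of $X$ fixing all $\Lambda^c$-coordinates, setting $\varphi(\omega x_{\Lambda^c}) = \pi_x(\omega)\,x_{\Lambda^c}$ defines a permutation $\pi_x$ of $S_x$. The crux of the argument is then the pointwise identity, valid for $\mu$-a.e. $x$ and each $\omega \in S_x$,
\[
\gamma_\Lambda\big([\pi_x^{-1}(\omega)]\,\big|\,x\big) = \gamma_\Lambda([\omega]\,|\,x)\,e^{c_f(\omega x_{\Lambda^c},\,\varphi^{-1}(\omega x_{\Lambda^c}))},
\]
which I would obtain from the closed form \eqref{oiadlr} of the specification together with the cocycle relation \eqref{cocycle}: writing $\eta = \pi_x^{-1}(\omega)$ so that $\eta x_{\Lambda^c} = \varphi^{-1}(\omega x_{\Lambda^c})$, the ratio $\gamma_\Lambda([\eta]|x)/\gamma_\Lambda([\omega]|x)$ is a quotient of the two normalizing sums in \eqref{oiadlr}, and the factorization $c_f(\omega x_{\Lambda^c},\zeta x_{\Lambda^c}) = c_f(\omega x_{\Lambda^c},\eta x_{\Lambda^c}) + c_f(\eta x_{\Lambda^c},\zeta x_{\Lambda^c})$ collapses this ratio to $e^{c_f(\omega x_{\Lambda^c},\eta x_{\Lambda^c})}$.

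With these pieces in hand the computation closes quickly. Applying \eqref{intgamma} to $h = g\circ\varphi$ gives $\gamma_\Lambda(g\circ\varphi\,|\,x) = \sum_{\omega \in S_x}\gamma_\Lambda([\omega]|x)\,g(\pi_x(\omega)x_{\Lambda^c})$; reindexing this finite sum by the permutation $\pi_x$ and inserting the crux identity turns it into $\sum_{\omega \in S_x}\gamma_\Lambda([\omega]|x)\,e^{c_f(\omega x_{\Lambda^c},\varphi^{-1}(\omega x_{\Lambda^c}))}\,g(\omega x_{\Lambda^c})$, which by \eqref{intgamma} is exactly $\gamma_\Lambda\big(g\,e^{c_f(\,\cdot\,,\varphi^{-1}(\,\cdot\,))}\,\big|\,x\big)$. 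Integrating both ends against $d\mu(x)$ and using the disintegration identity in both directions yields $\int_X g\circ\varphi\,d\mu = \int_X g\,e^{c_f(\,\cdot\,,\varphi^{-1}(\,\cdot\,))}\,d\mu$, which is the desired Radon--Nikodym statement; by Remark \ref{capeeeta}, $\mu$ is a topological Gibbs measure. I expect the main obstacle to be the crux identity together with the $x$-dependent reindexing: one must carry the permutation $\pi_x$ correctly under the integral and ensure that the $\mu$-a.e.\ exceptional sets, taken over the finitely many cylinders $[\omega]$ with $\omega \in \mathcal{A}^{\Lambda}$ (hence over a countable family once $\varphi$ ranges over $\mathcal{F}(X)$), can be amalgamated into a single null set.
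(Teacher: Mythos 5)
Your proposal is correct and follows essentially the same route as the paper's proof: both reduce the statement to the conformality condition of Remark \ref{capeeeta} for each $\varphi\in\mathcal{F}(X)$ and verify it by combining the DLR hypothesis with the closed form \eqref{oiadlr} of the specification and the cocycle relation \eqref{cocycle}. The only (harmless) differences are bookkeeping: the paper first invokes Lemma \ref{satanasjr} to arrange that $\varphi^{-1}([\omega])$ is again a cylinder $[\zeta]$ on the same box and then extends from cylinders to all Borel sets by a $\pi$-system argument, whereas you test against arbitrary non-negative Borel functions via the disintegration identity and handle the $x$-dependent fiber permutation $\pi_x$ pointwise, which lets you bypass Lemma \ref{satanasjr} entirely.
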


Before we prove our result, let us prove the following lemma.

\begin{lemma}\label{satanasjr}
Given an arbitrary element $\varphi$ of $\mathcal{F}(X)$, there is a positive integer $n$ such that
\begin{itemize}
\item[(a)] the equality $\varphi (x)_{\Lambda_{n}^c}=x_{\Lambda_{n}^c}$ holds for every $x$ in $X$, and 
\item[(b)] for each pair $x,y$ of points in $X$ we have
$x_{\Lambda_{n}}=y_{\Lambda_{n}}$ if and only if $\varphi (x)_{\Lambda_{n}}=\varphi (y)_{\Lambda_{n}}$. 
\end{itemize}	
\end{lemma}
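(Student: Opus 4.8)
The plan is to extract from the definition of $\mathcal{F}(X)$ a box $\Lambda_{n_0}$ outside which $\varphi$ acts as the identity, and then to enlarge $n_0$ using uniform continuity so that the biconditional in (b) also holds. First I would record that (a) is essentially the definition: there is $n_0 \in \mathbb{N}$ with $\varphi(x)_{\Lambda_{n_0}^c} = x_{\Lambda_{n_0}^c}$ for all $x \in X$, and since $\Lambda_n^c \subseteq \Lambda_{n_0}^c$ whenever $n \geq n_0$, the same identity persists for every $n \geq n_0$; thus (a) will hold for any final choice $n \geq n_0$. I would also note that $\varphi^{-1}$ belongs to $\mathcal{F}(X)$ and fixes the very same $\Lambda_{n_0}^c$ (if $y = \varphi(x)$ then $\varphi^{-1}(y)_{\Lambda_{n_0}^c} = x_{\Lambda_{n_0}^c} = y_{\Lambda_{n_0}^c}$), a symmetry I will use to obtain the two directions of (b).

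The heart of the argument is the following sub-claim: for any $\psi \in \mathcal{F}(X)$ fixing $\Lambda_{n_0}^c$ there is $N_\psi \geq n_0$ such that, for every $n \geq N_\psi$, the implication $x_{\Lambda_n} = y_{\Lambda_n} \Rightarrow \psi(x)_{\Lambda_n} = \psi(y)_{\Lambda_n}$ holds. To prove it, I would use that $X$ is compact and $\psi$ continuous, hence uniformly continuous; by Remark \ref{box} this yields some $N_\psi \geq n_0$ with $x_{\Lambda_{N_\psi}} = y_{\Lambda_{N_\psi}} \Rightarrow \psi(x)_{\Lambda_{n_0}} = \psi(y)_{\Lambda_{n_0}}$, i.e. the values of $\psi(\cdot)$ inside the inner box $\Lambda_{n_0}$ depend only on finitely many coordinates. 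Now fix $n \geq N_\psi$ and suppose $x_{\Lambda_n} = y_{\Lambda_n}$. On the inner box $\Lambda_{n_0}$ we get $\psi(x)_{\Lambda_{n_0}} = \psi(y)_{\Lambda_{n_0}}$ from the modulus of continuity (since $\Lambda_{N_\psi} \subseteq \Lambda_n$), while on the annulus $\Lambda_n \setminus \Lambda_{n_0} \subseteq \Lambda_{n_0}^c$ we have $\psi(x)_j = x_j = y_j = \psi(y)_j$ because $\psi$ fixes $\Lambda_{n_0}^c$; combining the two pieces gives $\psi(x)_{\Lambda_n} = \psi(y)_{\Lambda_n}$, proving the sub-claim.

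Finally I would apply the sub-claim to both $\varphi$ and $\varphi^{-1}$ and set $n := \max\{n_0, N_\varphi, N_{\varphi^{-1}}\}$. For this $n$, the sub-claim for $\varphi$ gives the forward implication $x_{\Lambda_n} = y_{\Lambda_n} \Rightarrow \varphi(x)_{\Lambda_n} = \varphi(y)_{\Lambda_n}$, and the sub-claim for $\varphi^{-1}$, applied to the pair $\varphi(x), \varphi(y)$, gives $\varphi(x)_{\Lambda_n} = \varphi(y)_{\Lambda_n} \Rightarrow x_{\Lambda_n} = y_{\Lambda_n}$, which is the converse; together these establish (b), while $n \geq n_0$ secures (a). The main obstacle to anticipate is that $\varphi(x)_{\Lambda_n}$ genuinely depends on coordinates of $x$ far outside $\Lambda_n$, so (b) is not a pointwise locality statement; the resolution is precisely the combination of uniform continuity (to tame the dependence inside the inner box) with the fact that $\varphi$ is the identity on the outer annulus, together with the symmetric treatment of $\varphi^{-1}$ needed to extract the nontrivial converse implication.
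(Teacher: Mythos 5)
Your proposal is correct and follows essentially the same route as the paper's proof: both obtain the box from the definition of $\mathcal{F}(X)$, invoke uniform continuity of $\varphi$ and $\varphi^{-1}$ on the compact space $X$ to control the coordinates in the inner box, use the identity-outside property on the annulus $\Lambda_n \setminus \Lambda_{n_0}$, and derive the converse implication in (b) by the symmetric argument for $\varphi^{-1}$. The only cosmetic difference is that you isolate the common step as a sub-claim and take a maximum of two thresholds, whereas the paper chooses a single $n$ serving both maps at once.
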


\begin{proof}[Proof of Lemma \ref{satanasjr}]
Let $m$ be a positive integer such that $\varphi (x)_{\Lambda_{m}^c}= x_{\Lambda_{m}^c}$ holds for every $x$ in $X$. 
Since $\varphi$ and $\varphi^{-1}$ are continuous functions on $X$, by compactness, it follows that both functions are uniformly continuous. Then, 
there is an integer $n \geq m$ such that for all points $x$ and $y$ in $X$ satisfying
$x_{\Lambda_{n}}=y_{\Lambda_{n}}$ we have $\varphi (x)_{\Lambda_{m}}=\varphi (y)_{\Lambda_{m}}$
and $\varphi^{-1} (x)_{\Lambda_{m}}=\varphi^{-1} (y)_{\Lambda_{m}}$.

Note that part (a) follows from the fact that $\Lambda_{n}^{c} \subseteq \Lambda_{m}^{c}$.
On the other hand, for any two elements $x$ and $y$ of $X$ such that
$x_{\Lambda_{n}}=y_{\Lambda_{n}}$ we have  
$\varphi (x)_{\Lambda_{n} \backslash \Lambda_{m}}= x_{\Lambda_{n} \backslash \Lambda_{m}} = y_{\Lambda_{n} \backslash \Lambda_{m}}=\varphi (y)_{\Lambda_{n} \backslash \Lambda_{m}}$.
Thus, the equality $\varphi (x)_{\Lambda_{n}}=\varphi (y)_{\Lambda_{n}}$ holds whenever $x$ and $y$ are of elements of $X$ that satisfy $x_{\Lambda_{n}}=y_{\Lambda_{n}}$.
One can easily prove an analogous result for $\varphi^{-1}$.
Therefore, part (b) follows.
\end{proof}

\begin{proof}[Proof of Theorem \ref{DLR1}]
Step 1.  Let us write 
\[\mathcal{F}(X) = \bigcup\limits_{n \in \mathbb{N}}\mathcal{F}_n(X),\]
where $\mathcal{F}_n(X)$ is the set of all homeomorphisms $\varphi$ from $X$ onto itself satisfying items (a) and (b) from Lemma \ref{satanasjr}.
Given a positive integer $n$, let $\varphi$ be an arbitrary element of $\mathcal{F}_n(X)$. Let us show that for every $\omega \in \mathcal{A}^{\Lambda_n}$ we have
\begin{equation}\label{recdlr1}
\varphi_{\ast}\mu([\omega]) = \int_{[\omega]}e^{c_{f}(x,\varphi^{-1}(x))}\,d\mu(x).
\end{equation}
In the following, let us denote $\Lambda_n$ simply by $\Lambda$ just for convenience. Observe that $\varphi^{-1}([\omega]) = [\zeta]$ for some $\zeta \in \mathcal{A}^{\Lambda}$. 
Furthermore, it is straightforward  to check that for every $x$ in $X$, $\omega x_{\Lambda^{c}}$ belongs to $X$ if and only if $\zeta x_{\Lambda^{c}}$ belongs to $X$.
Then 
\begin{eqnarray*}
\gamma_{\Lambda}([\zeta]|x)  	
&=& \lim\limits_{n \to \infty}\frac{e^{f_n(\zeta x_{\Lambda^c})}\mathbbm{1}_{\{\omega x_{\Lambda^c} \in X\}}}{\sum\limits_{\eta\, \in \mathcal{A}^{\Lambda}}e^{f_n(\eta x_{\Lambda^c})}\mathbbm{1}_{\{\eta x_{\Lambda^c} \in X\}}} \\ 	
&=& \lim\limits_{n \to \infty}\left(e^{f_n(\zeta x_{\Lambda^c})-f_n(\omega x_{\Lambda^c})}\mathbbm{1}_{\{\omega x_{\Lambda^c} \in X\}}\right)\cdot\frac{e^{f_n(\omega x_{\Lambda^c})}\mathbbm{1}_{\{\omega x_{\Lambda^c} \in X\}}}{\sum\limits_{\eta\, \in \mathcal{A}^{\Lambda}}e^{f_n(\eta x_{\Lambda^c})}\mathbbm{1}_{\{\eta x_{\Lambda^c} \in X\}}} \\ 	
&=& \left(e^{c_f(\omega x_{\Lambda^c},\zeta x_{\Lambda^c})}\mathbbm{1}_{\{\omega x_{\Lambda^c} \in X\}}\right)\cdot \gamma_{\Lambda}([\omega]|x) \\
&=& \left(e^{c_f(\omega x_{\Lambda^c},\varphi^{-1}(\omega x_{\Lambda^c}))}\mathbbm{1}_{\{\omega x_{\Lambda^c} \in X\}}\right)\cdot \gamma_{\Lambda}([\omega]|x)
\end{eqnarray*}
holds at each point $x$ in $X$. Therefore, we have
\begin{eqnarray*}
\varphi_{\ast}\mu([\omega])
&=& \int_{X}\mathbbm{1}_{[\zeta]}(x)\,d\mu(x) \\
&=& \int_{X}\mu([\zeta]|\mathcal{B}_{\Lambda^c})(x)\,d\mu(x) \\
&=& \int_{X}\underbrace{\left(e^{c_f(\omega x_{\Lambda^c},\varphi^{-1}(\omega x_{\Lambda^c}))}\mathbbm{1}_{\{\omega x_{\Lambda^c} \in X\}}\right)}_{\mathcal{B}_{\Lambda^c}\text{-measurable function on}\; X}\,\cdot\, \mu([\omega]|\mathcal{B}_{\Lambda^c})(x)\,d\mu(x) \\
&=& \int_{X}\left(e^{c_f(\omega x_{\Lambda^c},\varphi^{-1}(\omega x_{\Lambda^c}))}\mathbbm{1}_{\{\omega x_{\Lambda^c} \in X\}}\right)\cdot\mathbbm{1}_{[\omega]}(x)\,d\mu(x) \\
&=& \int_{X}e^{c_f(x,\varphi^{-1}(x))}\mathbbm{1}_{[\omega]}(x)\,d\mu(x), 
\end{eqnarray*}
and equation (\ref{recdlr1}) follows.

Step 2. Let $\varphi$ be an arbitrary element of $\mathcal{F}(X)$. Let us consider a collection $\mathscr{C}$ of subsets of $X$ defined by
\[\mathscr{C} = \left\{[\zeta] : \zeta \in \mathcal{A}^{\Lambda}, \Lambda \in \mathscr{S}\right\}\cup\{\emptyset\}.\]
Observe that $\mathscr{C}$ is a $\pi$-system that generates the Borel $\sigma$-algebra of $X$.
If we show that 
\begin{equation}\label{recdlr2}
\varphi_{\ast}\mu(C) = \int_{C}e^{c_{f}(x,\varphi^{-1}(x))}\,d\mu(x)
\end{equation}
holds for every $C \in \mathscr{C}$, then the proof will be complete. 
For each $\Lambda \in \mathscr{S}$ and each $\zeta \in \mathcal{A}^{\Lambda}$,
there is a positive integer $n$ such that $\Lambda \subseteq \Lambda_n$ and $\varphi \in \mathcal{F}_n(X)$. Using the identity 
$[\zeta] = \bigcup\limits_{\substack{\omega \in \mathcal{A}^{\Lambda_n} \\ \omega_{\Lambda} = \zeta}}[\omega]$ and the previous step, we obtain
\begin{eqnarray*}
\varphi_{\ast}\mu([\zeta]) = \sum\limits_{\substack{\omega \in \mathcal{A}^{\Lambda_{n}} \\ \omega_{\Lambda} = \zeta}}\varphi_{\ast}\mu([\omega]) 
=\sum\limits_{\substack{\omega \in \mathcal{A}^{\Lambda_{n}} \\ \omega_{\Lambda} = \zeta}}\int_{[\omega]}e^{c_{f}(x,\varphi^{-1}(x))}\,d\mu(x) 
=\int_{[\zeta]}e^{c_{f}(x,\varphi^{-1}(x))}\,d\mu(x).
\end{eqnarray*}
\end{proof}

The next result follows immediately from Theorems \ref{DLR} and \ref{DLR1} and provides us with a characterization for Gibbs measures on subshifts on finite type in terms of the DLR equations.	

\begin{corollary}\label{DLR2}
Let $X \subseteq \mathcal{A}^{\zd}$ be a subshift of finite type, let $f$ be a function in $SV_d(X)$, and let  $\gamma = (\gamma_{\Lambda})_{\Lambda \in \mathscr{S}}$ be the corresponding specification given in Definition \ref{especificacao15}. Then, $\mu$ is a Gibbs measure for $f$ if
and only if $\mu$ is a Borel probability measure on $X$ that satisfies
\begin{equation}
\mu(A|\mathcal{B}_{\Lambda^c}) = \gamma_{\Lambda}(A|\,\cdot\,) \qquad \mu\text{-a.e.}
\end{equation}
for each $\Lambda \in \mathscr{S}$ and each $A \in \mathcal{B}$.
\end{corollary}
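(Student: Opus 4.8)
The plan is to deduce the equivalence directly from the two preceding theorems together with the identity $\gr = \gr^0$, which holds for subshifts of finite type.

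First I would dispose of the forward implication. If $\mu$ is a Gibbs measure for $f$, then Theorem \ref{DLR} — established there for an arbitrary subshift $X$ — yields at once that $\mu(A|\mathcal{B}_{\Lambda^c}) = \gamma_{\Lambda}(A|\,\cdot\,)$ holds $\mu$-a.e. for every $\Lambda \in \mathscr{S}$ and every $A \in \mathcal{B}$. The finite-type hypothesis plays no role in this direction.

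For the converse, suppose $\mu$ is a Borel probability measure satisfying the DLR equations. By Theorem \ref{DLR1}, $\mu$ is a topological Gibbs measure for $f$, i.e. $\mu$ is $\left(c_f\vert_{\gr^0},\gr^0\right)$-conformal. At this point I would invoke the structural fact that, when $X$ is an SFT, the topological Gibbs relation coincides with the Gibbs relation, $\gr^0 = \gr$ (as recorded in Subsection \ref{sec:medgibbs} and proved in \cite{Kimura}). Under this identification $c_f\vert_{\gr^0} = c_f$, so being $\left(c_f\vert_{\gr^0},\gr^0\right)$-conformal is literally the same as being $(c_f,\gr)$-conformal; hence $\mu$ is a Gibbs measure for $f$, completing the argument.

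The only genuine content beyond citing the two theorems is the equality $\gr = \gr^0$ for SFTs, so this is where I expect the main obstacle to sit. The inclusion $\gr^0 \subseteq \gr$ is immediate from the definitions. The reverse inclusion requires realizing every pair $(x,y)$ that differs in only finitely many coordinates as $y = \varphi(x)$ for some \emph{homeomorphism} $\varphi \in \mathcal{F}(X)$; the Borel automorphisms $\varphi_{\omega,\eta}$ of \eqref{eq:geradores} that generate $\gr$ need not be continuous for a general subshift, but for an SFT the admissibility of a configuration is a local (clopen) constraint, so these exchange maps can be upgraded into genuine homeomorphisms fixing all coordinates outside a fixed box $\Lambda_n$. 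This is precisely the point at which the finite-type hypothesis is indispensable, and it is the step one must secure — here by appeal to \cite{Kimura} — for the conclusion to go through.
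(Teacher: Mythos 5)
Your proposal is correct and follows exactly the paper's route: the corollary is obtained by combining Theorem \ref{DLR} (Gibbs $\Rightarrow$ DLR, valid for any subshift) with Theorem \ref{DLR1} (DLR $\Rightarrow$ topological Gibbs) and the fact, recorded in Subsection \ref{sec:medgibbs} and cited to \cite{Kimura}, that $\gr = \gr^0$ for subshifts of finite type, so that topological Gibbs and Gibbs coincide. Your added sketch of why the exchange maps become homeomorphisms in the SFT case is a correct elaboration of the one step the paper delegates to \cite{Kimura}, not a different argument.
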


\subsection{Gibbs measures in statistical mechanics}
In the general framework of classical equilibrium statistical mechanics, the main objects of interest are the DLR states, also referred to as infinite-volume Gibbs states in the physics literature, that are essentially probability measures admitted by a Gibssian specification \cite{georgii:11}. The main result of this subsection is Theorem \ref{thm:Gibbsian}, which states that, under certain conditions, if $f$ derives from an interaction potential $\Phi$, whose meaning will be precised later, the specification given in Definition \ref{especificacao15} assumes the familiar form of a Gibbs distribution. In other words, we obtain some sufficient conditions so that for each finite volume $\Lambda \subseteq \zd$, we have a Hamiltonian $H_{\Lambda}^{\Phi}$, and the kernels $\gamma_{\Lambda}$ from the specification from Definition \ref{especificacao15} can we written as the usual form the statistical mechanics literature:

\begin{equation}
    \gamma_{\Lambda}(A|\,x) = \frac{\sum\limits_{\omega \in \mathcal{A}^{\Lambda}}e^{-H^{\Phi}_{\Lambda}(\omega x_{\Lambda^{c}})}\mathbbm{1}_{\{\omega x_{\Lambda^c}\in A\}}}{\sum\limits_{\eta \in \mathcal{A}^{\Lambda}}e^{-H^{\Phi}_{\Lambda}(\eta x_{\Lambda^{c}})}\mathbbm{1}_{\{\eta x_{\Lambda^c}\in X\}}}.
\end{equation}

Recall that the set of all nonempty finite subsets of $\zd$ is denoted by $\mathscr{S}$. 
Let $\mathscr{S}_{0}$ be an infinite subset of $\mathscr{S}$, and let $\Psi : \mathscr{S}_{0} \rightarrow \mathbb{R}$ be an arbitrary function. 
We will say that the infinite sum 
\[\sum\limits_{\Lambda \in \mathscr{S}_{0}} \Psi_{\Lambda}\]
exists and is equal to a real number $s$, if the net 
\[\left(\sum\limits_{\substack{\Lambda \in \mathscr{S}_{0} \\ \Lambda \subseteq \Delta}} \Psi_{\Lambda}\right)_{\Delta \in \mathscr{S}}\]
converges to $s$. In this case, we write
\[\sum\limits_{\Lambda \in \mathscr{S}_{0}} \Psi_{\Lambda} = s.\]
	
\begin{remark}\label{posisummmmm}
If $\Psi_{\Lambda} \geq 0$ for each $\Lambda$ in $\mathscr{S}_{0}$, then the sum $\sum\limits_{\Lambda \in \mathscr{S}_{0}} \Psi_{\Lambda}$ converges if and only if 
\[\sup\left\{\sum\limits_{\substack{\Lambda \in \mathscr{S}_{0} \\ \Lambda \subseteq \Delta}}\Psi_{\Lambda} : \Delta \in \mathscr{S}\right\}\]
is finite. 	
In either case, we have 
\[\sum\limits_{\Lambda \in \mathscr{S}_{0}} \Psi_{\Lambda} = \sup\left\{\sum\limits_{\substack{\Lambda \in \mathscr{S}_{0} \\ \Lambda \subseteq \Delta}}\Psi_{\Lambda} : \Delta \in \mathscr{S}\right\}.\]
\end{remark}

Given a family  $(\Phi_{\Lambda})_{\Lambda \in \mathscr{S}}$ of real-valued functions defined on $X$ such that $\sum\limits_{\Lambda \in \mathscr{S}_{0}} \Phi_{\Lambda}(x)$ exists at each point
$x$ in $X$, then we will denote by $\sum\limits_{\Lambda \in \mathscr{S}_{0}} \Phi_{\Lambda}$ the function which associates to each point $x$ in $X$ the sum $\sum\limits_{\Lambda \in \mathscr{S}_{0}} \Phi_{\Lambda}(x)$.

\begin{lemma}\label{lemmamalucooo}
Let $(\Phi_{\Lambda})_{\Lambda \in \mathscr{S}}$ be a family of real-valued bounded functions defined on $X$, and let $\mathscr{S}_{0}$ be an infinite subset of $\mathscr{S}$. Suppose that $\sum\limits_{\Lambda \in \mathscr{S}_{0}} \|\Phi_{\Lambda}\|_{\infty}$ converges. Then, it follows that
\begin{enumerate}[label=(\alph*),ref=\alph*]
\item the net $\left(\sum\limits_{\substack{\Lambda \in \mathscr{S}_{0} \\ \Lambda \subseteq \Delta}} \Phi_{\Lambda}\right)_{\Delta \in \mathscr{S}}$ converges uniformly to $\sum\limits_{\Lambda \in \mathscr{S}_{0}} \Phi_{\Lambda}$, and

 
\item\label{naointeressa} if $(c_{n})_{n \in \mathbb{N}}$ is a sequence of functions $c_{n} : \mathscr{S}_{0} \rightarrow \mathbb{R}$ which converges pointwise to a function
$c : \mathscr{S}_{0} \rightarrow \mathbb{R}$ and  
\[C := \sup\limits_{n \in \mathbb{N}}\sup\limits_{\Lambda \in \mathscr{S}_{0}}|c_{n}(\Lambda)| < + \infty,\] 
then 
\[\lim\limits_{n \to \infty} \left\|\sum\limits_{\Lambda \in \mathscr{S}_{0}} c_{n}(\Lambda) \Phi_{\Lambda} - \sum\limits_{\Lambda \in \mathscr{S}_{0}} c(\Lambda) \Phi_{\Lambda} \right\|_{\infty} =0.\]
\end{enumerate}	
\end{lemma}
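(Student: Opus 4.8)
The plan is to regard the convergent series of sup-norms $s := \sum_{\Lambda \in \mathscr{S}_0}\|\Phi_\Lambda\|_\infty$ as a dominating series that controls every sum appearing in the statement \emph{uniformly} in the spatial variable $x \in X$; both parts then reduce to tail estimates for $s$.

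For part (a), I would first note that for each fixed $x$ the unordered sum $\sum_{\Lambda \in \mathscr{S}_0}\Phi_\Lambda(x)$ converges absolutely, since $\sum_{\Lambda \in \mathscr{S}_0}|\Phi_\Lambda(x)| \le s$, so the limit function is well-defined. To upgrade pointwise to uniform convergence, fix $\varepsilon > 0$. By the definition of convergence of the net of partial sums of $s$, together with Remark \ref{posisummmmm}, there is a $\Delta_0 \in \mathscr{S}$ with $s - \sum_{\Lambda \in \mathscr{S}_0,\,\Lambda \subseteq \Delta_0}\|\Phi_\Lambda\|_\infty < \varepsilon$; by non-negativity this difference equals $\sum_{\Lambda \in \mathscr{S}_0,\,\Lambda \not\subseteq \Delta_0}\|\Phi_\Lambda\|_\infty$. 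Then for every $\Delta \supseteq \Delta_0$ and every $x \in X$,
\[
\Bigl|\sum_{\Lambda \in \mathscr{S}_0}\Phi_\Lambda(x) - \sum_{\substack{\Lambda \in \mathscr{S}_0 \\ \Lambda \subseteq \Delta}}\Phi_\Lambda(x)\Bigr| \;\le\; \sum_{\substack{\Lambda \in \mathscr{S}_0 \\ \Lambda \not\subseteq \Delta}}\|\Phi_\Lambda\|_\infty \;\le\; \sum_{\substack{\Lambda \in \mathscr{S}_0 \\ \Lambda \not\subseteq \Delta_0}}\|\Phi_\Lambda\|_\infty \;<\; \varepsilon,
\]
and since the bound is independent of $x$, the net converges uniformly.

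For part (b) (the case $C=0$ being trivial, so assume $C>0$), the idea is a dominated-convergence argument for series. Since $c$ is a pointwise limit of the $c_n$, one has $|c(\Lambda)| \le C$ for all $\Lambda$, so both $\sum_\Lambda c_n(\Lambda)\Phi_\Lambda$ and $\sum_\Lambda c(\Lambda)\Phi_\Lambda$ converge absolutely and uniformly by part (a) applied to the families $(c_n(\Lambda)\Phi_\Lambda)_\Lambda$ and $(c(\Lambda)\Phi_\Lambda)_\Lambda$. It then suffices to show that
\[
\sum_{\Lambda \in \mathscr{S}_0}|c_n(\Lambda)-c(\Lambda)|\,\|\Phi_\Lambda\|_\infty \longrightarrow 0,
\]
since this quantity dominates $\bigl|\sum_\Lambda (c_n(\Lambda)-c(\Lambda))\Phi_\Lambda(x)\bigr|$ uniformly in $x$. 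Given $\varepsilon > 0$, I would split $\mathscr{S}_0$ into a finite head $F := \{\Lambda \in \mathscr{S}_0 : \Lambda \subseteq \Delta_0\}$ and its complement, where $\Delta_0$ is chosen so that $\sum_{\Lambda \not\subseteq \Delta_0}\|\Phi_\Lambda\|_\infty < \varepsilon/(4C)$. The complement is controlled uniformly in $n$ using $|c_n(\Lambda)-c(\Lambda)| \le 2C$, contributing at most $\varepsilon/2$, while the head is a finite sum whose terms each tend to $0$ by the pointwise convergence $c_n(\Lambda)\to c(\Lambda)$, hence falls below $\varepsilon/2$ for all large $n$.

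The routine parts are the absolute-convergence bookkeeping and the elementary inequalities; the only genuine content is the head/tail split in part (b), the standard device for pushing a dominated-convergence theorem through a countable index set. I expect this to be the main obstacle, such as it is: one must separate the regime where the crude uniform bound $2C$ is enough from the finite regime where the pointwise convergence of the coefficients is invoked.
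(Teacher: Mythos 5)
Your proof is correct and takes essentially the same route as the paper's: both parts rest on choosing a finite $\Delta_0$ so that the tail $\sum_{\Lambda \in \mathscr{S}_0,\,\Lambda \not\subseteq \Delta_0}\|\Phi_\Lambda\|_{\infty}$ is small, and part (b) is handled by the same head/tail split (finite head controlled by pointwise convergence of the $c_n$, tail controlled by the uniform bound $C$ times the small tail of the norm series). The only cosmetic difference is that you first reduce part (b) to the scalar estimate $\sum_{\Lambda}|c_n(\Lambda)-c(\Lambda)|\,\|\Phi_\Lambda\|_{\infty}\to 0$, whereas the paper runs a four-term triangle inequality through intermediate partial sums; the content is identical.
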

\begin{proof}
For each positive number $\epsilon$ there is a set $\Delta_{0} \in \mathscr{S}$ such that 
\begin{equation}
\sum\limits_{\substack{\Lambda \in \mathscr{S}_{0},\, \Lambda \cap \Delta_{0}^{c} \neq \emptyset \\ \Lambda \subseteq \Delta}}\|\Phi_{\Lambda}\|_{\infty} = \sum\limits_{\substack{\Lambda \in \mathscr{S}_{0} \\ \Lambda \subseteq \Delta}}\|\Phi_{\Lambda}\|_{\infty}
- \sum\limits_{\substack{\Lambda \in \mathscr{S}_{0} \\ \Lambda \subseteq \Delta_{0}}}\|\Phi_{\Lambda}\|_{\infty} < \frac{\epsilon}{2}	
\end{equation}	
holds whenever $\Delta$ belongs to $\mathscr{S}$ and satisfies $\Delta_{0} \subseteq \Delta$. It follows that for every $\Delta$ and $\Delta'$ in $\mathscr{S}$ such that
$\Delta_{0} \subseteq \Delta$ and $\Delta_{0} \subseteq \Delta'$, we have
\begin{eqnarray*}
\left\|\sum\limits_{\substack{\Lambda \in \mathscr{S}_{0} \\ \Lambda \subseteq \Delta}}\Phi_{\Lambda}- \sum\limits_{\substack{\Lambda \in \mathscr{S}_{0} \\ \Lambda \subseteq \Delta'}}\Phi_{\Lambda}\right\|_{\infty} 	
&=& \left\|\sum\limits_{\substack{\Lambda \in \mathscr{S}_{0},\,\Lambda \cap \Delta_{0}^{c} \neq \emptyset \\ \Lambda \subseteq \Delta}}\Phi_{\Lambda} - \sum\limits_{\substack{\Lambda \in \mathscr{S}_{0},\,\Lambda \cap \Delta_{0}^{c} \neq \emptyset \\ \Lambda \subseteq \Delta'}}\Phi_{\Lambda}\right\|_{\infty}\\
&\leq& \sum\limits_{\substack{\Lambda \in \mathscr{S}_{0},\,\Lambda \cap \Delta_{0}^{c} \neq \emptyset \\ \Lambda \subseteq \Delta}}\|\Phi_{\Lambda}\|_{\infty} + \sum\limits_{\substack{\Lambda \in \mathscr{S}_{0},\,\Lambda \cap \Delta_{0}^{c} \neq \emptyset \\ \Lambda \subseteq \Delta'}}\|\Phi_{\Lambda}\|_{\infty} 
< \epsilon.
\end{eqnarray*}
We conclude that $\left(\sum\limits_{\substack{\Lambda \in \mathscr{S}_{0}\\ \Lambda \subseteq \Delta}}\Phi_{\Lambda}\right)_{\Delta \in \mathscr{S}}$ is a Cauchy net on the space of all real-valued bounded functions 
on $X$, thus part (a) follows.

For part (b), observe that since $|c_{n}(\Lambda)| \leq C$ holds for each $\Lambda$ and each $n$, it follows that $|c(\Lambda)| \leq C$ holds for each $\Lambda$. Thus
$\sum\limits_{\Lambda \in \mathscr{S}_{0}} \|c(\Lambda)\Phi_{\Lambda}\|_{\infty}$ converges, as well as each
sum $\sum\limits_{\Lambda \in \mathscr{S}_{0}} \|c_{n}(\Lambda)\Phi_{\Lambda}\|_{\infty}$. 

For each positive integer $n$ and every $\Delta$ and $\Delta_{0}$ in $\mathscr{S}$ such that $\Delta_{0}\subseteq \Delta$, we have
\begin{eqnarray*}
\Bigg\|\sum\limits_{\Lambda \in \mathscr{S}_{0}} c_{n}(\Lambda)\Phi_{\Lambda}&-&\sum\limits_{\Lambda \in \mathscr{S}_{0}}c(\Lambda)\Phi_{\Lambda}\Bigg\|_{\infty} \\
&\leq& \left\|\sum\limits_{\Lambda \in \mathscr{S}_{0}}c_{n}(\Lambda)\Phi_{\Lambda} - \sum\limits_{\substack{\Lambda \in \mathscr{S}_{0} \\ \Lambda \subseteq \Delta}}c_{n}(\Lambda)\Phi_{\Lambda}\right\|_{\infty} + \left\|\sum\limits_{\substack{\Lambda \in \mathscr{S}_{0} \\ \Lambda \subseteq \Delta}}c_{n}(\Lambda)\Phi_{\Lambda} - \sum\limits_{\substack{\Lambda \in \mathscr{S}_{0}\\ \Lambda \subseteq \Delta_{0}}}c_{n}(\Lambda)\Phi_{\Lambda}\right\|_{\infty} \\	
&&+ \left\|\sum\limits_{\substack{\Lambda \in \mathscr{S}_{0} \\ \Lambda \subseteq \Delta_{0}}}c_{n}(\Lambda)\Phi_{\Lambda} - \sum\limits_{\substack{\Lambda \in \mathscr{S}_{0} \\ \Lambda \subseteq \Delta_{0}}}c(\Lambda)\Phi_{\Lambda}\right\|_{\infty} + \left\|\sum\limits_{\substack{\Lambda \in \mathscr{S}_{0} \\ \Lambda \subseteq \Delta_{0}}}c(\Lambda)\Phi_{\Lambda} - \sum\limits_{\Lambda \in \mathscr{S}_{0}}c(\Lambda)\Phi_{\Lambda}\right\|_{\infty} \\
&\leq& \left\|\sum\limits_{\Lambda \in \mathscr{S}_{0}}c_{n}(\Lambda)\Phi_{\Lambda} - \sum\limits_{\substack{\Lambda \in \mathscr{S}_{0} \\ \Lambda \subseteq \Delta}}c_{n}(\Lambda)\Phi_{\Lambda}\right\|_{\infty} + C \sum\limits_{\substack{\Lambda \in \mathscr{S}_{0},\, \Lambda \cap \Delta_{0}^{c} \neq \emptyset \\ \Lambda \subseteq \Delta}}\|\Phi_{\Lambda}\|_{\infty} \\	
&&+ \max\limits_{\substack{\Lambda \in \mathscr{S}_{0} \\ \Lambda \subseteq \Delta_{0}}}|c_{n}(\Lambda) - c(\Lambda)| \cdot \sum\limits_{\substack{\Lambda \in \mathscr{S}_{0} \\ \Lambda \subseteq \Delta_{0}}} \|\Phi_{\Lambda}\|_{\infty} + \left\|\sum\limits_{\substack{\Lambda \in \mathscr{S}_{0} \\ \Lambda \subseteq \Delta_{0}}}c(\Lambda)\Phi_{\Lambda} - \sum\limits_{\Lambda \in \mathscr{S}_{0}}c(\Lambda)\Phi_{\Lambda}\right\|_{\infty}. 
\end{eqnarray*}
According to part (a), for each positive number $\epsilon$ there is an element $\Delta_{0}$ of $\mathscr{S}$ such that 
\[\left\|\sum\limits_{\substack{\Lambda \in \mathscr{S}_{0} \\ \Lambda \subseteq \Delta_{0}}}c(\Lambda)\Phi_{\Lambda} - \sum\limits_{\Lambda \in \mathscr{S}_{0}}c(\Lambda)\Phi_{\Lambda}\right\|_{\infty} < \frac{\epsilon}{4}\]
and
\[C \sum\limits_{\substack{\Lambda \in \mathscr{S}_{0},\, \Lambda \cap \Delta_{0}^{c} \neq \emptyset \\ \Lambda \subseteq \Delta}}\|\Phi_{\Lambda}\|_{\infty} < \frac{\epsilon}{4}\]
holds for each $\Delta$ in $\mathscr{S}$ satisfying $\Delta_{0} \subseteq \Delta$. And also, we can find a positive integer $n_{0}$ such that 
\[\max\limits_{\substack{\Lambda \in \mathscr{S}_{0} \\ \Lambda \subseteq \Delta_{0}}}|c_{n}(\Lambda) - c(\Lambda)| \cdot \sum\limits_{\substack{\Lambda \in \mathscr{S}_{0} \\ \Lambda \subseteq \Delta_{0}}} \|\Phi_{\Lambda}\|_{\infty} < \frac{\epsilon}{4}\]
holds whenever $n \geq n_{0}$.
We conclude that for every $n \geq n_{0}$, if we let $\Delta$ be an element of $\mathscr{S}$ such that $\Delta_{0} \subseteq \Delta$ and
\[\left\|\sum\limits_{\Lambda \in \mathscr{S}_{0}}c_{n}(\Lambda)\Phi_{\Lambda} - \sum\limits_{\substack{\Lambda \in \mathscr{S}_{0} \\ \Lambda \subseteq \Delta}}c_{n}(\Lambda)\Phi_{\Lambda}\right\|_{\infty} < \frac{\epsilon}{4},\]
we have
\[\left\|\sum\limits_{\Lambda \in \mathscr{S}_{0}} c_{n}(\Lambda)\Phi_{\Lambda}-\sum\limits_{\Lambda \in \mathscr{S}_{0}}c(A)\Phi_{\Lambda}\right\|_{\infty} < \epsilon.\]
\end{proof}

\begin{definition}
An \emph{interaction potential} is a family $\Phi = (\Phi_{\Lambda})_{\Lambda \in \mathscr{S}}$ of functions $\Phi_{\Lambda} : X \rightarrow \mathbb{R}$ such that
\begin{enumerate}[label=(\alph*),ref=\alph*]
\item for each $\Lambda \in \mathscr{S}$, the function $\Phi_{\Lambda}$ is $\mathcal{B}_{\Lambda}$-measurable, and	

\item for all $\Lambda \in \mathscr{S}$ and $x \in X$, the sum
\begin{equation}
H^{\Phi}_{\Lambda}(x) := \sum\limits_{\Delta \in \mathscr{S},\,\Delta \cap \Lambda \neq \emptyset}\Phi_{\Delta}(x)	
\end{equation}
converges.
\end{enumerate}
The quantity $H^{\Phi}_{\Lambda}(x)$ is called the \emph{energy} of $x$ in $\Lambda$ for the interaction potential $\Phi$, and the \emph{Hamiltonian} in $\Lambda$ for $\Phi$ is the function 
$H^{\Phi}_{\Lambda}$ which associates to each $x$ in $X$ the energy $H^{\Phi}_{\Lambda}(x)$. 
\end{definition}
\begin{remark}\label{depdepdep}
Given an arbitrary subset $\Lambda$ of $\zd$ and a $\mathcal{B}_{\Lambda}$-measurable function $f :X \rightarrow \mathbb{R}$,
the equality $f(x) =  f(y)$ holds whenever $x$ and $y$ are elements of $X$ such that $x_{\Lambda} = y_{\Lambda}$. The reader can easily verify that 
it suffices to prove this result for characteristic functions. Observe that 
\[\{B \subseteq X : \text{$\mathbbm{1}_{B}(x) = \mathbbm{1}_{B}(y)$ holds whenever $x_{\Lambda} = y_{\Lambda}$}\}\]
is a $\sigma$-algebra of subsets of $X$ which contains the collection
\[\mathscr{C} = \{\pi_{i}^{-1}(C) : i \in \Lambda, A \subseteq \mathcal{A}\}.\]
Since $\mathcal{B}_{\Lambda}$ is generated by $\mathscr{C}$, the result follows.
\end{remark}

\begin{example}\label{interactionising}
Let $X$ be the full shift $\{-1,+1\}^{\zd}$. Given two parameters $J$ and $h$ in $\mathbb{R}$, let us consider
\begin{equation}
\Phi^{J,h}_{\Lambda}(x) = 
\begin{cases}
-J x_{i}x_{j} &\text{if}\; \Lambda=\{i,j\}\; \text{and}\; i \sim j,\\
-h x_{i} &\text{if}\; \Lambda=\{i\},\\
0 &\text{otherwise.}	
\end{cases}	
\end{equation}	
The equation above defines an interaction potential $\Phi^{J,h} = (\Phi^{J,h}_{\Lambda})_{\Lambda \in \mathscr{S}}$ is called the Ising potential  
with coupling constant $J$ and external field $h$.
\end{example}

\begin{definition}
An interaction potential $\Phi = (\Phi_{\Lambda})_{\Lambda \in \mathscr{S}}$ is said to be 
\begin{enumerate}[label=(\alph*),ref=\alph*]
\item translation invariant if the relation
\begin{equation}
\Phi_{\Lambda} \circ \sigma^{i} = \Phi_{\Lambda + i}	
\end{equation}
holds for each $\Lambda \in \mathscr{S}$ and each $i \in \zd$, and 

\item absolutely summable if each $\Phi_{\Lambda}$ is bounded and satisfies 
\begin{equation}
\sum\limits_{\Lambda \in \mathscr{S},\, i \in \Lambda} \|\Phi_{\Lambda}\|_{\infty} < +\infty
\end{equation}
for each $i \in \zd$.
\end{enumerate}
\end{definition}

\begin{remark}
Observe that if $\Phi$ is absolutely summable, then the sum
$\sum\limits_{\Delta \in \mathscr{S},\,\Delta \cap \Lambda \neq \emptyset}\|\Phi_{\Delta}\|_{\infty}$
converges for each $\Lambda$ in $\mathscr{S}$. In fact, we have
\[\sum\limits_{\substack{\Delta \in \mathscr{S},\,\Delta \cap \Lambda \neq \emptyset \\ \Delta \subseteq \Delta'}}\|\Phi_{\Delta}\|_{\infty} \leq 
\sum_{i \in \Lambda}\sum\limits_{\substack{\Delta \in \mathscr{S},\, i \in \Delta \\ \Delta \subseteq \Delta'}}\|\Phi_{\Delta}\|_{\infty} 
\leq \sum_{i \in \Lambda}\sum\limits_{\Delta \in \mathscr{S},\, i \in \Delta}\|\Phi_{\Delta}\|_{\infty}\]
for each $\Delta'$ in $\mathscr{S}$. Thus, our assertion follows from Remark \ref{posisummmmm}.
Furthermore, one can easily verify that the potential $\Phi^{J,h}$ given in Example \ref{interactionising} is translation invariant and absolutely summable.
\end{remark}

In the following, given an absolutely summable potential $\Phi$, we will let $A_{\Phi}$ be a real-valued function defined on $X$ given by
\begin{equation}
A_{\Phi}(x) = -\sum\limits_{\Lambda \in \mathscr{S},\,\mathbf{0} \in \Lambda}\frac{1}{|\Lambda|}\,\Phi_{\Lambda}.	
\end{equation}
Observe that $A_{\Phi}$ is well defined since $\sum\limits_{\Lambda \in \mathscr{S},\,\mathbf{0} \in \Lambda}\frac{1}{|\Lambda|}\|\Phi_{\Lambda}\|_{\infty} \leq
\sum\limits_{\Lambda \in \mathscr{S},\,\mathbf{0} \in \Lambda}\|\Phi_{\Lambda}\|_{\infty} < +\infty$.

\begin{example}
Let $\Phi^{J,h}$ be the Ising potential defined in Example \ref{interactionising}. Then, the function $A_{\Phi^{J,h}}$ is given by
\begin{equation}
A_{\Phi^{J,h}}(x) = \frac{J}{2} \sum\limits_{j\, \sim\, \bf{0}}x_{\bf{0}}x_j  + h x_{\bf{0}}	
\end{equation}	
for each $x$ in $\{-1,+1\}^{\zd}$. 
\end{example}

\begin{theorem}\label{thm:Gibbsian}
Let $X \subseteq \mathcal{A}^{\zd}$ be a subshift, and let $\Phi$ be a translation invariant and absolutely summable potential. If we suppose that
the function $f = A_{\Phi}$ belongs to $SV_{d}(X)$, then the corresponding family $\gamma = (\gamma_{\Lambda})_{\Lambda \in \mathscr{S}}$ defined in Definition \ref{especificacao15}
is a Gibbsian specification, that is,
\begin{equation}
\gamma_{\Lambda}(A|x) = \frac{1}{Z^{\Phi}_{\Lambda}(x)} \int_{\mathcal{A}^{\Lambda}}e^{-H^{\Phi}_{\Lambda}(\zeta x_{\Lambda^{c}})}\mathbbm{1}_{\{\zeta x_{\Lambda^{c}} \in A\}} \lambda^{\Lambda}(d\zeta)	
\end{equation} 	
where $\lambda$ is the counting measure on $(\mathcal{A},\mathcal{P}(\mathcal{A}))$, and 
\begin{equation}
Z^{\Phi}_{\Lambda}(x) = \int_{\mathcal{A}^{\Lambda}}e^{-H^{\Phi}_{\Lambda}(\zeta x_{\Lambda^{c}})}\mathbbm{1}_{\{\zeta x_{\Lambda^{c}} \in X\}} \lambda^{\Lambda}(d\zeta).		
\end{equation}
\end{theorem}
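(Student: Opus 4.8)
The plan is to show that the configuration-dependent part of the exponent in the limit defining $\gamma_\Lambda$ coincides with $-H^\Phi_\Lambda$, so that after cancelling a normalizing factor common to numerator and denominator the limit collapses onto the Gibbsian ratio. First I would rewrite $f_n = \sum_{i\in\Lambda_n} f\circ\sigma^i$ in terms of the potential. Since $f = A_\Phi = -\sum_{\Lambda'\ni\mathbf{0}}\frac{1}{|\Lambda'|}\Phi_{\Lambda'}$ and $\Phi$ is translation invariant, one has $f\circ\sigma^i = -\sum_{\Lambda'\ni\mathbf{0}}\frac{1}{|\Lambda'|}\Phi_{\Lambda'+i} = -\sum_{\Delta\ni i}\frac{1}{|\Delta|}\Phi_\Delta$ after the reindexing $\Delta=\Lambda'+i$ (which preserves cardinality). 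Summing over $i\in\Lambda_n$ and interchanging the two sums --- legitimate since absolute summability gives $\sum_{\Delta}\frac{|\Delta\cap\Lambda_n|}{|\Delta|}\|\Phi_\Delta\|_\infty \le |\Lambda_n|\sum_{\Delta\ni\mathbf{0}}\|\Phi_\Delta\|_\infty<\infty$ --- yields the identity
\begin{equation*}
f_n = -\sum_{\Delta\in\mathscr{S}} c_n(\Delta)\,\Phi_\Delta, \qquad c_n(\Delta):=\frac{|\Delta\cap\Lambda_n|}{|\Delta|}\in[0,1].
\end{equation*}

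Next I would split this sum according to whether $\Delta$ meets $\Lambda$. For $\zeta\in\mathcal{A}^\Lambda$ with $\zeta x_{\Lambda^c}\in X$, every term with $\Delta\cap\Lambda=\emptyset$ satisfies $\Delta\subseteq\Lambda^c$, so by the $\mathcal{B}_\Delta$-measurability of $\Phi_\Delta$ (Remark \ref{depdepdep}) it depends only on $x_{\Lambda^c}$ and is therefore independent of $\zeta$. Hence, writing $g_n:=\sum_{\Delta\cap\Lambda\neq\emptyset} c_n(\Delta)\Phi_\Delta$, we get $f_n(\zeta x_{\Lambda^c}) = -g_n(\zeta x_{\Lambda^c}) + k_n(x)$ with $k_n(x)$ not depending on $\zeta$, so the factor $e^{k_n(x)}$ cancels in the ratio of Definition \ref{especificacao15}, leaving
\begin{equation*}
\gamma_\Lambda(A|x)=\lim_{n\to\infty}\frac{\sum_{\omega\in\mathcal{A}^\Lambda}e^{-g_n(\omega x_{\Lambda^c})}\mathbbm{1}_{\{\omega x_{\Lambda^c}\in A\}}}{\sum_{\eta\in\mathcal{A}^\Lambda}e^{-g_n(\eta x_{\Lambda^c})}\mathbbm{1}_{\{\eta x_{\Lambda^c}\in X\}}}.
\end{equation*}
I would then invoke Lemma \ref{lemmamalucooo}(b) with $\mathscr{S}_0=\{\Delta\in\mathscr{S}:\Delta\cap\Lambda\neq\emptyset\}$: absolute summability gives $\sum_{\Delta\in\mathscr{S}_0}\|\Phi_\Delta\|_\infty<\infty$, the coefficients $c_n(\Delta)$ converge pointwise to $1$ and are bounded by $C=1$, so $g_n\to\sum_{\Delta\cap\Lambda\neq\emptyset}\Phi_\Delta=H^\Phi_\Lambda$ uniformly on $X$. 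Because $\mathcal{A}^\Lambda$ is finite I can pass the limit through both sums, arriving at the ratio of $e^{-H^\Phi_\Lambda(\zeta x_{\Lambda^c})}$, which is exactly the asserted Gibbsian form once the finite sums are written as integrals $\int_{\mathcal{A}^\Lambda}\,\cdot\,\lambda^\Lambda(d\zeta)$ against the counting measure, with denominator $Z^\Phi_\Lambda(x)$.

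The hard part will be justifying the passage to the limit inside the infinite sum: the pointwise convergence $c_n(\Delta)\to1$ alone does not control the tail, and uniformity over $X$ is needed precisely so that the limit of $e^{-g_n}$ at the finitely many admissible configurations is $e^{-H^\Phi_\Lambda}$. This is exactly the content of Lemma \ref{lemmamalucooo}(b), so the real work is checking its hypotheses, namely $\sum_{\Delta\in\mathscr{S}_0}\|\Phi_\Delta\|_\infty<\infty$ and the uniform bound on the $c_n$. A secondary point to dispatch carefully is the treatment of configurations $\zeta$ with $\zeta x_{\Lambda^c}\notin X$: since $A\subseteq X$, the indicators annihilate such terms in both numerator and denominator, so $f_n$, $g_n$, and $H^\Phi_\Lambda$ are only ever evaluated on $X$, keeping every quantity finite and the cancellation of $e^{k_n(x)}$ valid.
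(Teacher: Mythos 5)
Your proposal is correct and follows essentially the same route as the paper's proof: the same reindexing of $f_n$ via translation invariance to obtain the coefficients $\lvert\Delta\cap\Lambda_n\rvert/\lvert\Delta\rvert$, the same split according to whether $\Delta$ meets $\Lambda$ with Remark \ref{depdepdep} cancelling the $\zeta$-independent factor, and the same appeal to Lemma \ref{lemmamalucooo}(b) to get uniform convergence to $H^\Phi_\Lambda$. Your explicit justification of the sum interchange and the remark about configurations outside $X$ are small tidy additions, not a different argument.
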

\begin{proof}
Let $\Lambda$ be an element of $\mathscr{S}$, let $A$ be a Borel subset of $X$, and let $x$ be a point in $X$. For each positive integer $n$, we have
\begin{align*}
f_{n} &= -\sum\limits_{i \in \Lambda_{n}}\sum\limits_{\Delta \in \mathscr{S},\,\mathbf{0} \in \Delta} \frac{1}{|\Delta|}\,\Phi_{\Delta}\circ \sigma^{i}(x)
= -\sum\limits_{i \in \Lambda_{n}}\sum\limits_{\Delta \in \mathscr{S},\,\mathbf{0} \in \Delta} \frac{1}{|\Delta + i|}\,\Phi_{\Delta + i}(x)\\
&= -\sum\limits_{i \in \Lambda_{n}}\sum\limits_{\Delta \in \mathscr{S},\,i \in \Delta} \frac{1}{|\Delta|}\,\Phi_{\Delta}(x)
= -\sum\limits_{\Delta \in \mathscr{S}} \frac{|\Delta \cap \Lambda_{n}|}{|\Delta|}\,\Phi_{\Delta}(x) \\
&= -\sum\limits_{\Delta \in \mathscr{S},\,\Delta \cap \Lambda \neq \emptyset} \frac{|\Delta \cap \Lambda_{n}|}{|\Delta|}\,\Phi_{\Delta}(x) \; 
-\sum\limits_{\Delta \in \mathscr{S},\,\Delta \cap \Lambda = \emptyset} \frac{|\Delta \cap \Lambda_{n}|}{|\Delta|}\,\Phi_{\Delta}(x).
\end{align*}	
Using Remark \ref{depdepdep}, we obtain
\begin{eqnarray*}
&&\frac{\sum\limits_{\omega \in \mathcal{A}^{\Lambda}}e^{f_n(\omega x_{\Lambda^c})}\mathbbm{1}_{\{\omega x_{\Lambda^c}\in A\}}}{\sum\limits_{\eta\, \in \mathcal{A}^{\Lambda}}e^{f_n(\eta x_{\Lambda^c})}\mathbbm{1}_{\{\eta x_{\Lambda^c}\in X\}}} \\
&& \qquad\quad= \frac{\sum\limits_{\omega \in \mathcal{A}^{\Lambda}}\exp{\left(-\sum\limits_{\Delta \in \mathscr{S},\,\Delta \cap \Lambda \neq \emptyset} \frac{|\Delta \cap \Lambda_{n}|}{|\Delta|}\,\Phi_{\Delta}(\omega x_{\Lambda^c}) - 
\sum\limits_{\Delta \in \mathscr{S},\,\Delta \cap \Lambda = \emptyset} \frac{|\Delta \cap \Lambda_{n}|}{|\Delta|}\,\Phi_{\Delta}(\omega x_{\Lambda^c})\right)}\mathbbm{1}_{\{\omega x_{\Lambda^c}\in A\}}}
{\sum\limits_{\eta\, \in \mathcal{A}^{\Lambda}}\exp{\left(-\sum\limits_{\Delta \in \mathscr{S},\,\Delta \cap \Lambda \neq \emptyset} \frac{|\Delta \cap \Lambda_{n}|}{|\Delta|}\,\Phi_{\Delta}(\eta x_{\Lambda^c}) -	 
\sum\limits_{\Delta \in \mathscr{S},\,\Delta \cap \Lambda = \emptyset} \frac{|\Delta \cap \Lambda_{n}|}{|\Delta|}\,\Phi_{\Delta}(\eta x_{\Lambda^c})\right)}\mathbbm{1}_{\{\eta x_{\Lambda^c}\in X\}}}\\
&& \qquad\quad= \frac{\sum\limits_{\omega \in \mathcal{A}^{\Lambda}}\exp{\left(-\sum\limits_{\Delta \in \mathscr{S},\,\Delta \cap \Lambda \neq \emptyset} \frac{|\Delta \cap \Lambda_{n}|}{|\Delta|}\,\Phi_{\Delta}(\omega x_{\Lambda^c}) - 
\sum\limits_{\Delta \in \mathscr{S},\,\Delta \cap \Lambda = \emptyset} \frac{|\Delta \cap \Lambda_{n}|}{|\Delta|}\,\Phi_{\Delta}(x)\right)}\mathbbm{1}_{\{\omega x_{\Lambda^c}\in A\}}}
{\sum\limits_{\eta\, \in \mathcal{A}^{\Lambda}}\exp{\left(-\sum\limits_{\Delta \in \mathscr{S},\,\Delta \cap \Lambda \neq \emptyset} \frac{|\Delta \cap \Lambda_{n}|}{|\Delta|}\,\Phi_{\Delta}(\eta x_{\Lambda^c}) -	 
\sum\limits_{\Delta \in \mathscr{S},\,\Delta \cap \Lambda = \emptyset} \frac{|\Delta \cap \Lambda_{n}|}{|\Delta|}\,\Phi_{\Delta}(x)\right)}\mathbbm{1}_{\{\eta x_{\Lambda^c}\in X\}}}\\
&& \qquad\quad= \frac{\sum\limits_{\omega \in \mathcal{A}^{\Lambda}}\exp{\left(-\sum\limits_{\Delta \in \mathscr{S},\,\Delta \cap \Lambda \neq \emptyset} \frac{|\Delta \cap \Lambda_{n}|}{|\Delta|}\,\Phi_{\Delta}(\omega x_{\Lambda^c})\right)}\mathbbm{1}_{\{\omega x_{\Lambda^c}\in A\}}}
{\sum\limits_{\eta\, \in \mathcal{A}^{\Lambda}}\exp{\left(-\sum\limits_{\Delta \in \mathscr{S},\,\Delta \cap \Lambda \neq \emptyset} \frac{|\Delta \cap \Lambda_{n}|}{|\Delta|}\,\Phi_{\Delta}(\eta x_{\Lambda^c})\right)}\mathbbm{1}_{\{\eta x_{\Lambda^c}\in X\}}}, 	 	
\end{eqnarray*} 
and according to Lemma \ref{lemmamalucooo}(\ref{naointeressa}), we conclude that
\begin{eqnarray*}
\gamma_{\Lambda}(A|\,x) &=& \lim\limits_{n \to \infty}\frac{\sum\limits_{\omega \in \mathcal{A}^{\Lambda}}\exp{\left(-\sum\limits_{\Delta \in \mathscr{S},\,\Delta \cap \Lambda \neq \emptyset} \frac{|\Delta \cap \Lambda_{n}|}{|\Delta|}\,\Phi_{\Delta}(\omega x_{\Lambda^c})\right)}\mathbbm{1}_{\{\omega x_{\Lambda^c}\in A\}}}
{\sum\limits_{\eta\, \in \mathcal{A}^{\Lambda}}\exp{\left(-\sum\limits_{\Delta \in \mathscr{S},\,\Delta \cap \Lambda \neq \emptyset} \frac{|\Delta \cap \Lambda_{n}|}{|\Delta|}\,\Phi_{\Delta}(\eta x_{\Lambda^c})\right)}\mathbbm{1}_{\{\eta x_{\Lambda^c}\in X\}}} \\
&=& \frac{\sum\limits_{\omega \in \mathcal{A}^{\Lambda}}e^{-H^{\Phi}_{\Lambda}(\omega x_{\Lambda^{c}})}\mathbbm{1}_{\{\omega x_{\Lambda^c}\in A\}}}{\sum\limits_{\eta \in \mathcal{A}^{\Lambda}}e^{-H^{\Phi}_{\Lambda}(\eta x_{\Lambda^{c}})}\mathbbm{1}_{\{\eta x_{\Lambda^c}\in X\}}}.
\end{eqnarray*}
\end{proof}

\subsection{KMS-states} \label{subsec:KMS} We briefly present the notion of the Kubo-Martin-Schwinger (KMS) states \cite{HaagHugWin1967,Kubo1957,MartinSchwinger1959} and present the cocycle dynamics. Most of the general aspects here presented can be found in detail in \cite{Bratteli1987vol1,Bratteli1981vol2}. 

Basically, the notion of KMS state is defined under two objects: a dynamics and a set of norm-dense entire analytic elements of the C$^*$-algebra. The dynamics describe the system, and the analytic elements allow us to extend analytically the dynamics to the complex plane, and the density of these elements can be used to evaluate the extended dynamics on the rest of the C$^*$-algebra by approximations.

Given a C$^*$-algebra $A$, an \emph{one-parameter group of $\ast$-automorphisms} is a family $\tau = \{ \tau_t \}_{t \in \mathbb{R}}$ where each $\tau_t$ is a $*$-automorphism on $A$, and $\tau$ is additive with respect to the real parameter in the sense that $\tau_{s+t} = \tau_s \circ \tau_t$ for $s,t \in \mathbb{R}$, and $\tau_0$ is the identity. The pair $(A,\tau)$ is said to be a \emph{C$^*$-dynamical system} when $\tau$ is strongly continuous, i.e., for each $a \in A$, the mapping $t \mapsto \tau_t(a)$ is norm continuous, and in this case $\tau$ is said to be the \emph{dynamics on $A$}.

\begin{remark}\label{remark:dynamicsextension} It is sufficient to define the dynamics $\tau$ on a dense $*$-subalgebra, since in this case, it admits a unique extension to a dynamics on the whole C$^*$-algebra. 
In addition, every $*$-automorphism on a C$^*$-algebra is an isometry and then $\| \tau_t \| = 1$ for every $t \in \mathbb{R}$. We also observe that the strong continuity of $\tau$ is equivalent to the $\sigma(A,A')$-continuity for the same map above (see Theorem 3.10 of \cite{Brezis2011}), and hence $\tau$ is a one-parameter $\sigma(A,A')$-continuous group of isometries, as in Definition 2.5.17 of \cite{Bratteli1987vol1}.
\end{remark}

Given a C$^*$-dynamical system $(A,\tau)$, an element $a\in A$ is said to be \emph{entire analytic} for $\tau$ when there exists a function $f: \mathbb{C} \to A$, satisfying the following:
\begin{itemize}
    \item[$(a)$] $f(t) = \tau_t(a)$ for every $t \in \mathbb{R}$;
    \item[$(b)$] the function $\eta \circ f$ is analytic on the whole $\mathbb{C}$ for every $\eta \in A'$.
\end{itemize}

We denote by $A_\tau$ the set of the entire analytic elements of the C$^*$-dynamical system $(A,\tau)$. $A_\tau$ is always non-empty and it is a norm dense $*$-subalgebra of $A$ (see Proposition 2.5.22 and Corollary 2.5.23 of \cite{Bratteli1987vol1}). Moreover, $A_\tau$ is $\tau$-invariant, i.e. $\tau_t(A_\tau) \subseteq A_\tau$ for every $t \in \mathbb{R}$ (see Lemma 5.1.22 of \cite{Lima2019}). In addition, on $A_\tau$ the rule of the composition of $*$-automorphisms is extended to the whole complex plane, that is, for every $z,w \in \mathbb{C}$ we have $\tau_{z+w}(a) = \tau_z \circ \tau_w(a)$ (see Lemma 5.1.21 of \cite{Lima2019}).

\begin{definition} A state on a C$^*$-algebra $A$ is a positive linear functional $\omega:A \to \mathbb{C}$ s.t. $\|\omega\| = 1$. 
\end{definition}

\begin{definition}[KMS states] 
For a given $(A,\tau)$ C$^*$-dynamical system and $\beta \in \mathbb{R}$, a state $\omega$ on $A$ is said to be a $\tau$-$\text{KMS}_\beta$-state when
\begin{align}\label{eq:KMS_condition}
\omega(a\tau_{i\beta}(b)) = \omega(ba)
\end{align}
for every $a, b$ in a $\tau$-invariant norm-dense $\ast$-subalgebra $A_0 \subseteq A_\tau$.
\end{definition}

We simply say $\omega$ is a $\text{KMS}_\beta$-state, that is, we omit $\tau$, when the dynamics is understood. Here, we are interested in the KMS states of groupoid C$^*$-algebras via continuous 1-cocycle $c:\mathcal{G} \to \mathbb{R}$ dynamics, and to connect it to the DLR measures in the case of the groupoid of the Gibbs relation. For a locally compact Hausdorff second countable \'etale groupoid $\mathcal{G}$, we define the one-parameter group of $*$-automorphisms $\tau$ by
\begin{align}\label{eq:groupoid_dynamics_C_star}
   \tau_t(f):= e^{it c} f, \quad f \in C_c(\mathcal{G}),
\end{align}
and since $C_c(\mathcal{G})$ is norm dense in $C^*(\mathcal{G})$, it is extended uniquely to the whole C$^*$-algebra. The next Theorem will connect the quasi-invariant measures to the KMS-states. It is straightforward to prove that $(C^*(\mathcal{G}),\tau)$ is a C$^*$-dynamical system.

\begin{remark} The dynamics here demand the cocycle to be continuous in order to be well-defined. However, the continuity of $c$ is not required for the conformal measures in Section \ref{rn}, that is, this restriction concerns only the KMS states. In particular, for the models given by a potential, in the regularity here studied, this is not restrictive since $d$-summable potentials give continuous associated cocycles.    
\end{remark}

The next theorem establishes a connection between the quasi-invariant measures and the KMS-states.

\begin{theorem}[Theorem 3.3.12 of \cite{renault2009}]\label{thm:Renault_KMS} Consider a locally compact \'{e}tale groupoid $\mathcal{G}$, a real-valued continuous 1-cocycle $c:\mathcal{G} \to \mathbb{R}$, and the one-parameter group of $*$-automorphisms $\tau$ in \eqref{eq:groupoid_dynamics_C_star}. Then, for every $\beta \in \mathbb{R}$,
\begin{itemize}
    \item[$(a)$] A probability measure $\mu$ on $\mathcal{G}^{(0)}$ is quasi-invariant under $\mathcal{G}$ with Radon-Nikodym derivative
    \begin{equation}\label{eq:RN_KMS}
        \frac{d\nu_r}{d\nu_s} = e^{-\beta c},
    \end{equation}
    if and only if the state is given by
    \begin{equation}\label{eq:KMS_quasi_inv}
        \varphi_\mu(f) := \int_{\mathcal{G}^{(0)}} f d\mu, \quad f \in C_c(\mathcal{G}) 
    \end{equation}
    is a KMS$_\beta$ on $C^*_r(\mathcal{G})$ (or $C^*(\mathcal{G})$);
    \item[$(b)$] If $c^{-1}(0)$ is principal, every KMS$_\beta$ state of $C^*(\mathcal{G})$ is in the form of \eqref{eq:KMS_quasi_inv} for some quasi-invariant probability measure $\mu$ on $\mathcal{G}^{(0)}$ with Radon-Nikodym derivative given by \eqref{eq:RN_KMS}.
\end{itemize}
\end{theorem}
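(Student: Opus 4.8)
The plan is to reduce the $\mathrm{KMS}_\beta$ condition to an identity between the counting measures $\nu_r$ and $\nu_s$ of Theorem~\ref{thm:nur_nus}, exploiting that for an \'etale groupoid the Haar system is given by counting measures on the $r$-fibres, so that $d\nu_r=\int_X \lambda^x\,d\mu(x)$ with $\lambda^x$ the counting measure on $\mathcal{G}^x$. Since $c$ is real and continuous and every $f\in C_c(\mathcal{G})$ has compact support, the map $z\mapsto e^{izc}f$ is an entire $C^*(\mathcal{G})$-valued function; hence $C_c(\mathcal{G})$ already consists of analytic elements and can serve as the $\tau$-invariant, norm-dense $*$-subalgebra $A_0$ in the definition of a $\mathrm{KMS}_\beta$ state. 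On $A_0$ the condition reads $\varphi_\mu\big(f_1\cdot\tau_{i\beta}(f_2)\big)=\varphi_\mu(f_2\cdot f_1)$ with $\tau_{i\beta}(f_2)(g)=e^{-\beta c(g)}f_2(g)$.

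For part $(a)$ I would compute both sides of this identity directly. Writing a unit $x$ as a product $g_1g_2=x$ forces $g_2=g_1^{-1}$, so each convolution collapses to a sum over the $r$-fibre, and using $c(g^{-1})=-c(g)$ (which follows from the cocycle identity together with $c|_{\mathcal{G}^{(0)}}=0$) one obtains
\begin{equation*}
\varphi_\mu\big(f_1\cdot\tau_{i\beta}(f_2)\big)=\int_{\mathcal{G}}f_1(g)\,f_2(g^{-1})\,e^{\beta c(g)}\,d\nu_r(g),\qquad
\varphi_\mu(f_2\cdot f_1)=\int_{\mathcal{G}}f_2(g)\,f_1(g^{-1})\,d\nu_r(g).
\end{equation*}
Changing variables in the second integral through $\nu_s=\theta_\ast\nu_r$ rewrites it as $\int_{\mathcal{G}}f_1(g)f_2(g^{-1})\,d\nu_s(g)$, so the $\mathrm{KMS}_\beta$ identity is equivalent to
\begin{equation*}
\int_{\mathcal{G}}f_1(g)f_2(g^{-1})\,e^{\beta c(g)}\,d\nu_r(g)=\int_{\mathcal{G}}f_1(g)f_2(g^{-1})\,d\nu_s(g)
\end{equation*}
for all $f_1,f_2\in C_c(\mathcal{G})$. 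Because the functions $g\mapsto f_1(g)f_2(g^{-1})$ have dense linear span in $C_c(\mathcal{G})$ (choose $f_2$ equal to $1$ on a neighbourhood of $g^{-1}$), this says precisely $e^{\beta c}\,d\nu_r=d\nu_s$, i.e. $d\nu_r/d\nu_s=e^{-\beta c}$, which is \eqref{eq:RN_KMS}. Positivity of $\varphi_\mu$ comes from $(f^*\cdot f)(x)=\sum_{g\in\mathcal{G}^x}|f(g^{-1})|^2\ge 0$ and $\|\varphi_\mu\|=1$ from $\mu$ being a probability measure, so \eqref{eq:KMS_quasi_inv} indeed defines a state; this settles the equivalence in $(a)$.

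For part $(b)$ I would start from a $\mathrm{KMS}_\beta$ state $\omega$, restrict it to the commutative subalgebra $C_0(\mathcal{G}^{(0)})\subseteq C^*(\mathcal{G})$, and let $\mu$ be the probability measure it defines by Riesz--Markov. Testing the $\mathrm{KMS}_\beta$ condition on elements supported in open bisections and running the computation of part $(a)$ in reverse shows that $\mu$ is quasi-invariant with $d\nu_r/d\nu_s=e^{-\beta c}$. What remains, and what I expect to be the main obstacle, is to prove $\omega=\varphi_\mu$, i.e. that $\omega$ annihilates every $f\in C_c(\mathcal{G})$ supported off the unit space; this is exactly where the hypothesis that $c^{-1}(0)$ is principal is needed. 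By the general description of $\mathrm{KMS}$ states on \'etale groupoids (Neshveyev's decomposition), $\omega$ is encoded by the pair consisting of $\mu$ and a measurable field of states on the isotropy groups of the subgroupoid $c^{-1}(0)$, since the dynamics $\tau$ is trivial precisely on $c^{-1}(0)$. Concretely, for $f$ supported in a bisection the $\mathrm{KMS}$ relation, tested against an approximate unit, localizes $\omega(f)$ to the set where $r(g)=s(g)$ and $c(g)=0$, that is to $\mathrm{Iso}\big(c^{-1}(0)\big)$; principality forces $\mathrm{Iso}\big(c^{-1}(0)\big)=\mathcal{G}^{(0)}$, so the only surviving contribution is the restriction of $f$ to the units, giving $\omega(f)=\int_{\mathcal{G}^{(0)}}f\,d\mu=\varphi_\mu(f)$. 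Making this localization rigorous — controlling the off-diagonal part of $\omega$ via the $\mathrm{KMS}$ condition and showing the isotropy field collapses under principality — is the technical heart of the argument, the rest being the bookkeeping carried out above.
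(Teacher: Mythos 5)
The paper does not actually prove this statement: it is imported verbatim as Theorem 3.3.12 of Renault's lecture notes \cite{renault2009}, so there is no in-paper argument to compare against. Judged on its own terms, your part $(a)$ is a complete and correct reconstruction of the standard proof: $C_c(\mathcal{G})$ is indeed a $\tau$-invariant dense $*$-subalgebra of analytic elements when $c$ is continuous and real, the collapse of the convolution at a unit to a sum over the $r$-fibre is right, $c(g^{-1})=-c(g)$ follows from the cocycle identity, and the identity $\int f_1(g)f_2(g^{-1})e^{\beta c(g)}\,d\nu_r=\int f_1(g)f_2(g^{-1})\,d\nu_s$ for all $f_1,f_2$ does force $e^{\beta c}\,d\nu_r=d\nu_s$, since for $h$ supported in an open bisection one may take $f_1=h$ and $f_2\equiv 1$ on $(\supp h)^{-1}$, so these products exhaust $C_c(\mathcal{G})$ and both counting measures are Radon. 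The positivity and normalization of $\varphi_\mu$ are also handled correctly.

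Part $(b)$, however, is an outline rather than a proof. You correctly identify the mechanism — the KMS condition localizes $\omega$ to $\mathrm{Iso}\bigl(c^{-1}(0)\bigr)$, and principality of $c^{-1}(0)$ collapses this to the unit space — but the step you yourself flag as ``the technical heart,'' namely showing that $\omega$ annihilates every $f\in C_c(\mathcal{G})$ supported in a bisection disjoint from $\mathcal{G}^{(0)}$, is asserted via an appeal to Neshveyev's decomposition rather than carried out. The usual way to close this is a Cauchy--Schwarz estimate in the GNS representation: for $f$ supported in a bisection $U$, one bounds $|\omega(f)|^2$ by $\|f\|_\infty^2\,\mu\bigl(r(U\cap\mathrm{Iso}(\mathcal{G})\cap c^{-1}(0))\bigr)$ using the KMS relation applied to $f^*\cdot f$ and $f\cdot f^*$, and principality makes that set a subset of $r(U\cap\mathcal{G}^{(0)})$, which is void when $U$ misses the units. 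Since the paper leans entirely on the citation, this gap does not affect the paper, but as a standalone proof your part $(b)$ is incomplete at exactly the point you anticipated.
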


An important comment must be made at this point: for a groupoid of an AP relation, the subgroupoid $c^{-1}(0)$ is always principal for every cocycle. Hence, the KMS states for the cocycle dynamics are precisely those that arise from the quasi-invariant measures with the Radon-Nikodym derivative \eqref{eq:RN_KMS}, namely the states in \eqref{eq:KMS_quasi_inv}. By applying Theorem \ref{thm:Renault_KMS} for the 1-cocycle generated by a potential $f \in SV_d(X)$ dynamics on C$^*(\gr)$, the following result holds.

\begin{corollary} Let $X \subseteq \mathcal{A}^{\zd}$ be a subshift and $f \in SV_d(X)$ be a potential on $X$. A Borel measure $\mu$ on $X$ is a Gibbs measure if and only if the state $\varphi_\mu$ on $C^*(\gr)$, given by
\begin{equation*}
    \varphi_\mu(f) := \int_X f d\mu, \quad f \in C_c(\gr), 
\end{equation*}
where $\gr$ is the Gibbs relation on $X$, is a KMS$_1$ state for the 1-cocycle dynamics, where the cocycle $c_f: \gr \to \mathbb{R}$ is given by
\begin{equation*}
    c_{f}(x,y) = \sum\limits_{k \in \zd}f(\sigma^k y) - f(\sigma^k x).
\end{equation*}
\end{corollary}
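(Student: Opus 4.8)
The plan is to deduce the corollary as the $\beta = 1$ instance of Renault's Theorem \ref{thm:Renault_KMS}, applied to the groupoid $\gr$ equipped with the cocycle $c_f$. The only substantive work is to confirm that all hypotheses of that theorem hold in the present setting, and then to recognize that the measure-theoretic condition it produces is verbatim the definition of a Gibbs measure for $f$.

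First I would record that $\gr$, endowed with the inductive limit topology, is a locally compact \'etale groupoid: this is exactly Proposition \ref{prop:properties_AP_relation}(iii) applied to the approximately proper relation $\gr$ established earlier in this subsection. Moreover, since $\gr$ was shown to be amenable, we have $C^*(\gr) \simeq C^*_r(\gr)$, so Theorem \ref{thm:Renault_KMS} applies without any ambiguity between the full and reduced algebras.

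Next I would verify that $c_f$ of Definition \ref{def:cocycle_potential} is a continuous real-valued $1$-cocycle. The additivity $c_f(x,z) = c_f(x,y) + c_f(y,z)$ is an immediate telescoping of the defining sums. Continuity is the only analytically delicate point, and the step I expect to demand the most care: since the topology on $\gr$ is the inductive limit topology, continuity need only be checked on each $\gr_n = \{(x,y) : x_{\Lambda_n^c} = y_{\Lambda_n^c}\}$, where $x$ and $y$ coincide off $\Lambda_n$. There the hypothesis $f \in SV_d(X)$ forces the unordered sum defining $c_f$ to converge uniformly, the tail over $\|k\| > N$ being dominated by $\sum_{r > N} r^{d-1}\delta_{r-n}(f) \to 0$, so that $c_f\restriction_{\gr_n}$ is a uniform limit of continuous functions. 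This is precisely the regularity alluded to in the remark preceding Theorem \ref{thm:Renault_KMS}.

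With the hypotheses confirmed, I would invoke Theorem \ref{thm:Renault_KMS}(a) with $\beta = 1$: a Borel probability measure $\mu$ on $X \cong \gr^{(0)}$ is quasi-invariant under $\gr$ with $\frac{d\nu_r}{d\nu_s} = e^{-c_f}$ if and only if the state $\varphi_\mu$ of \eqref{eq:KMS_quasi_inv} is a KMS$_1$ state for the dynamics $\tau_t(h) = e^{itc_f}h$. Finally, by Definition \ref{gibbsm} together with the definition of a $(c_f,\gr)$-conformal measure (equation \eqref{xeena} with $c = c_f$ and $R = \gr$), the left-hand condition---quasi-invariance together with $D_{\mu,\gr} = e^{-c_f}$ almost everywhere---says exactly that $\mu$ is a Gibbs measure for $f$. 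This yields the asserted equivalence and completes the argument. Note that only part $(a)$ of Theorem \ref{thm:Renault_KMS} is needed, since the statement concerns a fixed $\mu$; the principality of $c_f^{-1}(0)$ (automatic for AP relations) is what would be required to realize \emph{every} KMS$_1$ state in the form $\varphi_\mu$.
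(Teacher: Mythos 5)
Your proposal is correct and follows essentially the same route as the paper, which likewise obtains the corollary as the $\beta=1$ case of Theorem \ref{thm:Renault_KMS} applied to $(\gr, c_f)$; you simply make explicit the details the paper leaves implicit (the \'etale structure of $\gr$ from Proposition \ref{prop:properties_AP_relation}, the continuity of $c_f$ on each $\gr_n$ via the $d$-summability tail estimate, and the identification of the quasi-invariance condition with Definition \ref{gibbsm}). Your observation that only part $(a)$ of the theorem is needed for the statement as written, with principality of $c_f^{-1}(0)$ reserved for realizing \emph{all} KMS$_1$ states as $\varphi_\mu$, is an accurate reading of how the paper uses the result.
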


For the groupoid $\gr^0$, an analogous result also holds: since $\gr^0 \subseteq \gr$ and both groupoids have the same set of units, we also have that $c^{-1}(0)$ is also principal as a subgroupoid of $\gr^0$. We present such a result next.

\begin{corollary} Let $X \subseteq \mathcal{A}^{\zd}$ be a subshift and $f \in SV_d(X)$ be a potential on $X$. A Borel measure $\mu$ on $X$ is a topological Gibbs measure if and only if the state $\varphi_\mu$ on $C^*(\gr^0)$, given by
\begin{equation*}
    \varphi_\mu(f) := \int_X f d\mu, \quad f \in C_c(\gr^0), 
\end{equation*}
where $\gr^0$ is the topological Gibbs relation on $X$, is a KMS$_1$ state for the 1-cocycle dynamics, where the cocycle $c_f: \gr^0 \to \mathbb{R}$ is given by
\begin{equation*}
    c_{f}(x,y) = \sum\limits_{k \in \zd}f(\sigma^k y) - f(\sigma^k x).
\end{equation*}
\end{corollary}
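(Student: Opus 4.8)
The plan is to apply Theorem \ref{thm:Renault_KMS} to the groupoid $\gr^{0}$, exactly as was done for $\gr$ in the preceding corollary. First I would verify the hypotheses of that theorem: $\gr^{0}$ is a locally compact, Hausdorff, second-countable \'etale groupoid---its \'etalicity is Proposition \ref{prop:I_0_etale}, and the remaining properties follow from the inclusion $\gr^{0} \subseteq X \times X$ with $X$ compact metrizable---and its full and reduced C$^{*}$-algebras coincide by amenability, so there is no ambiguity in speaking of KMS states on $C^{*}(\gr^{0})$. Next I would check that the restricted cocycle $c_f\vert_{\gr^{0}}$ is a continuous real-valued $1$-cocycle. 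Since the inductive limit topology renders a function continuous exactly when its restriction to each compact piece $\gr^{0}_n \subseteq \gr_n$ is continuous, this reduces to the continuity of $c_f$ on each $\gr_n$, which is guaranteed by $f \in SV_d(X)$.

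The crux of the argument is to recognize that the two conditions on either side of the desired equivalence are already matched by the theorem. By Definition \ref{topgibbsm}, $\mu$ is a topological Gibbs measure for $f$ precisely when it is $(c_f\vert_{\gr^{0}}, \gr^{0})$-conformal, that is, when $\mu$ is quasi-invariant under $\gr^{0}$ with Radon-Nikodym derivative
\[
D_{\mu,\gr^{0}} = \frac{d\nu_r}{d\nu_s} = e^{-c_f} \quad \mu\text{-a.e.}
\]
This is exactly condition \eqref{eq:RN_KMS} with $\beta = 1$. Applying part $(a)$ of Theorem \ref{thm:Renault_KMS} with $\mathcal{G} = \gr^{0}$ and $c = c_f\vert_{\gr^{0}}$ then delivers both implications simultaneously: $\mu$ is a topological Gibbs measure if and only if the state $\varphi_\mu(h) = \int_X h\,d\mu$ is a KMS$_1$ state for the dynamics $\tau_t(h) = e^{itc_f}h$ on $C^{*}(\gr^{0})$.

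I do not anticipate any substantive obstacle, since the difficult structural facts---that $\gr^{0}$ is \'etale and amenable---were established earlier, and Theorem \ref{thm:Renault_KMS} is quoted from \cite{renault2009}. The only points that require care are the continuity of the restricted cocycle and the bookkeeping that identifies the conformality condition with the Radon-Nikodym derivative appearing in the theorem. Finally, although part $(a)$ alone already yields the stated biconditional, I would observe that part $(b)$ applies as well: being an equivalence-relation groupoid, $\gr^{0}$ is principal, and hence so is the kernel $c_f^{-1}(0)$ inside $\gr^{0}$, since any subgroupoid of a principal groupoid sharing its unit space $X$ is itself principal. Consequently every KMS$_1$ state on $C^{*}(\gr^{0})$ is realized by a topological Gibbs measure, so none are overlooked.
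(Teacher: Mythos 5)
Your proposal is correct and follows essentially the same route as the paper: both apply Theorem \ref{thm:Renault_KMS} part $(a)$ to the \'etale amenable groupoid $\gr^{0}$ with the continuous cocycle $c_f\vert_{\gr^{0}}$, identify $(c_f\vert_{\gr^{0}},\gr^{0})$-conformality with the Radon--Nikodym condition at $\beta=1$, and note that part $(b)$ also applies because $c_f^{-1}(0)$ is principal ($\gr^{0}$ being an equivalence-relation groupoid with unit space $X$). No gaps.
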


\subsection{Gibbs specification as an operator and fixed point measure} In the context of (generalized) Markov shifts, one of the notions for DLR measure 
is that the Borel measure is a fixed point for the dual maps (in an integral sense) on the sequence of maps acting on the positive measurable functions on the shift space. Here, we briefly explain this notion and describe some results, which can be found in detail in \cite{BEFR2018}, while the notion of generalized Markov shift and its properties can be found in \cite{EL1999,BEFR2022}. This subsection has two purposes:
\begin{itemize}
    \item[$(A)$] to point out which sequence of maps is the analogous object for such a fixed point notion that works for subshifts of $\mathcal{A}^{\zd}$ in order to characterize the DLR measures;
    \item[$(B)$] to present sufficient conditions in the space on which the sequence of maps can be considered as a sequence of bounded operators, that is, when the fixed points are eigenmeasures with associate eigenvalue $1$.
\end{itemize}
We prove that for subshifts cases the SFT condition is an answer for $(B)$. For Markov shifts, a sufficient condition for these maps is compactness, which is essentially the same condition, because in this case, SFT is equivalent to compactness. We recall that for locally compact Markov shifts, the notions of standard and generalized Markov shifts coincide.

Example 2.3 of \cite{renault:05} presents the case of stationary equivalence relations as an example of AP equivalence relation, which is suitable to study locally compact one-sided Markov shift spaces $\Sigma_A$ endowed with the usual shift dynamics $\sigma$ (see \cite{sarig:09}). The AP relation in this case corresponds to the stationary sequence
\begin{equation*}
    \Sigma_A \stackrel{\sigma}{\to} \Sigma_A \stackrel{\sigma}{\to} \Sigma_A \stackrel{\sigma}{\to} \Sigma_A \stackrel{\sigma}{\to} \cdots,
\end{equation*}
where the corresponding AP relation is given by
\begin{equation*}
    R = \{(x,y) \in \Sigma_A \times \Sigma_A : \exists n \in \mathbb{N}_0, \sigma^n x = \sigma^n y\}.
\end{equation*}
For the non-locally compact case, we use the generalized countable Markov shift space (GCMS), that includes the standard shift space as a dense subset, and it is always locally compact (and compact in many cases). The complement of $\Sigma_A$ is a set that corresponds to finite words. In order to study its generalized thermodynamic formalism in the same generality as for models in standard Markov shifts, the shift map is not defined in the whole space, but in an open subset of shiftable words that contains $\Sigma_A$. In \cite{BEFR2022}, the generalized thermodynamic formalism was introduced and studied for the structure of the generalized Renault-Deaconu groupoid. In \cite{BEFR2018}, inspired by this generalized space, it was studied the Radon-Nikodym problem for the case where the dynamics map was not defined in the whole space. So in there, the notion \emph{generalized approximately proper equivalence relation} (GAP) was constructed. A GAP is a pair $(\{U_n\}_{n\in \mathbb{N}_0},\{R_n\}_{n\in \mathbb{N}_0})$ where the $U_n$'s are open sets, with $U_0 = X$, such that  $U_n \subseteq U_{n+1}$ for every $n\in \mathbb{N}_0$, and the $R_n$'s are proper equivalence relations, with $R_0 = \diag(X \times X)$ and $R_n \cap (U_n \times U_m) \subseteq R_m$ whenever $n \leq m$. For the GCMS case, the sets $U_n$ correspond to the open sets $\Dom (\sigma^n)$. These GAP relations also include AP relations, and among the results, we highlight the construction of a notion for the DLR measure via a sequence $\{Q_n\}_{n \in \mathbb{N}}$  of $\sigma$-additive maps 
$Q_n:\mathcal{M}^+(X) \to \mathcal{M}^+(X)$, where $X$ is a locally compact second-countable metrizable space, and $\mathcal{M}^+(X)$ is the set of all non-negative Borel measurable functions on $X$. Each map $Q_n$ is defined by
\begin{equation*}
    Q_n(f)(x) = \begin{cases}
                \sum_{y \in R_n(x)} f(y) \rho_n(y) \zeta_n(y)^{-1} &\text{if } x \in U_n;\\
                0 &\text{otherwise}, 
             \end{cases}
\end{equation*}
where $R_n(x):= \{y \in X: (x,y) \in R_n\}$, $\rho_n : U_n \to \mathbb{R}$ is a positive function that is given by $\rho_n(x) = e^{h_n(x)}$, with $h_n(x)$ the Birkhoff sum for a potential $h$ in the context of Markov shifts, and $\zeta_n$ is the $n$-partition function given by $\zeta_n(x) = \sum_{y \in R_n(x)}\rho_n(y)$. Associated to $\rho_n$ there is a cocycle $c_n(x,y) = h_n(x) - h_n(y)$ on $R_n$. These cocycles are compatible in the sense of the existence of a unique cocycle on $R = \bigcup_{n \in \mathbb{N}_0} R_n$ such that $c\vert_{R_n} = c_n$
, and a Borel measure $\mu$ on $X$ is $e^c$-conformal if and only if $Q_n^*(\mu) = \mu$ for every $n \in \mathbb{N}$
. These functions $Q_n$ correspond to the probability kernels for DLR measures in the compact $\Sigma_A$ (see Definition 6 of \cite{Cio:14}).

In general, the maps $Q_n$, $n\in \mathbb{N}$, cannot be defined as operators on $C(X)$, where $X$ is a locally compact Hausdorff second-countable metrizable space, even in the case of GCMS. In fact, for instance, the generalized renewal shift $X_A$ is compact, and hence $1 \in C(X_A)$. On the other hand, we have that $U_n = \Dom \sigma^n$, $n \in \mathbb{N}_0$, and each $U_n$ is open but not closed. For the potential $f \equiv 0$, we have
\begin{equation*}
    Q_n(1) = \mathbbm{1}_{U_n} \notin C(X_A),
\end{equation*}
and therefore $Q_n$ is not an operator. However, for compact $\Sigma_A$, the maps $Q_n$ can be defined as bounded operators on $C(\Sigma_A)$, as we prove next.

\begin{lemma}\label{lemma:Z_n_and_E_ng} Let $\Sigma_A$ be transitive and compact, $f:\Sigma_A \to \mathbb{R}$ a continuous potential, and $g \in C(\Sigma_A)$. Define the functions $Z_n: \Sigma_A \to \mathbb{R}$, $E_n^g: \Sigma_A \to \mathbb{R}$ by
\begin{equation*}
    Z_n(x) := \sum_{y \in R_n(x)} e^{f(y)} \quad \text{and} \quad E_n^g(x) :=  \sum_{y \in R_n(x)} e^{f(y)} g(y).
\end{equation*}
Then, $Z_n$ and $E_n^g$ are continuous functions, and $Z_n(x) > 0$ for every $x \in \Sigma_A$.
\end{lemma}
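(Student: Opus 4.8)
The plan is to express each of $Z_n$ and $E_n^g$ as a \emph{finite} sum of continuous functions indexed by admissible words of length $n$. First I would record the concrete description of the fibre $R_n(x)$. Since $R_n=\{(x,y):\sigma^n x=\sigma^n y\}$, a point $y$ lies in $R_n(x)$ exactly when $y_j=x_j$ for every $j\geq n$ while the prefix $w:=(y_0,\dots,y_{n-1})$ is an admissible word satisfying the matching condition $A_{w_{n-1},x_n}=1$; conversely, each such prefix determines a unique element $y=w\cdot\sigma^n x:=w_0\cdots w_{n-1}x_n x_{n+1}\cdots\in R_n(x)$. Thus $w\mapsto w\cdot\sigma^n x$ is a bijection from the admissible prefixes compatible with $x$ onto $R_n(x)$.

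Next I would isolate, for each fixed admissible word $w$ of length $n$, the set $C_w:=\{x\in\Sigma_A:A_{w_{n-1},x_n}=1\}$ on which $w\cdot\sigma^n x$ makes sense. This set is a union of cylinders over the coordinate $n$, hence clopen, and so is its complement. On $C_w$ the map $x\mapsto w\cdot\sigma^n x$ is continuous into $\Sigma_A$ (it prepends the fixed admissible word $w$ to the continuously varying tail $\sigma^n x$), so that
\[
\phi_w(x):=\begin{cases} e^{f(w\cdot\sigma^n x)}, & x\in C_w,\\ 0, & x\notin C_w,\end{cases}
\]
is continuous on each of the two clopen pieces and therefore continuous on all of $\Sigma_A$. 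The analogous function $\phi_w^g$, obtained by inserting the factor $g(w\cdot\sigma^n x)$ on $C_w$, is continuous for the same reason, using $g\in C(\Sigma_A)$.

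I would then assemble the pieces: since the terms with $x\notin C_w$ vanish, the bijection above gives $Z_n=\sum_{w\in W_n}\phi_w$ and $E_n^g=\sum_{w\in W_n}\phi_w^g$, where $W_n$ is the set of length-$n$ prefixes occurring in $\Sigma_A$, that is, the image of the continuous map $\Sigma_A\to\mathcal{A}^n$, $x\mapsto(x_0,\dots,x_{n-1})$. Because $\Sigma_A$ is compact and $\mathcal{A}^n$ carries the discrete topology, this image is finite, so each of $Z_n$ and $E_n^g$ is a finite sum of continuous functions and hence continuous. Positivity of $Z_n$ is immediate: one always has $x\in R_n(x)$ (since $\sigma^n x=\sigma^n x$), whence $Z_n(x)\geq e^{f(x)}>0$ for every $x\in\Sigma_A$.

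The only genuinely delicate point, and the expected obstacle, is continuity across the ``admissibility boundary'': for a given prefix $w$ the summand switches on and off as $x$ varies, and one must verify this does not break continuity. The resolution is precisely that the switching set $C_w$ is clopen, so $\phi_w$ remains continuous despite the jump; this in turn rests on compactness, which simultaneously guarantees that $W_n$ is finite (reducing the problem to finite sums) and that each fibre $R_n(x)$ is finite, so that no convergence issues arise.
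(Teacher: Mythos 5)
Your proof is correct, and for the continuity part it rests on the same two facts as the paper's argument: the set of admissible length-$n$ prefixes is finite (compactness forces a finite alphabet), and the admissibility condition $A_{w_{n-1},x_n}=1$ is locally constant in $x$. The paper phrases this sequentially --- for $x^m\to x$ the first $n+1$ coordinates eventually agree, so $R_n(x^m)$ and $R_n(x)$ are indexed by the same finite word set $B_n$ and the difference of the sums tends to $0$ by continuity of $f$ --- whereas you package it globally as a finite decomposition $Z_n=\sum_{w}\phi_w$ into functions that are continuous because each switching set $C_w$ is clopen. These are the same idea in different clothing, and your version makes the delicate point (continuity across the admissibility boundary) slightly more transparent. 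The one genuine divergence is the positivity of $Z_n$: the paper invokes transitivity, while you observe that reflexivity of $R_n$ already gives $x\in R_n(x)$ and hence $Z_n(x)\geq e^{f(x)}>0$. Your argument is simpler and shows the transitivity hypothesis is not needed for this conclusion (it is used elsewhere in that subsection), so this is a small but real improvement rather than a gap.
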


\begin{proof} First observe that both functions $Z_n$ and $E_n^g$ are well-defined since $R_n(x)$ is finite for every $x \in \Sigma_A$ because the alphabet is finite, and transitivity implies that $Z_n(x) > 0$ for all $x$. We prove that $Z_n$ is continuous, and the proof of the continuity of $E_n^g$ is similar. Let $(x^n)_{n \in \mathbb{N}}$ be a sequence in $\Sigma_A$ that converges to $x \in \Sigma_A$. Then, there exists $M \in \mathbb{N}$ such that, for every $m > M$ we have that $x_k^m = x_k$ for every $0 \leq k \leq n$, and in this case we have
\begin{equation*}
    R_n(x) = \left\{y \in \Sigma_A: y = w \sigma^n x, |w| = n \right\} \text{ and } R_n(x^m) = \left\{y \in \Sigma_A: y = w \sigma^n x^m, |w| = n \right\}.
\end{equation*}
By setting $B_n = \{w \in \mathcal{A}^n: w \text{ is admissible and } A_{w_{n-1},x_n} = 1\}$, we have
\begin{align*}
    |Z_n(x^m) - Z_n(x)| &= \left|\sum_{y \in R_n(x^m)} e^{f(y)} - \sum_{y \in R_n(x)} e^{f(y)}\right| = \left|\sum_{w \in B_n} e^{f(w\sigma^n x^m)} - e^{f(w\sigma^n x)}\right| \stackrel{m \to \infty}{\longrightarrow} 0.
\end{align*}
And therefore $Z_n$ is continuous.
\end{proof}

\begin{proposition} Let $\Sigma_A$ be transitive and compact, $f:\Sigma_A \to \mathbb{R}$ a continuous potential. Then $Q_n$ is a bounded operator on $C(\Sigma_A)$.
\end{proposition}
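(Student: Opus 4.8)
The plan is to reduce the statement to the continuity facts already established in Lemma \ref{lemma:Z_n_and_E_ng}, together with a compactness argument and an elementary averaging estimate. First I would observe that since $\Sigma_A$ is compact, the shift is defined on all of $\Sigma_A$, so that $U_n = \Sigma_A$ for every $n$ and the case distinction in the definition of $Q_n$ disappears. Moreover, because $R_n$ is an equivalence relation, the partition function $\zeta_n$ is constant along each $R_n$-class, so that for $y \in R_n(x)$ we have $\zeta_n(y) = Z_n(x)$. Hence $Q_n$ takes the closed form
\begin{equation*}
    Q_n(g)(x) = \frac{1}{Z_n(x)}\sum_{y \in R_n(x)} e^{f(y)} g(y) = \frac{E_n^g(x)}{Z_n(x)}, \qquad g \in C(\Sigma_A).
\end{equation*}

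Next I would establish that $Q_n$ maps $C(\Sigma_A)$ into itself. By Lemma \ref{lemma:Z_n_and_E_ng}, both $E_n^g$ and $Z_n$ are continuous and $Z_n(x) > 0$ for every $x$. Since $\Sigma_A$ is compact, the continuous strictly positive function $Z_n$ attains a positive minimum, hence is bounded away from zero; therefore $1/Z_n$ is continuous and bounded, and the quotient $Q_n(g) = E_n^g / Z_n$ is continuous. Linearity of $Q_n$ is immediate from the linearity of the map $g \mapsto E_n^g$.

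Finally, for boundedness I would note that $Q_n(g)(x)$ is a convex combination of the values $\{g(y) : y \in R_n(x)\}$, with nonnegative weights $e^{f(y)}/Z_n(x)$ summing to $1$. Consequently
\begin{equation*}
    |Q_n(g)(x)| \le \sum_{y \in R_n(x)} \frac{e^{f(y)}}{Z_n(x)}\, |g(y)| \le \|g\|_\infty
\end{equation*}
for every $x$, whence $\|Q_n(g)\|_\infty \le \|g\|_\infty$ and $\|Q_n\| \le 1$; taking $g \equiv 1$ gives $Q_n(\mathbbm{1}) = \mathbbm{1}$, so in fact $\|Q_n\| = 1$. There is no serious obstacle here: the only technical content, namely the continuity of $Z_n$ and $E_n^g$, was already dispatched in Lemma \ref{lemma:Z_n_and_E_ng}, and the sole additional ingredient is the use of compactness to bound $Z_n$ uniformly below.
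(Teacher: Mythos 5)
Your proposal is correct and follows essentially the same route as the paper's proof: both reduce the statement to Lemma \ref{lemma:Z_n_and_E_ng} (continuity and strict positivity of $Z_n$ and $E_n^g$) to conclude that $Q_n(g) = E_n^g/Z_n$ is continuous, and both obtain boundedness from the same pointwise estimate $|Q_n(g)(x)| \le \|g\|_\infty$. Your extra remarks --- justifying the identification $\zeta_n(y) = Z_n(x)$ on each $R_n$-class, and noting that $Q_n(\mathbbm{1}) = \mathbbm{1}$ forces $\|Q_n\| = 1$ --- are harmless refinements of steps the paper leaves implicit.
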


\begin{proof} Let $g \in C(\Sigma_A)$, then $Q_n(g)(x) = Z_n(x)^{-1} E_n^g(x)$, $x \in \Sigma_A$, and by Lemma \ref{lemma:Z_n_and_E_ng}, $Q_n(g)$ is a product of two continuous functions, and therefore it is continuous, i.e., $Q_n$ is well-defined, and its linearity is straightforward. Then $Q_n$ is an operator on $C(\Sigma_A)$. Now, we prove that $Q_n$ is bounded. In fact, for every $x \in \Sigma_A$ we have
\begin{align*}
    \left|Q_n(g)(x)\right| &= \left|Z_n(x)^{-1} \sum_{y \in R_n(x)} e^{f(y)} g(y)\right| \leq \left|Z_n(x)^{-1}\right| \left|Z_n(x)\right| \|g\| = \|g\|, 
\end{align*}
and therefore $\|Q_n\| \leq 1$, that is, $Q_n$ is bounded.
\end{proof}

Now, observe that $\Sigma_A$ is compact if and only if it is an SFT (see Section 1.3 of \cite{sarig:09}).

In parallel, in the context of subshifts of $\mathcal{A}^{\zd}$, the analogous object that plays the role of the kernel maps $Q_n$ is the family of specifications defined as $\sigma$-additive maps, as we see next. Consider a subshift $X \subseteq \mathcal{A}^{\zd}$ together with its Gibbs relation. In addition, let $f \in SV_d(X)$ be a potential. For each $\Lambda \in \mathscr{S}$, we define the maps $Q_\Lambda: \mathcal{M}^+(X) \to \mathcal{M}^+(X)$, given by
\begin{equation}\label{eq:kernel}
    Q_\Lambda(g)(x) := \sum\limits_{\omega \in \mathcal{A}^{\Lambda}} \gamma_{\Lambda}([\omega]|x) \mathbbm{1}_{\{\omega x_{\Lambda^c} \in X\}} g(\omega x_{\Lambda^c}), \quad g \in \mathcal{M}^+(X).
\end{equation}
Observe that, for every $A \in \mathcal{B}$ and $x \in X$, we have
\begin{equation*}
    Q_\Lambda(\mathbbm{1}_A)(x) = \gamma_\Lambda(A|\,x).
\end{equation*}


\begin{remark} Since defined $Q_\Lambda$ via identity \eqref{lalalalallalalala}, which is a consequence of Lemma \ref{xenaaprincesaguerreira}, it follows that \eqref{gamma} can also be extended to $g \in \mathcal{M}^+(X)$ as
\begin{equation*}
    Q_\Lambda(g)(x) = \lim\limits_{n \to \infty}\frac{\sum\limits_{\omega \in \mathcal{A}^{\Lambda}}e^{f_n(\omega x_{\Lambda^c})}\mathbbm{1}_{\{\omega x_{\Lambda^c}\in X\}} g(\omega x_{\Lambda^c})}{\sum\limits_{\eta\, \in \mathcal{A}^{\Lambda}}e^{f_n(\eta x_{\Lambda^c})}\mathbbm{1}_{\{\eta x_{\Lambda^c}\in X\}}} 
\end{equation*} 
for each $x \in X$. In addition, for each $\Lambda \in \mathscr{S}$, $Q_{\Lambda}$ is $\sigma$-additive.
\end{remark}

Proposition A.2 of \cite{BEFR2018} ensures that every $Q_\Lambda$ has a unique dual map $Q_\Lambda^*$, in the sense that, for each $\Lambda \in \mathscr{S}$, and for any positive $\sigma$-finite Borel measure $\mu$, there exists a unique $\sigma$-finite positive Borel measure $Q_\Lambda^*(\mu)$ given by
\begin{equation*}
    \int g d Q_\Lambda^*(\mu) = \int Q_\Lambda(g) d \mu, \quad g \in \mathcal{M}^+(X).
\end{equation*}

On the other hand, Remark (1.24) of \cite{georgii:11} provides us with an equivalence between Gibbs measures and fixed point measures for the operators $Q_\Lambda$. We state it next using our notations. Let us recall that a family $\mathscr{S}_0 \subseteq \mathscr{S}$ is said to be cofinal when, for every $\Lambda \in \mathscr{S}$, there exists $\Delta \in \mathscr{S}_0$ such that $\Lambda \subseteq \Delta$.

\begin{remark}[Consequence of Remark (1.24) from \cite{georgii:11}]\label{remark:DLR_fixed_point_Georgii} Let $\mu$ be a Borel probability measure on $X$ and $\gamma = (\gamma_\Lambda)_{\Lambda \in \mathscr{S}}$ be a specification for the potential $f \in SV_d(X)$. The following conditions are equivalent:
\begin{itemize}
    \item[$(a)$] $\mu$ is a DLR measure;
    \item[$(b)$] $Q_\Lambda^*(\mu) = \mu$ for every $\Lambda \in \mathscr{S}$;
    \item[$(c)$] $Q_{\Lambda_{n}}^{*}(\mu) = \mu$ holds for every $n \in \mathbb{N}$;
    \item[$(d)$] There exists a cofinal family $\mathscr{S}_0 \subseteq \mathscr{S}$ such that $Q_\Lambda^*(\mu) = \mu$ for every $\Lambda \in \mathscr{S}_0$.
\end{itemize} 
\end{remark}

In the context of GAPs, the equivalence between $(a)$ and $(c)$ of the remark above was proved in Theorem 6.16 of \cite{BEFR2018} for the sequence of $Q_n$. For subshifts, it is straightforward that the fixed point property in Remark \ref{remark:DLR_fixed_point_Georgii} holds for every $\Lambda$ if and only if it holds for the family $\{\Lambda_n\}_{n \in \mathbb{N}}$. Next, we show that $X$ being an SFT is a sufficient condition that grants that the maps $Q_\Lambda$, $\Lambda \subseteq \zd$ finite set, can be defined as bounded operators on $C(X)$, and consequently, the operators $Q_{\Lambda}^*$ are adjoint operators of $Q_{\Lambda}^*$, in the functional analysis sense, and the DLR measures are eigenmeasures for the eigenvalue $1$.

Let us recall that in the case where $X \subseteq \mathcal{A}^{\zd}$ is an SFT, there exists a finite collection of patterns $\mathcal{F}$, which can be chosen to be defined on the same finite set $\Delta \subseteq \zd$
, such that
\begin{equation*}
    X = \left\{x \in \mathcal{A}^{\zd} : \left(\sigma^ix\right)_\Delta \notin \mathcal{F}, \text{ for all $i \in \zd$}\right\}.
\end{equation*}



\begin{lemma}\label{lemma:sequence_and_limit_belonging_SFT_equivalence} Let $X \subseteq \mathcal{A}^{\zd}$ be an SFT, $\Lambda \subseteq \zd$ finite, and $\omega \in \mathcal{A}^\Lambda$. Then, for any sequence $(x^{(n)})_{n \in \mathbb{N}}$ in $X$ that converges to $x \in X$, there exists $n_0 \in \mathbb{N}$ such that
\begin{equation}\label{eq:sequence_and_limit_belonging_SFT_equivalence}
    \text{$\omega x^{(n)}_{\Lambda^c} \in X \iff \omega x_{\Lambda^c} \in X \quad$  whenever $n \geq n_0$.}
\end{equation}
\end{lemma}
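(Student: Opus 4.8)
The plan is to exploit the purely local and finite-range nature of the SFT constraints. Recall that, since $X$ is an SFT, there is a finite collection $\mathcal{F}$ of patterns defined on a common finite window $\Delta \subseteq \zd$ such that $y \in X$ if and only if $(\sigma^i y)_\Delta \notin \mathcal{F}$ for every $i \in \zd$. For a configuration of the form $\omega z_{\Lambda^c}$ with $z \in X$, I would classify the translated windows $\Delta + i$ according to how they meet $\Lambda$: those with $(\Delta+i) \cap \Lambda = \emptyset$, those with $(\Delta + i) \subseteq \Lambda$, and the ``boundary'' ones with $(\Delta+i) \cap \Lambda \neq \emptyset$ and $(\Delta+i) \cap \Lambda^c \neq \emptyset$.

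First I would observe that the windows disjoint from $\Lambda$ impose no new restriction: on a window $\Delta + i \subseteq \Lambda^c$ the pattern read off from $\omega z_{\Lambda^c}$ coincides with $(\sigma^i z)_\Delta$, which is not in $\mathcal{F}$ precisely because $z \in X$. Hence the membership $\omega z_{\Lambda^c} \in X$ is equivalent to the finitely many conditions coming from the windows $\Delta + i$ with $(\Delta + i) \cap \Lambda \neq \emptyset$. Since $\Delta$ and $\Lambda$ are finite, the index set $I := \{ i \in \zd : (\Delta+i)\cap \Lambda \neq \emptyset \}$ is finite, and so is the coordinate set
\[
B := \Big( \bigcup_{i \in I} (\Delta + i) \Big) \setminus \Lambda \subseteq \Lambda^c.
\]
Each of these finitely many conditions depends only on the fixed pattern $\omega$ and on the restriction of $z$ to $B$; that is, whether $\omega z_{\Lambda^c} \in X$ is a function of $\omega$ and $z_B$ alone.

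Applying this with $z = x$ and with $z = x^{(n)}$ (both of which lie in $X$), the two statements $\omega x_{\Lambda^c} \in X$ and $\omega x^{(n)}_{\Lambda^c} \in X$ are governed by the same boundary conditions, evaluated at $x_B$ and $x^{(n)}_B$, respectively. Since $B$ is finite and $x^{(n)} \to x$ in the prodiscrete topology, there exists $n_0 \in \mathbb{N}$ such that $x^{(n)}_B = x_B$ for every $n \geq n_0$; indeed, it suffices to pick $m$ with $B \subseteq \Lambda_m$ and invoke Remark \ref{box}. For such $n$ the two configurations agree on $\Lambda \cup B$, the relevant boundary conditions coincide, and therefore $\omega x^{(n)}_{\Lambda^c} \in X \iff \omega x_{\Lambda^c} \in X$, as claimed.

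The argument is essentially bookkeeping; the only point requiring genuine care is the first observation, namely that the infinitely many windows lying entirely in $\Lambda^c$ may be discarded. This step is what reduces an a priori infinite system of constraints to a finite one, and it crucially uses that \emph{both} $x$ and every $x^{(n)}$ already belong to $X$, so that these far-away windows are automatically admissible and never distinguish the two configurations.
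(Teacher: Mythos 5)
Your proof is correct and follows essentially the same route as the paper's: both arguments isolate the finitely many translated SFT windows that meet $\Lambda$ (your set $B$ is the paper's $\widetilde{\Lambda}\setminus\Lambda$), note that windows entirely inside $\Lambda^c$ are automatically admissible because $x$ and each $x^{(n)}$ lie in $X$, and then use convergence in the prodiscrete topology to force agreement on that finite boundary set for large $n$.
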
  

\begin{proof} Define
\begin{equation*}
    \widetilde{\Lambda} = \Lambda \cup \left(\bigcup_{\substack{i \in \zd \\ (i + \Delta)\cap\Lambda \neq \emptyset}}(i+\Delta)\right),
\end{equation*}
and observe that $\widetilde{\Lambda}$ is finite. Since $(x^{(n)})_{n \in \mathbb{N}}$ converges to $x$, take $n_0 \in \mathbb{N}$ such that, for every $n \geq n_0$, we have $x^{(n)}_{\widetilde{\Lambda}} = x_{\widetilde{\Lambda}}$. We claim that, for every $n \geq n_0$, given $i \in \zd$, we have
\begin{equation}\label{eq:raluca}
    \sigma^i\left(\omega x^{(n)}_{\Lambda^c}\right)_{\Delta} = \begin{cases}
                                                    \sigma^i\left(\omega x_{\Lambda^c}\right)_{\Delta} \quad &\text{if } (i+\Delta) \cap \Lambda \neq \emptyset;\\
                                                    \sigma^i(x^{(n)})_{\Delta} \quad& \text{otherwise,}
                                                   \end{cases}
\end{equation}
and 
\begin{equation}\label{eq:raluca2}
    \sigma^i\left(\omega x_{\Lambda^c}\right)_{\Delta} = \begin{cases}
                                                    \sigma^i\left(\omega x^{(n)}_{\Lambda^c}\right)_{\Delta} \quad &\text{if } (i+\Delta) \cap \Lambda \neq \emptyset;\\
                                                    \sigma^i(x)_{\Delta} \quad& \text{otherwise.}
                                                   \end{cases}
\end{equation}
In fact, let us show that (\ref{eq:raluca}) holds. Let $n \geq n_0$. Given $i \in \zd$ such that $(i+\Delta) \cap \Lambda \neq \emptyset$, for each $j \in \Delta$ we have
\begin{equation*}
    \sigma^i\left(\omega x^{(n)}_{\Lambda^c}\right)_j = \left(\omega x^{(n)}_{\Lambda^c}\right)_{i+j} = \begin{cases}
                        \omega_{i+j} \quad & \text{if } i+j \in \Lambda,\\
                        x^{(n)}_{i+j} = x_{i+j} \quad & \text{otherwise,}
                    \end{cases}
\end{equation*}
in other words,
\begin{equation*}
    \sigma^i\left(\omega x^{(n)}_{\Lambda^c}\right)_j = \left(\omega x_{\Lambda^c}\right)_{i+j} = \sigma^i\left(\omega x_{\Lambda^c}\right)_j
\end{equation*}
holds for each $j \in \Delta$. On the other hand, if $(i+\Delta) \cap \Lambda = \emptyset$, it is straightforward to 
verify that $\sigma^i\left(\omega x^{(n)}_{\Lambda^c}\right)_{\Delta} = \sigma^i(x^{(n)})_{\Delta}$. The proof of (\ref{eq:raluca2}) is analogous; therefore, our claim is proven. We conclude that for each $n \geq n_0$ and $i \in \zd$, we have $\sigma^i\left(\omega x^{(n)}_{\Lambda^c}\right)_{\Delta} \notin \mathcal{F}$ if and only if $\sigma^i\left(\omega x_{\Lambda^c}\right)_{\Delta} \notin \mathcal{F}$.
\end{proof}

\begin{theorem} Let $X \subseteq \mathcal{A}^{\zd}$ be an SFT and $\Lambda \subseteq \zd$ finite. Then $Q_\Lambda$ is a bounded operator on $C(X)$.    
\end{theorem}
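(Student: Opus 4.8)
The plan is to check, separately, the three defining features of a bounded operator on $C(X)$: that $Q_\Lambda$ maps $C(X)$ into $C(X)$, that it is linear, and that its operator norm is at most $1$. Linearity is immediate from \eqref{eq:kernel}, since for fixed $\omega$ and $x$ the evaluation $g \mapsto g(\omega x_{\Lambda^c})$ is linear. The real work is to prove that $Q_\Lambda g$ is continuous whenever $g$ is; the norm bound will then follow almost for free from the fact that $\gamma_\Lambda(\cdot\,|x)$ is a probability measure.

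For the norm, I would first record that $\gamma_\Lambda(\cdot\,|x)$ is a Borel probability measure for every $x$ (Lemma \ref{gammaprop}(a)) and that, taking $A = X$ in \eqref{lalalalallalalala}, one has $\sum_{\omega \in \mathcal{A}^{\Lambda}}\gamma_\Lambda([\omega]|x)\mathbbm{1}_{\{\omega x_{\Lambda^c}\in X\}} = \gamma_\Lambda(X|x) = 1$ for every $x \in X$. Thus $Q_\Lambda g(x)$ is a convex combination of the values $g(\omega x_{\Lambda^c})$, so $|Q_\Lambda g(x)| \le \|g\|_\infty$ for all $x$, giving $\|Q_\Lambda\| \le 1$ once continuity is established.

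The crux is continuity, and since $X$ is compact metrizable it suffices to argue sequentially. Fixing $g \in C(X)$ and a sequence $x^{(n)} \to x$ in $X$, I would use \eqref{oiadlr} to rewrite
\[
Q_\Lambda g(x) = \sum_{\omega \in \mathcal{A}^{\Lambda}} \frac{\mathbbm{1}_{\{\omega x_{\Lambda^c}\in X\}}\, g(\omega x_{\Lambda^c})}{\sum_{\eta \in \mathcal{A}^{\Lambda}} e^{c_f(\omega x_{\Lambda^c},\eta x_{\Lambda^c})}\mathbbm{1}_{\{\eta x_{\Lambda^c}\in X\}}},
\]
where each denominator is $\ge 1$, since the $\eta = \omega$ term contributes $e^{0}=1$. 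The obstacle is that the membership indicators $\mathbbm{1}_{\{\omega x_{\Lambda^c}\in X\}}$ are discontinuous for a general subshift, and this is exactly where the SFT hypothesis enters: by Lemma \ref{lemma:sequence_and_limit_belonging_SFT_equivalence}, applied to each of the finitely many $\omega \in \mathcal{A}^{\Lambda}$, there is an $n_0$ such that for all $n \ge n_0$ and all $\omega$ one has $\omega x^{(n)}_{\Lambda^c}\in X \iff \omega x_{\Lambda^c}\in X$; hence for $n \ge n_0$ the outer sums over $\omega$, and the inner sums over $\eta$, for $x^{(n)}$ and for $x$ range over the same index sets.

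For those $\omega,\eta$ with $\omega x_{\Lambda^c}\in X$, the configurations $\omega x^{(n)}_{\Lambda^c}$ converge to $\omega x_{\Lambda^c}$ in $X$ (they agree with $\omega$ on $\Lambda$ and converge coordinatewise on $\Lambda^c$), so $g(\omega x^{(n)}_{\Lambda^c}) \to g(\omega x_{\Lambda^c})$ by continuity of $g$, and $c_f(\omega x^{(n)}_{\Lambda^c},\eta x^{(n)}_{\Lambda^c}) \to c_f(\omega x_{\Lambda^c},\eta x_{\Lambda^c})$ by continuity of the cocycle $c_f$ attached to a $d$-summable potential. Since the denominators stay bounded below by $1$, each summand converges, and as the sum is finite I conclude $Q_\Lambda g(x^{(n)}) \to Q_\Lambda g(x)$, so $Q_\Lambda g \in C(X)$. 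I expect the only genuine difficulty to be the treatment of the membership indicators, resolved entirely by Lemma \ref{lemma:sequence_and_limit_belonging_SFT_equivalence}; the remainder is just continuity of $g$ and of $c_f$ together with the uniform lower bound on the denominators.
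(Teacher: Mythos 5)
Your proof is correct and follows essentially the same route as the paper's: the SFT hypothesis enters only through Lemma \ref{lemma:sequence_and_limit_belonging_SFT_equivalence} to stabilize the membership indicators along a convergent sequence, after which continuity of $g$ and of the cocycle $c_f$ (which the paper handles by citing continuity of $x \mapsto e^{c_f(x,\varphi_{\omega,\eta}(x))}$) finishes the argument. Your norm estimate is in fact sharper: the convex-combination argument via $\gamma_{\Lambda}(X|x)=1$ yields $\|Q_\Lambda\|\le 1$, whereas the paper only records that the norm is finite.
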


\begin{proof} Let $(x^{(n)})_{n \in \mathbb{N}}$ be a sequence in $X$ that converges to $x \in X$. It follows from Lemma \ref{lemma:sequence_and_limit_belonging_SFT_equivalence} that there exists $n_0 \in \mathbb{N}$ such that for each $\omega \in \mathcal{A}^{\Lambda}$,
\begin{equation*}
    \mathbbm{1}_{\{\omega x_{\Lambda^{c}}^{(n)} \in X\}} = \mathbbm{1}_{\{\omega x_{\Lambda^{c}} \in X\}}
\end{equation*}
holds for every $n \geq n_0$, and then,
\begin{align*}
    \gamma_\Lambda([\omega]|x^{(n)}) &= \frac{\mathbbm{1}_{\{\omega x^{(n)}_{\Lambda^c} \in X\}}}{\sum_{\eta \in \mathcal{A}^\Lambda} e^{c_f\left(\omega x^{(n)}_{\Lambda^c},\eta x^{(n)}_{\Lambda^c}\right)}\mathbbm{1}_{\{\eta x^{(n)}_{\Lambda^c}\in X\}}} = \frac{\mathbbm{1}_{\{\omega x_{\Lambda^c} \in X\}}}{\sum_{\eta \in \mathcal{A}^\Lambda} e^{c_f\left(\omega x^{(n)}_{\Lambda^c},\eta x^{(n)}_{\Lambda^c}\right)}\mathbbm{1}_{\{\eta x_{\Lambda^c}\in X\}}}.
\end{align*}
It is straightforward to show that Lemma \ref{lemma:sequence_and_limit_belonging_SFT_equivalence} implies that the map $\varphi_{\omega,\eta}$ defined by equation (\ref{eq:geradores}) belongs to $\mathcal{F}(X)$; moreover, Lemma 5.19. from \cite{Kimura} implies that the map $x \mapsto e^{c_f(x,\varphi_{\omega,\eta}(x))}$ is continuous. It follows that
\begin{equation*}
    \lim_{n \to \infty} \gamma_\Lambda([\omega]|x^{(n)}) = \gamma_\Lambda([\omega]|x),
\end{equation*}
thus, $\gamma_\Lambda([\omega]|\cdot)$ is continuous. Given a function $g \in C(X)$, the continuity of $Q_\Lambda(g)$ follows from equation (\ref{eq:kernel}). The linearity of $Q_\Lambda$ is straightforward. Now we prove that $Q_\Lambda$ is bounded. In fact,
\begin{align*}
    \left\|Q_\Lambda(g)\right\| &= \left\| \sum\limits_{\omega \in \mathcal{A}^{\Lambda}} \gamma_{\Lambda}([\omega]|x) \mathbbm{1}_{\{\omega x_{\Lambda^c} \in X\}} g(\omega x_{\Lambda^c}) \right\| \leq \|g\| \sum\limits_{\omega \in \mathcal{A}^{\Lambda}} \left\|\gamma_{\Lambda}([\omega]|\cdot)\right\| <\infty,
\end{align*}
where the last inequality comes from the fact that $\gamma_{\Lambda}([\omega]|\cdot)$ is a continuous and $X$ is compact.
\end{proof}

\subsection{Capocaccia's definition}

In this subsection, we prove that in the case where the subshift $X$ is a STF, Definition \ref{gibbsm} is equivalent to the Gibbs state definition due to Capoccia \cite{cap:76}; therefore, items (a) through (f) from Theorem \ref{main_theorem} are equivalent to the item (g). Let us present another definition of Gibbs states, introduced by D. Capocaccia, that provides us with an understanding of such objects in the general context of compact metrizable spaces where $\zd$ acts by an expansive group of homeomorphisms. In the particular case where $X$ is a subshift and $T$ is the shift action of $\zd$, we show that this definition is deeply connected with the other notions described in section \ref{sec:medgibbs}. 

Let us start by considering a very general setting, where we assume that $X$ is a compact metrizable space and $T$ is an expansive continuous action of $\zd$ on $X$. 
Recall that the Gibbs relation of $(X,T)$ introduced in Example \ref{gr} is defined by

\[\gr(X,T) = \left\{(x,y) \in X \times X : \lim\limits_{\|k\| \to \infty} \rho(T^k x,T^k y) = 0 \right\},\]
where $\rho$ is a metric on $X$ that induces its topology, and the definition of $\gr$ does not depend on the choice of the metric $\rho$.
Translating Capocaccia's terminology into our setting, we say that two points $x$ and $y$ in $X$ are \emph{conjugate} if the pair $(x,y)$ belongs to $\gr$. 
And, if $O$ is an open subset of $X$, then a mapping $\varphi:O \rightarrow X$ is said to be \emph{conjugating} if $\rho(T^kx,T^k\varphi(x))$ tends uniformly to zero on $O$ as $\|k\|$ approaches infinity.
Note that the notion of a conjugating mapping also does not depend on the choice of the metric $\rho$.

Furthermore, Capocaccia assumed that the following condition was satisfied:
for every pair of conjugate points $x,y \in X$ there is an open subset $O$ of $X$ containing $x$ and a conjugating mapping
$\varphi: O \rightarrow X$ that is continuous at $x$ and satisfies $\varphi(x) = y$.
It was proved in \cite{cap:76} that this assumption implies that
for every such mapping $\varphi$ there is an open set $\widetilde{O} \subseteq O$
containing $x$ such that $\varphi$ is a homeomorphism of $\widetilde{O}$ onto $\varphi(\widetilde{O})$. 
Moreover, if $\varphi'$ is a mapping that shares the same
properties as $\varphi$, then $\varphi$ and $\varphi'$ agree on some neighborhood of $x$.

\begin{example}
Note that the assumption above is satisfied in the case where $X$ is a subshift of finite type and $T: i \mapsto \sigma^i$ is the $\zd$-shift action. Indeed, as we commented in Example \ref{gr}, $\gr(X,T)$ coincides with the Gibbs relation $\gr$ from definition \ref{gibbsrel_shift}. Since we have $\gr = \gr^{0}$, then for each pair $(x,y)$ in $\gr$ there is	
an element $\varphi$ of $\mathcal{F}(X)$ such that $y = \varphi(x)$. It is straightforward to show that $\varphi$ is a conjugating mapping.
\end{example}

By using the same notation as we adopted in Section \ref{rn}, for each Borel set $B$ of $X$ we will denote the restriction of a Borel measure $\mu$ on $X$ to the $\sigma$-algebra of Borel subsets of $B$ by $\mu_B$.

\begin{definition}[Capocaccia's definition for a Gibbs state]\label{def:Gibbs_state_Capocaccia}
We say that a family $\mathscr{I} = (R_{(O,\varphi)})$ is a family of multipliers if
\begin{itemize}
\item[(a)] $\mathscr{I}$ is indexed by all pairs $(O,\varphi)$, where $O$ is an open subset of $X$ and $\varphi$ is a conjugating homeomorphism defined on $O$, and 
$R_{(O,\varphi)}$ is a positive continuous function on $O$,
\item[(b)] if $O' \subseteq O$ and $\varphi' = \varphi\restriction_{O'}$, then $R_{(O',\varphi')} = R_{(O,\varphi)}\restriction_{O'}$, and
\item[(c)] if $O \subseteq O'$ and $\varphi'(O) \subseteq O''$, then
\begin{equation*}
R_{(O,\varphi''\circ \varphi'\restriction_{O})} = R_{(O',\varphi')}\restriction_{O}\cdot R_{(O'',\varphi'')}\circ \varphi'\restriction_{O}.
\end{equation*}
\end{itemize}
A Borel probability measure $\mu$ on $X$ is said to be a Gibbs state for the family of multipliers $\mathscr{I}$ if the condition 
\begin{equation}\label{gibbscap}
\varphi_{\ast}\big(R_{(O,\varphi)}\,d\mu_{O}\big) = \mu_{\varphi(O)}
\end{equation}
holds for every pair $(O,\varphi)$.
\end{definition}

\begin{remark}
Observe that expression (\ref{gibbscap}) is well-defined. In fact,
according to Theorem $8.3.7$ from \cite{cohn:13}, the image $\varphi(O)$ of a one-to-one measurable function $\varphi$ from a Borel 
subset $O$ of a Polish space $X$ into another Polish space $Y$, is Borel set of $Y$.  
Thus, for each pair $(O,\varphi)$, since $O$ is a Borel subset of $X$, it follows that the same holds for its image $\varphi(O)$. Furthermore,
it is straightforward to verify that $\mu$ is a Gibbs state for the family $\mathscr{I}$ if and only if for each pair $(O,\varphi)$ the equation
\begin{equation}\label{gibbscapp}
\frac{d(\mu_{\varphi(O)}\circ \varphi)}{d\mu_{O}} = R_{(O,\varphi)}
\end{equation}
holds $\mu$-almost everywhere on $O$.
\end{remark}

\begin{theorem}
Let $X$ be an SFT, and let $f$ be a function in $SV_d(X)$. Then, a Borel probability measure $\mu$ on $X$ is a Gibbs measure for
$f$ if and only if $\mu$ is a Gibbs state for the family of multipliers $\mathscr{I}$ defined by letting
\[R_{(O,\varphi)}(x) = e^{c_f(x,\varphi(x))}\quad \text{at each point $x \in O$,}\]
for all pairs $(O,\varphi)$.
\end{theorem}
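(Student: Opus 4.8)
The plan is to prove the two implications separately, exploiting that $\gr = \gr^0$ for an SFT together with the two characterizations of conformality already at hand: Proposition \ref{boraut} (Radon--Nikodym derivatives along Borel isomorphisms with graph in $\gr$) and Proposition \ref{haaa}/Remark \ref{capeeeta} (conformality tested on a generating group). Before the equivalence itself I would verify that $\mathscr{I}$ really is a family of multipliers in the sense of Definition \ref{def:Gibbs_state_Capocaccia}: positivity of $R_{(O,\varphi)}=e^{c_f(\cdot,\varphi(\cdot))}$ is immediate; continuity reduces to the continuity of $x\mapsto c_f(x,\varphi(x))$ for a conjugating $\varphi$ and $f\in SV_d(X)$; property (b) is trivial; and property (c) is precisely the cocycle identity \eqref{cocycle} applied to $c_f(x,\varphi''(\varphi'(x)))=c_f(x,\varphi'(x))+c_f(\varphi'(x),\varphi''(\varphi'(x)))$.

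For the implication Gibbs measure $\Rightarrow$ Gibbs state I would argue directly from Proposition \ref{boraut}. Fix a pair $(O,\varphi)$; since $\varphi$ is conjugating, $\mathsf{gr}(\varphi)\subseteq\gr(X,T)=\gr$ by Example \ref{gr}, and $\varphi\colon O\to\varphi(O)$ is a Borel isomorphism onto the Borel set $\varphi(O)$. Applying Proposition \ref{boraut} to the Borel isomorphism $\varphi^{-1}\colon\varphi(O)\to O$ (so that $A=\varphi(O)$, $B=O$, and $(\varphi^{-1})_{\ast}\mu_{\varphi(O)}=\mu_{\varphi(O)}\circ\varphi$) gives, for $\mu$-a.e.\ $y\in O$, the identity $\frac{d(\mu_{\varphi(O)}\circ\varphi)}{d\mu_O}(y)=D_{\mu,\gr}(\varphi(y),y)$. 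Conformality gives $D_{\mu,\gr}=e^{-c_f}$ off a $\nu_s$-null set whose $s$-projection is $\mu$-null by Theorem \ref{thm:nur_nus}; since $s(\varphi(y),y)=y$, the equality transfers to $\mu$-a.e.\ $y\in O$. Using the antisymmetry $c_f(\varphi(y),y)=-c_f(y,\varphi(y))$ I then obtain $\frac{d(\mu_{\varphi(O)}\circ\varphi)}{d\mu_O}=e^{c_f(\cdot,\varphi(\cdot))}=R_{(O,\varphi)}$ $\mu$-a.e.\ on $O$, which is exactly the Capocaccia condition \eqref{gibbscapp}.

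For the converse I only need the pairs of the form $(X,\varphi)$ with $\varphi\in\mathcal{F}(X)$: every such $\varphi$ is a conjugating homeomorphism with $\varphi(X)=X$, so $(X,\varphi)$ is admissible. The Gibbs-state condition \eqref{gibbscapp} then reads $\frac{d((\varphi^{-1})_{\ast}\mu)}{d\mu}=e^{c_f(\cdot,\varphi(\cdot))}$; replacing $\varphi$ by $\varphi^{-1}\in\mathcal{F}(X)$ yields $\frac{d\varphi_{\ast}\mu}{d\mu}(x)=e^{c_f(x,\varphi^{-1}(x))}$ for every $\varphi\in\mathcal{F}(X)$. This is precisely the generator condition of Remark \ref{capeeeta}, and since $\mathcal{F}(X)$ generates $\gr^0=\gr$, Proposition \ref{haaa} shows $\mu$ is $(c_f,\gr)$-conformal, i.e.\ a Gibbs measure.

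I expect the main obstacle to be the measure-theoretic bookkeeping in the first implication: identifying $\mu_{\varphi(O)}\circ\varphi$ with a pushforward, confirming via the remark after Definition \ref{def:Gibbs_state_Capocaccia} that $\varphi(O)$ is Borel and $\varphi$ a Borel isomorphism onto it, and above all transferring the a.e.\ identity $D_{\mu,\gr}=e^{-c_f}$ from the counting measures $\nu_r,\nu_s$ on $\gr$ to a $\mu$-a.e.\ statement along $\mathsf{gr}(\varphi)$, for which the null-set description in Theorem \ref{thm:nur_nus} is the essential device. The remaining checks (the multiplier axioms and the continuity of $R_{(O,\varphi)}$) are routine.
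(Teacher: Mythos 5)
Your proposal is correct and follows essentially the same route as the paper's proof: the forward direction is exactly the application of Proposition \ref{boraut} to $\varphi^{-1}\colon\varphi(O)\to O$ combined with $D_{\mu,\gr}=e^{-c_f}$ and the cocycle antisymmetry, and the converse restricts to the pairs $(X,\varphi)$ with $\varphi\in\mathcal{F}(X)$ and invokes Remark \ref{capeeeta} together with $\gr=\gr^0$ for SFTs. Your extra care with the multiplier axioms and the $\nu_s$-null-set-to-$\mu$-null-set transfer only makes explicit what the paper labels as straightforward.
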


\begin{proof}
It is straightforward to check that $\mathscr{I}=(R_{(O,\varphi)})$ is a family of multipliers.
Let $\varphi$ be a conjugating homeomorphism defined on $O$. Since 
$\varphi : O \rightarrow \varphi(O)$ is a Borel isomorphism such that $\mathsf{gr}(\varphi) \subseteq \gr$, it follows from 
Proposition \ref{boraut} that
\[\frac{d\mu_{\varphi(O)}\circ \varphi}{d\mu_O}(x)= D_{\mu,\gr}(\varphi(x),x) = e^{c_f(x,\varphi(x))}\]
holds for $\mu$-almost every point $x$ in $O$.  

Conversely, since each element $\varphi$ of $\mathcal{F}(X)$ is a conjugating homeomorphism, then the equation
\[\frac{d\mu\circ\varphi}{d\mu} = e^{c_{f}(x,\varphi(x))}\]
holds for $\mu$-almost every $x$ in $X$.
It follows from Remark \ref{capeeeta} that $\mu$ is a topological Gibbs measure for $f$, and using the fact that $X$ is a SFT, 
we conclude that $\mu$ is a Gibbs measure for $f$.
\end{proof}

\section{Concluding remarks}

This paper proves the equivalence between different notions of Gibbsianess. We also discuss other notions of Gibbs measures that are also equivalent to some of those proven here, therefore, equivalent to the rest of the notions in most cases, and we provide a list of equivalences combining our results with the parts of the literature that were not connected yet. Recently, L. Borsato and S. MacDonald in \cite{Merda} proved the equivalence between Gibbs measures (Definition \ref{gibbsm}) and DLR measures. For one side, the proof that every Gibbs measure is a DLR measure (Theorem 6 of \cite{Merda}) is essentially the same as the proof of Theorem \ref{DLR} and relies on the cocycle formula to set the Gibbs specification for the context of subshifts over a finite alphabet on countable groups. The proof shown in this paper for a subshift of $\mathcal{A}^{\zd}$ is also present in \cite{Kimura}. However, we highlight their new result, which is the converse of the aforementioned implication where the condition of $X$ being a SFT is dropped. Also recently, using a different proof, C.-E. Pfister in \cite{Carlinhos2022} proved the equivalence between the notions of Gibbs measures and DLR measures. The results of his work are general in two different directions: one concerning a countable group $G$ acting on a compact ultrametric space endowed with a continuous potential whose cocycle is well-defined, which in particular includes $G = \zd$, $X \subseteq \mathcal{A}^{\zd}$, $|\mathcal{A}| < \infty$, and a $d$-summable potential; and another that removes the group structure of the countable set $G$ and indexes a different but finite alphabet for each $g \in G$, where the potential corresponds to a family of continuous functions indexed by $G$ with an analogous cocycle regularity. We do not touch the equivalence between invariant Gibbs measures and equilibrium measures. The reader can find this equivalence in different settings on \cite{BBBB2023, keller:98, haydn:87, HaydnRuelle1992, muir, Muir_paper, ruelle:04}.

Our results open possibilities for further research directions. We highlight the quantum approach of non-locally compact shift spaces, as we explain as follows. In \cite{muir,Muir_paper}, S. Muir introduced a notion of Gibbs measures for $\mathbb{N}^{\zd}$ via Radon-Nikodym associated with the group action of permutations and exponential summable potential. In this setting, his definition agrees with the notion of DLR measure for the infinite alphabet case, and possibly it admits a groupoid approach. The generalization of the thermodynamic formalism (Gibbs, conformal, equilibrium measures, pressure etc) from $\mathbb{N}^{\zd}$ to $\mathbb{N}^{G}$, where $G$ is a countable amenable group, was recently obtained in \cite{BBBB2023}. However, the lack of local compactness of the space turns into a challenge to treat the problem via groupoid theory: the notion of AP equivalence relation \cite{renault:05} requires that the shift space be at least locally compact. Further, there is also the concern about constructing an adequate C$^*$-algebra that is the operator algebra background for these shift spaces. Perhaps, a similar path to the one done in \cite{BEFR2022} could compactify the space and generate bigger equivalence relation than the Gibbs for the shift, and then it would be possible to construct the respective C$^*$-algebra via AP relations or its generalized version \cite{BEFR2018}. Questions concerning the existence of new measures in this possible compactification and phase transition phenomena arise via \cite{BEFR2022}.

\section*{Acknowledgements}
RB is supported by CNPq grants 312294/2018-2 and 408851/2018-0, by FAPESP grant 16/25053-8, and by the University Center of Excellence \textquotedblleft Dynamics, Mathematical Analysis and Artificial Intelligence", at the Nicolaus Copernicus University; BHF-K is supported by Japan Science and Technology Agency grant PJ22180021;  TR is supported by NCN (National Science Center, Poland), Grant 2019/35/D/ST1/01375. We thank O. Karpel and A. Sakai for their hospitality concerning the visit of BHF-K at AGH University of Science and Technology.
\section{Appendix}

\bibliographystyle{alpha}
\bibliography{bibliografia}

\begin{thebibliography}{aHLRS15}

\bibitem[AD97]{ADelaroche:97}
C.~Anantharaman-Delaroche.
\newblock Purely infinite {$C^*$}-algebras arising from dynamical systems.
\newblock {\em Bull. de la Soc. Math. de France}, 125(2):199--225, 1997.

\bibitem[ADR00]{renaultamenable}
C.~Anantharaman-Delaroche and J.~Renault.
\newblock {\em Amenable Groupoids}.
\newblock L'Enseignment math{\'e}matique. L'Enseignement Math{\'e}matique,
  2000.

\bibitem[aHKR15]{aHKR:15}
A.~an~Huef, S.~Kang, and I.~Raeburn.
\newblock Spatial realisations of {KMS} states on the {$C^*$}-algebras of
  higher-rank graphs.
\newblock {\em J. Math. Anal. Appl.}, 427(2):977--1003, 2015.

\bibitem[aHKR17]{aHKR:17}
A.~an~Huef, S.~Kang, and I.~Raeburn.
\newblock K{MS} states on the operator algebras of reducible higher-rank
  graphs.
\newblock {\em Integral Equ. Oper. Theory}, 88(1):91--126, 2017.

\bibitem[aHLRS13]{aHLRS:13}
A.~an~Huef, M.~Laca, I.~Raeburn, and A.~Sims.
\newblock K{MS} states on the {$C^*$}-algebras of finite graphs.
\newblock {\em J. Math. Anal. Appl.}, 405(2):388--399, 2013.

\bibitem[aHLRS14]{aHLRS:14}
A.~an~Huef, M.~Laca, I.~Raeburn, and A.~Sims.
\newblock K{MS} states on {$C^*$}-algebras associated to higher-rank graphs.
\newblock {\em J. Funct. Anal.}, 266(1):265--283, 2014.

\bibitem[aHLRS15]{aHLRS:15}
A.~an~Huef, M.~Laca, I.~Raeburn, and A.~Sims.
\newblock K{MS} states on the {$C^*$}-algebra of a higher-rank graph and
  periodicity in the path space.
\newblock {\em J. Funct. Anal.}, 268(7):1840--1875, 2015.

\bibitem[AN07]{aaronson:07}
J.~Aaronson and H.~Nakada.
\newblock Exchangeable, {G}ibbs and equilibrium measures for {M}arkov
  subshifts.
\newblock {\em Ergod. Theory Dyn. Syst.}, 27(2):321--339, 2007.

\bibitem[BBBB23]{BBBB2023}
E.~R. Beltr\'an, R.~Bissacot, L.~Borsato, and R.~Briceno.
\newblock Thermodynamic formalism for amenable groups and countable state
  spaces. {A}r{X}iv:2302.05557, 2023.

\bibitem[BEFR18]{BEFR2018}
R.~Bissacot, R.~Exel, R.~Frausino, and T.~Raszeja.
\newblock Quasi-invariant measures for generalized approximately proper
  equivalence relations. {A}r{X}iv:1809.02461, 2018.

\bibitem[BEFR22]{BEFR2022}
R.~Bissacot, R.~Exel, R.~Frausino, and T.~Raszeja.
\newblock Thermodynamic formalism for generalized {M}arkov shifts on infinitely
  many states. {A}r{X}iv:1808.00765, 2022.

\bibitem[BK49]{Bogo:49}
N.~N. Bogolyubov and B.~I. Khatset.
\newblock On some mathematical problems of the theory of statistical
  equilibrium.
\newblock {\em Dokl. Akad. Nauk 66}, 321-324. MR 11-40, 1949.

\bibitem[BM21]{Merda}
L.~Borsato and S.~MacDonald.
\newblock Conformal measures and the {D}obrušin-{L}anford-{R}uelle equations.
\newblock {\em Proc. Am. Math. Soc.}, 149:1, 03 2021.

\bibitem[Bow08]{bowen:08}
R.~Bowen.
\newblock {\em Equilibrium States and the Ergodic Theory of {A}nosov
  Diffeomorphisms}.
\newblock Springer-Verlag, Berlin, second revised edition, 2008.

\bibitem[BPK69]{Bogo:69}
N.~N. Bogolyubov, O.~Ya. Petrina, and B.~I. Khatset.
\newblock A mathematical description of the equilibrium state of classical
  systems on the basis of a formalism of a canonical ensemble.
\newblock {\em Teoret. Mat. Fiz.}, 1:251--274, 1969.

\bibitem[BR81]{Bratteli1981vol2}
O.~Bratteli and D.~W. Robinson.
\newblock {\em Operator algebras and quantum statistical mechanics. Vol. 2:
  Equilibrium states. Models in quantum statistical mechanics}.
\newblock Springer Nature, 1981.

\bibitem[BR86]{Bratteli1987vol1}
O.~Bratteli and D.~W. Robinson.
\newblock {\em Operator Algebras and Quantum Statistical Mechanics. Vol. 1:
  C$^{\ast}$- and W$^{\ast}$-algebras, symmetry groups, decomposition of
  states}.
\newblock Springer Nature, 1986.

\bibitem[Bre11]{Brezis2011}
H.~Brezis.
\newblock {\em Functional analysis, {S}obolev spaces and partial differential
  equations}.
\newblock Springer, New York London, 2011.

\bibitem[CaHLS23]{Rafa:23}
L.~O. Clark, A.~an~Huef, R.~P. Lima, and C.~Sehnem.
\newblock Equivalence of two definitions of af groupoid.
\newblock {\em Manuscript in preparation}, 2023.

\bibitem[Cap76]{cap:76}
D.~Capocaccia.
\newblock {A} definition of {G}ibbs state for a compact set with
  $\mathbb{Z}^\nu$ action.
\newblock {\em Comm. Math.Phys.}, 48:85--88, 1976.

\bibitem[Chr23]{christensen2023structure}
J.~Christensen.
\newblock The structure of {K}{M}{S} weights on {\'e}tale groupoid
  {C}*-algebras.
\newblock {\em J. of Noncommut. Geom.}, 17(2):663–691, 2023.

\bibitem[CK80]{CK1980}
J.~Cuntz and Wolfgang K.
\newblock A class of {C}*-algebras and topological {Markov} chains.
\newblock {\em Invent. {M}ath.}, 56(3):251--268, Oct 1980.

\bibitem[CL17]{Cio:14}
L.~Cioletti and A.~O. Lopes.
\newblock Interactions, specifications, dlr probabilities and the ruelle
  operator in the one-dimensional lattice.
\newblock {\em Discrete Contin. Dyn. Syst.}, 37(12):6139--6152, 2017.

\bibitem[Coh13]{cohn:13}
D.~Cohn.
\newblock {\em Measure Theory}.
\newblock Birkh\"auser, second edition, 2013.

\bibitem[CT21]{ChristensenThomsen:21}
J.~Christensen and K.~Thomsen.
\newblock K{MS} states on crossed products by abelian groups.
\newblock {\em Math. Scand.}, 127(3):488--508, 2021.

\bibitem[CT22]{ChristensenThomsen:22}
J.~Christensen and K.~Thomsen.
\newblock K{MS} states on the crossed product {$C^*$}-algebra of a
  homeomorphism.
\newblock {\em Ergod. Theory Dyn. Syst.}, 42(4):1373--1414, 2022.

\bibitem[Cun77]{Cuntz1977}
J.~Cuntz.
\newblock Simple {C}*-algebra generated by isometries.
\newblock {\em Commun. {M}ath. {P}hys.}, 57(2):173--185, June 1977.

\bibitem[CV22]{ChristensenVaes:22}
J.~Christensen and S.~Vaes.
\newblock K{MS} spectra for group actions on compact spaces.
\newblock {\em Comm. Math. Phys.}, 390(3):1341--1367, 2022.

\bibitem[Dea95]{Deaconu:95}
V.~Deaconu.
\newblock Groupoids associated with endomorphisms.
\newblock {\em Trans. Amer. Math. Soc.}, 347(5):1779--1786, 1995.

\bibitem[Dob68]{dob:68}
R.~L. Dobrushin.
\newblock Gibbsian random fields for lattice systems with pairwise
  interactions.
\newblock {\em Funkcional. Anal. i Prilozen.}, 2(4):31--43, 1968.

\bibitem[DU91]{DU91}
M.~Denker and M.~Urbański.
\newblock On the existence of conformal measures.
\newblock {\em Trans. {A}m. {M}ath. {S}oc.}, 328(2):563--587, 1991.

\bibitem[EFW84]{EFW:1984}
M.~Enomoto, M.~Fujii, and Y.~Watatani.
\newblock K{MS} states for gauge action on {$O\sb{A}$}.
\newblock {\em Math. Japon.}, 29(4):607--619, 1984.

\bibitem[EL99]{EL1999}
R.~Exel and M.~Laca.
\newblock Cuntz-krieger algebras for infinite matrices.
\newblock {\em J. Reine Angew. Math.}, 512:119--172, 1999.

\bibitem[EL03]{EL2003}
R.~Exel and M.~Laca.
\newblock Partial dynamical systems and the {KMS} condition.
\newblock {\em Commun. Math. Phys.}, 232:223--277, 2003.

\bibitem[EP22]{exelpitts:22}
R.~Exel and D.~R. Pitts.
\newblock {\em Characterizing Groupoid C$^{\ast}$-algebras of Non-Hausdorff
  {\'E}tale Groupoids}.
\newblock Lecture Notes in Mathematics. Springer International Publishing,
  2022.

\bibitem[Eva80]{Evans:80}
D.~E. Evans.
\newblock On {$O\sb{n}$}.
\newblock {\em Publ. Res. Inst. Math. Sci.}, 16(3):915--927, 1980.

\bibitem[Exe04]{Exel:04}
R.~Exel.
\newblock K{MS} states for generalized gauge actions on {C}untz-{K}rieger
  algebras (an application of the {R}uelle-{P}erron-{F}robenius theorem).
\newblock {\em Bull. Braz. Math. Soc. (N.S.)}, 35(1):1--12, 2004.

\bibitem[FGLP21]{FGLP:21}
C.~Farsi, E.~Gillaspy, N.~S. Larsen, and J.~A. Packer.
\newblock Generalized gauge actions on {$k$}-graph {$C^*$}-algebras: {KMS}
  states and {H}ausdorff structure.
\newblock {\em Indiana Univ. Math. J.}, 70(2):669--709, 2021.

\bibitem[FHKP22]{FHKP:22}
C.~Farsi, L.~Huang, A.~Kumjian, and J.~Packer.
\newblock Cocycles on groupoids arising from {$\mathbb{N}^k$}-actions.
\newblock {\em Ergod. Theory Dyn. Syst.}, 42(11):3325--3356, 2022.

\bibitem[FKPS19]{Farsi:19}
C.~Farsi, A.~Kumjian, D.~Pask, and A.~Sims.
\newblock Ample groupoids: Equivalence, homology, and {M}atui's {HK}
  conjecture.
\newblock {\em M\"{u}nster J. of Math.}, 12:411--451, 2019.

\bibitem[FM77]{feldman:77}
J.~Feldman and C.~C. Moore.
\newblock Ergodic equivalence relations, cohomology, and von {N}eumann
  algebras. {I}.
\newblock {\em Trans. Amer. Math. Soc.}, 234(2):289--324, 1977.

\bibitem[FV17]{velenik2017}
S.~Friedli and Y.~Velenik.
\newblock {\em Statistical Mechanics of Lattice Systems: A Concrete
  Mathematical Introduction}.
\newblock Cambridge University Press, 2017.

\bibitem[Geo11]{georgii:11}
H.~Georgii.
\newblock {\em {G}ibbs Measures and Phase Transitions}.
\newblock De Gruyter Studies in Mathematics 9. De Gruyter, second edition,
  2011.

\bibitem[Hay87]{haydn:87}
N.~T.~A. Haydn.
\newblock On {G}ibbs and equilibrium states.
\newblock {\em Ergod. Theory Dyn. Syst.}, 7:119--132, 1987.

\bibitem[HHW67]{HaagHugWin1967}
R.~Haag, N.~M. Hugenholtz, and M.~Winnink.
\newblock On the equilibrium states in quantum statistical mechanics.
\newblock {\em Commun. Math. Phys.}, 5(3):215--236, 1967.

\bibitem[HR92]{HaydnRuelle1992}
N.~T.~A. Haydn and D.~Ruelle.
\newblock Equivalence of {G}ibbs and equilibrium states for homeomorphisms
  satisfying expansiveness and specification.
\newblock {\em Commun. Math. Phys.}, 148(1):155--167, August 1992.

\bibitem[IR69]{lanfordruelle:69}
O.~E.~Lanford III and D.~Ruelle.
\newblock Observables at {I}nfinity and {S}tates with {S}hort {R}ange
  {C}orrelations in {S}tatistical {M}echanics.
\newblock {\em Comm. Math.Phys.}, 13:194--215, 1969.

\bibitem[Kel98]{keller:98}
G.~Keller.
\newblock {\em Equilibrium States in Ergodic Theory}.
\newblock London Mathematical Society Student Texts $42$. Cambridge University
  Press, 1998.

\bibitem[Kim15]{Kimura}
B.H.F. Kimura.
\newblock Gibbs measures on subshifts.
\newblock Master's thesis, Instituto de Matemática e Estatística, University
  of São Paulo, São Paulo, 2015.

\bibitem[KP00]{KumjianPask:00}
A.~Kumjian and D.~Pask.
\newblock Higher rank graph {$C^\ast$}-algebras.
\newblock {\em New York J. Math.}, 6:1--20, 2000.

\bibitem[KPRR97]{KPRR1997}
A.~Kumjian, D.~Pask, I.~Raeburn, and J.~Renault.
\newblock Graphs, groupoids, and {C}untz-{K}rieger algebras.
\newblock {\em J. Funct. Anal.}, 144(2):505--541, 1997.

\bibitem[KSS07]{KessStadStrat2007}
M.~Kesseb\"{o}hmer, M.~Stadlbauer, and B.~O. Stratmann.
\newblock Lyapunov spectra for {KMS} states on cuntz-krieger algebras.
\newblock {\em Math. Zeitschrift}, 256(4):871--893, March 2007.

\bibitem[Kub57]{Kubo1957}
R.~Kubo.
\newblock Statistical-mechanical theory of irreversible processes. {I}. general
  theory and simple applications to magnetic and conduction problems.
\newblock {\em J. Phys. Soc. Japan}, 12(6):570--586, 1957.

\bibitem[Lim19]{Lima2019}
R.~P. Lima.
\newblock {Characterization of extremal KMS states on groupoid
  C$^{\ast}$-algebras}.
\newblock Master's thesis, Instituto de Matemática e Estatística, University
  of São Paulo, São Paulo, 2019.

\bibitem[LY19]{LiYang:19}
H.~Li and D.~Yang.
\newblock K{MS} states of self-similar {$k$}-graph {$\rm C^*$}-algebras.
\newblock {\em J. Funct. Anal.}, 276(12):3795--3831, 2019.

\bibitem[Mey13]{meyerovitch:13}
T.~Meyerovitch.
\newblock {G}ibbs and equilibrium measures for some families of subshifts.
\newblock {\em Ergod. Theory Dyn. Syst.}, 33:934--953, 2013.

\bibitem[Min67]{minlos:67}
R.~A. Minlos.
\newblock The {G}ibbs limit distribution.
\newblock {\em Funktsional. Anal, i Prilozhen.}, 1:40--54, 1967.

\bibitem[MS59]{MartinSchwinger1959}
P.~C. Martin and J.~Schwinger.
\newblock Theory of many-particle systems. i.
\newblock {\em Phys. Rev.}, 115(6):1342, 1959.

\bibitem[Mui11a]{muir}
S.~Muir.
\newblock {\em {G}ibbs/equilibrium measures for functions of multidimensional
  shifts with countable alphabets}.
\newblock dissertation, University of North Texas, 2011.

\bibitem[Mui11b]{Muir_paper}
S.~Muir.
\newblock A new characterization of {G}ibbs measures on {$\mathbb{N}^{\zd}$}.
\newblock {\em Nonlinearity}, 24(10):2933--2952, sep 2011.

\bibitem[Mur90]{Murphy:90}
G.~J. Murphy.
\newblock {\em C$^{\ast}$-Algebras and Operator Theory}.
\newblock Elsevier Science, 1990.

\bibitem[Nes13]{Neshveyev:2013}
S.~Neshveyev.
\newblock K{MS} states on the {$C^\ast$}-algebras of non-principal groupoids.
\newblock {\em J. Oper. Theory}, 70(2):513--530, 2013.

\bibitem[Ny08]{ny:08}
A.~Le Ny.
\newblock {\em Introduction to (generalized) Gibbs Measures}, volume~15 of {\em
  Ens\'aios Matem\'aticos}.
\newblock SBM, 2008.

\bibitem[OP78]{OlesenPedersen:1978}
D.~Olesen and G.~K. Pedersen.
\newblock Some {$C\sp{\ast} $}-dynamical systems with a single {KMS} state.
\newblock {\em Math. Scand.}, 42(1):111--118, 1978.

\bibitem[Pfi22]{Carlinhos2022}
C.~E. Pfister.
\newblock Gibbs measures on compact ultrametric spaces. {A}r{X}iv:2202.06802,
  2022.

\bibitem[PS97]{PetSch:97}
K.~Petersen and K.~Schmidt.
\newblock Symmetric {G}ibbs measures.
\newblock {\em Trans. Amer. Math. Soc.}, 349:2775--2811, 1997.

\bibitem[Ren80]{Renault:80}
J.~Renault.
\newblock {\em A groupoid approach to C$^{\ast}$-algebras}, volume 793 of {\em
  Lecture Notes in Mathematics}.
\newblock Springer-Verlag, Berlin, Germany, 1980 edition, apr 1980.

\bibitem[Ren00]{Renault1999}
J.~Renault.
\newblock Cuntz-like {Algebras}.
\newblock In {\em Operator {Theoretical} {Methods} ({Timisoara, 1998})}, pages
  371--386. Theta {Foundation}, 2000.

\bibitem[Ren05]{renault:05}
J.~Renault.
\newblock The {R}adon{\textendash}{N}ikodym problem for approximately proper
  equivalence relations.
\newblock {\em Ergod. Theory Dyn. Syst.}, 25(05):1643, August 2005.

\bibitem[Ren09]{renault2009}
J.~Renault.
\newblock {\em C$^{\ast}$-algebras and Dynamical Systems}.
\newblock Publica{\c{c}}{\~o}es matem{\'a}ticas. IMPA, 2009.

\bibitem[RFM11]{Fer}
S.~Gallo R.~Fernandéz and G.~Maillard.
\newblock Regular {$G$}-measures are not always {G}ibbsian.
\newblock {\em Elect. Comm. in Probab.}, 16, 2011.

\bibitem[Rue04]{ruelle:04}
D.~Ruelle.
\newblock {\em {T}hermodynamic {F}ormalism}.
\newblock Cambridge University Press, Cambridge, second edition, 2004.

\bibitem[Sar09]{sarig:09}
O.~Sarig.
\newblock {L}ecture {N}otes on {T}hermodynamic {F}ormalism for {T}opological
  {M}arkov {S}hifts, 2009.

\bibitem[Sar15]{sarig:15}
O.~M. Sarig.
\newblock Thermodynamic formalism for countable {M}arkov shifts.
\newblock {\em Proc. of Symposia in Pure Math.}, 89:81--117, 2015.

\bibitem[Sch97]{schmidt:97}
K.~Schmidt.
\newblock Invariant cocycles, random tilings and the super-{K} and strong
  {M}arkov properties.
\newblock {\em Trans. Amer. Math. Soc.}, 349:2812--2825, 1997.

\bibitem[Sin72]{sinai:72}
Ya.~G. Sinai.
\newblock Gibbs measures in ergodic theory.
\newblock {\em Uspehi Mat. Nauk}, 27, 1972.

\bibitem[Sri98]{sri:98}
S.~M. Srivastava.
\newblock {\em A Course on Borel Sets}.
\newblock Springer-Verlag, New York, 1998.

\bibitem[SSWP20]{SSW2020}
A.~Sims, G.~Szab{\'o}, D.~Williams, and F.~Perera.
\newblock {\em Operator Algebras and Dynamics: Groupoids, Crossed Products, and
  Rokhlin Dimension}.
\newblock Advanced Courses in Mathematics - CRM Barcelona. Springer
  International Publishing, 2020.

\bibitem[Tho12]{Thomsen:12}
K.~Thomsen.
\newblock K{MS} states and conformal measures.
\newblock {\em Comm. Math. Phys.}, 316(3):615--640, 2012.

\bibitem[Tho16]{Thomsen2016}
K.~Thomsen.
\newblock Phase transition in {$O_2$}.
\newblock {\em Comm. Math. Phys.}, 349(2):481--492, September 2016.

\end{thebibliography}

\end{document}